\documentclass[final,hidelinks,onefignum,onetabnum]{siamart251216}

\usepackage{graphicx}
\ifpdf
  \DeclareGraphicsExtensions{.pdf,.png,.jpg,.eps}
\else
  \DeclareGraphicsExtensions{.eps}
\fi
\usepackage{amssymb,amsfonts}
\usepackage{physics}

\usepackage{tikz}
\usetikzlibrary{arrows.meta, positioning, decorations.markings, calc}
\usepackage{booktabs}
\usepackage{multirow}

\usepackage{enumitem}
\usepackage{tabularx}
\setlist[itemize,2]{
    leftmargin=0pt,
    labelsep=0.5em,
    itemindent=1.5em
}

\usepackage{etoolbox}

\AtBeginEnvironment{flalign}{\setlength{\abovedisplayskip}{1ex}\setlength{\belowdisplayskip}{1ex}}

\usepackage{algpseudocode}
\usepackage{adjustbox}

\newcommand{\D}{\mathrm{d}}

\newsiamremark{remark}{Remark}
\newsiamremark{hypothesis}{Hypothesis}
\crefname{hypothesis}{Hypothesis}{Hypotheses}
\newsiamthm{claim}{Claim}
\newsiamremark{fact}{Fact}
\crefname{fact}{Fact}{Facts}
\newsiamremark{example}{Example}
\newsiamthm{assumption}{Assumption}
\crefname{assumption}{Assumption}{Assumptions}
\crefname{example}{Example}{Examples}

\headers{General Numerical Solver for NMQSD}{Z. Cai and Q. Zhu}

\title{A General First- and Second-Order Numerical Solver for Non-Markovian Quantum State Diffusion}
  
\author{
Zhenning Cai
\and
Quanhui Zhu\thanks{
Department of Mathematics, National University of Singapore, Singapore
(\email{matcz@nus.edu.sg}, \email{zhuqh@nus.edu.sg}).
}
}

\usepackage{amsopn}

\ifpdf
\hypersetup{
  pdftitle={A General First- and Second-Order Numerical Solver for NMQSD},
  pdfauthor={Z. Cai and Q. Zhu}
}
\fi

\begin{document}

\maketitle

\begin{abstract}
The numerical simulation of non-Markovian open quantum systems based on the
non-Markovian quantum state diffusion (NMQSD) equation is complicated by
functional derivatives with respect to the stochastic process.
A general numerical framework that directly treats these functional derivatives without relying on prescribed decompositions of the bath correlation function is
still lacking.
In this work, we derive an analytical solution of the linear NMQSD equation that
reveals three elementary structures of the non-Markovian stochastic dynamics:
stochastic propagation, functional-derivative insertion, and memory pairing.
Based on this structure, we construct a general auxiliary-state framework
for arbitrary bath correlation functions.
The framework separates the numerical construction into time discretization,
memory quadrature, and hierarchy truncation.
We then construct first- and second-order schemes and provide
diagrammatic transition rules for their explicit implementation.
Numerical results verify the expected temporal accuracy and demonstrate the
applicability of the proposed methods to different bath correlation functions
and multi-level quantum systems.
\end{abstract}

\begin{keywords}
non-Markovian quantum state diffusion, stochastic Schrödinger equation, open quantum systems, 
 diagrammatic representation.
\end{keywords}

\begin{MSCcodes}
81-08, 81S22
\end{MSCcodes}

\section{Introduction}
\label{sec:intro}
Open quantum system dynamics provides a framework for describing
the interaction between a quantum system and its environment \cite{breuer_nonmarkoviandynamicsphysical_2007,weiss_quantumdissipativesystems_2012, lindblad_generatorsquantumdynamical_1976, gorini_completelypositivedynamical_1976}. 
In regimes involving strong system--bath coupling, low temperatures, 
or structured environments, non-Markovian effects arise, 
with memory kernels and information backflow leading to intrinsically time-nonlocal dynamics \cite{leggett_dynamicsdissipativetwostate_1987, breuer_measuredegreenonmarkovian_2009, rivas_quantumnonmarkovianitycharacterization_2014, devega_dynamicsnonmarkovianopen_2017}.

Numerically exact approaches for non-Markovian dynamics include hierarchical density-matrix methods and path-integral formulations, such as 
hierarchical equations of motion,
quasi-adiabatic path integral methods, and tensor-network-based schemes \cite{tanimura_numericallyexactapproach_2020, makri_numericalpathintegral_1995, strathearn_efficientnonmarkovianquantum_2018, cerrillo_nonmarkoviandynamicalmaps_2014, pollock_operationalmarkovcondition_2018}. 
Although highly accurate, these methods typically suffer 
from rapidly increasing computational costs with memory time, 
bath complexity, or system dimension.
Recent developments aim to mitigate these limitations through more efficient treatments of influence functionals and improved discretization strategies \cite{cai_fastalgorithmsbath_2022, cai_inchwormmontecarlo_2020, cai_secondorderdiscretizationdyson_2025}.

An appealing alternative is provided by stochastic Schrödinger equations, 
which represent the reduced density matrix as an ensemble average over
stochastic pure-state trajectories.
Among such wave function approaches, the non-Markovian quantum state diffusion
(NMQSD) formalism provides a stochastic description of open quantum systems \cite{strunz_linearquantumstate_1996,diosi_nonmarkovianstochasticschrodinger_1997, diosi_nonmarkovianquantumstate_1998a, strunz_brownianmotionstochastic_2001, gaspard_nonmarkovianstochasticschrodinger_1999, strunz_opensystemdynamics_1999}.

The standard form of the linear NMQSD equation is given by \cite{diosi_nonmarkovianquantumstate_1998a}:
\begin{equation}
    \frac{\D}{\D t} \psi_t = -\mathrm{i} H \psi_t + L \psi_t z_t^* - L^\dagger \int_0^t \alpha(t,s) \frac{\delta\psi_t}{\delta z_s^*} \D s,
    \label{eq:NMQSD_linear}
\end{equation}
where $H$ is the system Hamiltonian, $L$ is the system--environment coupling operator, $\alpha(t,s)$ is the bath correlation function, and $z_t^*$ is a complex Gaussian stochastic process satisfying 
\begin{equation}
    \mathbb{E}[z_t]=0,\qquad
    \mathbb{E}[z_tz_s]=0,\qquad
    \mathbb{E}[z_tz_s^*]=\alpha(t,s).
\end{equation}
The wave function $\psi_t$ depends on the history of the stochastic process $Z = \{z_s: 0 \le s \le t\}$, and the ensemble average over trajectories recovers the reduced density matrix $\rho_t = \mathbb{E}[|\psi_t\rangle\langle\psi_t|]$.

A central challenge in~\eqref{eq:NMQSD_linear} lies in the functional derivative
$\delta\psi_t/\delta z_s^*$, which encodes the dependence of the wave function on
the past stochastic process.
Since this history is indexed by the continuous time variable $s\in[0,t]$, the
functional derivative represents a family of variations along the stochastic
trajectory. This dependence on a continuum of stochastic variables is
intrinsically infinite-dimensional, which is the main difficulty in constructing
general numerical solvers.

To address this challenge, several techniques have been developed. A prominent example is the hierarchy of pure states (HOPS), where the functional derivative is replaced by  a closed hierarchy of auxiliary wave functions when the bath correlation function can
be decomposed into sums of exponentials
\cite{suess_hierarchystochasticpure_2014, hartmann_exactopenquantum_2017, mascherpa_optimizedauxiliaryoscillators_2020}.
Related developments include tensor-network representations for many-body
systems, stochastic sampling methods, and variational techniques \cite{gao_nonmarkovianstochasticschrodinger_2022,stockburger_exact$mathitc$numberrepresentation_2002, yu_nonmarkovianquantumtrajectories_2004}. 
Machine learning approaches have also been explored for stochastic quantum dynamics \cite{carleo_solvingquantummanybody_2017, zhang_artificialintelligencebasedsurrogatesolution_2024, zhang_neuralnetworksolution_2025}.

However, a general numerical framework for NMQSD remains lacking.
Many existing auxiliary-state formulations rely on specific decompositions of the bath
correlation function and are therefore tied to particular classes of memory kernels.
For a general correlation function $\alpha(t,s)$, it remains unclear how to represent
the functional derivative $\delta\psi_t/\delta z_s^*$ in a closed and discretizable
hierarchy.
These limitations motivate a numerical framework that directly
discretizes the functional-derivative structure of NMQSD and supports systematically
accurate time discretizations for general bath correlations.

In this work, we derive a Dyson-type analytical solution of the linear NMQSD
equation. This solution reveals three elementary structures of the non-Markovian stochastic
dynamics: stochastic propagation, functional-derivative insertion, and memory
pairing.
Building on this structure, we develop a general auxiliary-state framework that
directly discretizes the functional derivative and yields first- and second-order
schemes.

The main contributions of this work are summarized as follows:
\begin{itemize}
\item \textbf{Analytical solution of NMQSD.}
We derive an analytical solution of the linear NMQSD equation and show that it
reconstructs the original stochastic equation.
The resulting representation reorganizes the non-Markovian dynamics in terms of
stochastic propagation, functional-derivative insertion, and bath-induced memory
pairing.

\item \textbf{General numerical framework.}
We formulate a general auxiliary-state framework for arbitrary bath correlation
functions by introducing auxiliary states generated by functional derivatives
with respect to the stochastic variables. For these auxiliary states, we derive
the evolution equations and insertion-time initial conditions.
The numerical construction is then organized by time discretization, memory
quadrature, and hierarchy truncation, leading to first- and second-order schemes.

\item \textbf{Diagrammatic interpretation.}
We introduce a diagrammatic representation that organizes the auxiliary-state
dynamics through propagation, insertion, and pairing transitions.
The diagrams express the first- and second-order schemes in explicit update
rules.

\end{itemize}

The paper is organized as follows.
Section~\ref{sec:analytical_solution} derives the analytical solution and
introduces its diagrammatic representation.
The auxiliary-state framework and the first- and second-order numerical
schemes are developed in Section~\ref{sec:methods}, while truncation and
computational complexity are discussed in
Section~\ref{sec:truncation_complexity}.
Numerical experiments are presented in Section~\ref{sec:results}.
Finally, Section~\ref{sec:conclusion} gives concluding remarks and discusses
future work.
The proofs of the main results and fully expanded examples of the hierarchical
updates are provided in the supplementary material.

\section{Analytical Solution}
 \label{sec:analytical_solution}

To simplify the subsequent analysis and numerical construction,
we pass to the interaction picture \cite{sakurai_modernquantummechanics_2020} via the transformation
\begin{equation}
    \tilde{\psi}_t := e^{iHt}\psi_t.
\end{equation}
Substituting this into~\eqref{eq:NMQSD_linear} eliminates
the drift term $-iH\psi_t$.
The coupling operator $L$ is correspondingly transformed into a time-dependent operator
$L(t) = e^{iHt} L e^{-iHt}$,
and the NMQSD equation in the interaction picture becomes
\begin{equation}
    \frac{\D}{\D t} \tilde{\psi}_t = z_t^* L(t) \tilde{\psi}_t - L^\dagger(t) \int_0^t \alpha(t,s) \frac{\delta\tilde{\psi}_t}{\delta z_s^*} \D s.
    \label{eq:NMQSD_interaction}
\end{equation}
For notational simplicity, we drop the tilde and write $\psi_t$ for the interaction-picture state in the following.
Unless otherwise specified, all quantities are understood in the interaction picture.

\subsection{Dyson-Type Expansion}
We first construct an exact analytical representation of the solution
to the linear NMQSD equation in the interaction picture:
\begin{equation}
    \frac{\D}{\D t} \psi_t = z_t^*L(t)\psi_t - L^\dagger(t) \int_0^t \alpha(t,s) \frac{\delta\psi_t}{\delta z_s^*} \D s.
    \label{eq:NMQSD_reduced}
\end{equation}
The solution can be written as a Dyson-type time-ordered expansion
\cite{breuer_theoryopenquantum_2002}. The following theorem gives its explicit form.

\begin{theorem}
\label{thm:solution}
The solution of~\eqref{eq:NMQSD_reduced} admits the expansion
\begin{equation}
    \psi_t
    =
    \sum_{n=0}^{\infty}
    \sum_{m=0}^{\infty}
    \psi_t^{(n,m)},
    \label{eq:solution}
\end{equation}
where \(\psi_t^{(0,0)}=\psi_0\) is the initial state, and for \((n,m)\neq(0,0)\),
\begin{equation}
\begin{aligned}
    \psi_t^{(n,m)}
    &=
    (-1)^m
    \int_{\Delta_n(t)}
    \D^n\boldsymbol{\tau}
    \int_{\Delta_{2m}(t)}
    f(\boldsymbol{\tau},\boldsymbol{s})
    \D^{2m}\boldsymbol{s},
    \\
    f(\boldsymbol{\tau},\boldsymbol{s})
    &=
    \left(\prod_{k=1}^n z_{\tau_k}^*\right)
    \sum_{P\in\mathcal{P}_{2m}}
    \mathcal{L}_P(s_1,\ldots,s_{2m})
    \\
    &\quad\times
    \mathcal{T}
    \left[
    \prod_{k=1}^n L(\tau_k)
    \prod_{j=1}^m
    L^\dagger(s_{P(2j)})L(s_{P(2j-1)})
    \right]\psi_0 .
\end{aligned}
\label{eq:psi_nm}
\end{equation}
The operator \(\mathcal T\) denotes time ordering over all system operators
according to their time arguments.
Moreover, the components satisfy
\begin{equation}
    \frac{\D}{\D t} \psi_t^{(n,m)}
    =
    z_t^* L(t)\psi_t^{(n-1,m)}
    -
    L^\dagger(t)
    \int_0^t
    \alpha(t,s)
    \frac{\delta \psi_t^{(n+1,m-1)}}{\delta z_s^*}
    \D s ,
    \label{eq:component_evolution}
\end{equation}
with the convention that \(\psi_t^{(n,m)}=0\) whenever \(n<0\) or \(m<0\).
Consequently, summing~\eqref{eq:component_evolution} over all
\((n,m)\) reconstructs~\eqref{eq:NMQSD_reduced}.
\end{theorem}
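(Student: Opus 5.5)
The symbols $\mathcal{P}_{2m}$ and $\mathcal{L}_P$ are not defined in the statement. I read $\mathcal{P}_{2m}$ as the set of pairings of $2m$ time-ordered points $s_1>\dots>s_{2m}$ into $m$ ordered pairs $(s_{P(2j-1)},s_{P(2j)})$ with $s_{P(2j-1)}<s_{P(2j)}$. I read $\mathcal{L}_P(s_1,\ldots,s_{2m})=\prod_{j}\alpha(s_{P(2j)},s_{P(2j-1)})$ as the corresponding product of bath correlations, i.e.\ a Wick-type weight. The plan is to verify \eqref{eq:component_evolution} directly from \eqref{eq:psi_nm} by differentiating in $t$. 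Then summing over $(n,m)$ and using linearity of $\delta/\delta z_s^*$ yields \eqref{eq:NMQSD_reduced} with initial value $\psi_0$. By uniqueness of the solution of the linear equation, this identifies the series as the solution. Convergence is handled by a Dyson-type bound. With $\|L(t)\|\le\|L\|$, $|\alpha|\le A$, and $\sup|z^*|\le Z$ on $[0,t]$, we get $\|\psi^{(n,m)}_t\|\lesssim (Z\|L\|t)^n/n!\cdot (2m-1)!!(A\|L\|^2t^2)^m/(2m)!$. This is summable, which justifies termwise differentiation.

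\textbf{Step 1 (time derivative).} All integration variables lie in simplices $\Delta_n(t)$ and $\Delta_{2m}(t)$, and because of $\mathcal T$ the integrand is symmetric in the appropriate sense. Differentiating the upper limit $t$ therefore produces boundary terms in which exactly one variable equals $t$, namely the largest time. Since that time is latest, it sits leftmost after time ordering.

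There are two cases. If the largest time is some $\tau_k=t$, the operator $z_t^*L(t)$ is pulled out to the left. What remains is exactly $\psi_t^{(n-1,m)}$, which gives the first term of \eqref{eq:component_evolution}.

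If the largest time is some $s_i=t$, it must be the later member $s_{P(2j)}$ of its pair. The prefactor $L^\dagger(t)\alpha(t,s)$ is pulled out, where $s$ is the partner time. We are left with an integral over $n$ $\tau$'s, $2m-2$ paired $s$'s, and one unpaired time $s$ carrying $L(s)$ at its time-ordered position, with a sign $(-1)^m$.

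\textbf{Step 2 (functional-derivative identification).} Compute $\delta\psi_t^{(n+1,m-1)}/\delta z_s^*$ from \eqref{eq:psi_nm}. The only $z^*$-dependence is through $\prod_k z^*_{\tau_k}$, so the derivative sets one $\tau_k=s$ in each of the $n+1$ positions. This removes one $\tau$-integration and leaves $L(s)$ inside $\mathcal T$. By symmetry of the time-ordered integrand, the $n+1$ choices combine into a single unordered insertion. The result is exactly the bracket obtained in Step~1, with sign $(-1)^{m-1}$. The minus sign in \eqref{eq:component_evolution} supplies the missing $(-1)$, and the integral $\int_0^t \alpha(t,s)\,\cdot\,\D s$ restores the partner variable.

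\textbf{Main obstacle: combinatorics.} The hard part is matching the two sides exactly. First, the factor $n+1$ arising from differentiating $\prod z^*$ must cancel against ordered-versus-unordered simplex volumes. Second, the sum over pairings $P\in\mathcal P_{2m}$ in which $t$ pairs with a given $s$ must correspond bijectively to pairings in $\mathcal P_{2m-2}$ times a choice of the insertion position of $s$.

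I would handle this by first rewriting \eqref{eq:psi_nm} as integrals over unordered cubes with $1/n!$ and $1/(2m)!$ factors, using $\mathcal T$. On cubes the derivative bookkeeping is mechanical: $\frac{d}{dt}\int_{[0,t]^k}$ gives $k$ boundary terms, and $\delta/\delta z_s^*$ gives $n+1$ terms. Both kinds of factor then cancel against the factorials, leaving only the pairing bijection to verify.
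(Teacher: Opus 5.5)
Your proposal is correct and follows essentially the same route as the paper. Both differentiate each component, identify the boundary term with $\tau_k=t$ as $z_t^*L(t)\psi_t^{(n-1,m)}$, split the boundary term with $s_{2m}=t$ according to the partner of $t$, and match it with $-L^\dagger(t)\int_0^t\alpha(t,s)\,\delta\psi_t^{(n+1,m-1)}/\delta z_s^*\,\D s$ by reinserting the freed variable into the pairing simplex. The differences are minor: the paper symmetrizes only the $\tau$-variables over a cube and handles the $s$-variables by splitting $[0,t]\times\Delta_{2m-2}(t)$ into $2m-1$ ordered regions instead of using a full $1/(2m)!$ cube, and it includes neither your Dyson-type convergence bound nor your appeal to uniqueness (which is not automatic for this functional-differential hierarchy, and which the paper's verification does not use).
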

The simplex is defined by
\(
    \Delta_k(t)
    =
    \{(r_1,\ldots,r_k):0<r_1<\cdots<r_k<t\}.
\)
In the expansion, \(\Delta_n(t)\) and \(\Delta_{2m}(t)\) correspond to the
variables \(\boldsymbol{\tau}\) and \(\boldsymbol{s}\), respectively, with
\[
    \D^n\boldsymbol{\tau}
    =
    \D\tau_1\cdots\D\tau_n,
    \qquad
    \D^{2m}\boldsymbol{s}
    =
    \D s_1\cdots\D s_{2m}.
\]
The collection \(\mathcal P_{2m}\) denotes the set of all pairings of
\(\{1,\ldots,2m\}\). For \(P\in\mathcal P_{2m}\), we label its \(m\)
pairs as
\(
    (P(2j-1),P(2j)),\  j=1,\ldots,m,
\)
with \(P(2j-1)<P(2j)\), so that \(s_{P(2j-1)}\) is earlier than
\(s_{P(2j)}\) on the simplex \(\Delta_{2m}(t)\).
The associated bath-correlation factor is
\[
    \mathcal L_P(s_1,\ldots,s_{2m})
    =
    \prod_{j=1}^m
    \alpha(s_{P(2j)},s_{P(2j-1)}).
\]

The proof of Theorem~\ref{thm:solution} is obtained by taking the time derivative of
the component \(\psi_t^{(n,m)}\).
The boundary contribution from the \(\boldsymbol{\tau}\)-simplex generates
the stochastic term \(z_t^*L(t)\), while the boundary contribution from the
\(\boldsymbol{s}\)-simplex generates the memory term. 
The full derivation is provided in \ref{app:proof_solution} of the supplementary material.

Using the simplex structure and the time-ordering operator 
$\mathcal{T}$,
the stochastic part of the solution~\eqref{eq:solution}
can be resummed as 
\begin{equation}
\begin{aligned}
\sum_{n=0}^\infty\int_{\Delta_n(t)}\mathcal{T}
\left(
\prod_{k=1}^nz_{\tau_k}^*L(\tau_k)
\right)
\D^n\boldsymbol{\tau}&=
\sum_{n=0}^\infty
\frac{1}{n!}
\int_0^t \cdots \int_0^t
\mathcal{T}
\left(
\prod_{k=1}^n
z_{\tau_k}^*L(\tau_k)
\right)
\D^n\boldsymbol{\tau}\\
&=\mathcal{T}\exp\left(\int_0^t z_\tau^*L(\tau)\,\D\tau
\right).
\end{aligned}
\end{equation}
Then the solution can be rewritten in the compact form
\begin{equation}
    \psi_t=\mathcal{T}\left[\exp\left(\int_0^t z_s^* L(s)\,\D s\right)\mathcal{M}(t)\right]\psi_0,
\label{eq:solution_compact}
\end{equation}
where $\mathcal{M}(t)$ is a shorthand for the memory contribution
under the time-ordering operation:
\begin{equation}
\begin{aligned}
\mathcal{M}(t):=\sum_{m=0}^\infty (-1)^m\int_{\Delta_{2m}(t)}
\sum_{P \in \mathcal{P}_{2m}}\mathcal{L}_P(s_1,\dots,s_{2m})
\prod_{j=1}^m\left[L^\dagger(s_{P(2j)})L(s_{P(2j-1)})\right]\D^{2m}\boldsymbol{s}.
\end{aligned}
\end{equation}
Taking the functional derivative of the compact form~\eqref{eq:solution_compact} gives
\begin{equation}
\begin{aligned}
\frac{\delta}{\delta z_s^*}\psi_t
&=\frac{\delta}{\delta z_s^*}\mathcal{T}
\left[\exp\left(\int_0^t z_\tau^*L(\tau)\,\D\tau\right)\mathcal{M}(t)
\right]\psi_0
\\
&=\mathcal{T}
\left[L(s)\exp\left(\int_0^t z_\tau^*L(\tau)\,\D\tau\right)\mathcal{M}(t)
\right]\psi_0,
\qquad 0\le s\le t .
\end{aligned}
\label{eq:variational_insertion}
\end{equation}
Since \(\mathcal M(t)\) contains no explicit dependence on the stochastic process, the
functional derivative acts only on the time-ordered exponential. Thus, it corresponds to inserting an operator \(L(s)\) into the time-ordered product.

The compact representation \eqref{eq:solution_compact}, together with
the insertion rule \eqref{eq:variational_insertion}, identifies three
elementary structures:

\begin{itemize}

    \item \textbf{Stochastic propagation:}
    The stochastic term generates the time-ordered exponential
    \(
    \mathcal{T}
    \exp\left(
    \int_0^t z_s^*L(s)\,\D s
    \right),
    \)
    which represents the stochastic propagation of the wave function.

    \item \textbf{Operator insertion:}
    The functional derivative \(\delta/\delta z_s^*\) inserts an operator
    \(L(s)\) into the time-ordered product.

    \item \textbf{Pairing:}
    The bath correlation kernel
    \(\alpha(s_{P(2j)},s_{P(2j-1)})\) contracts an inserted operator
    \(L(s_{P(2j-1)})\) with the corresponding operator
    \(L^\dagger(s_{P(2j)})\), producing the nonlocal memory contribution
    encoded in \(\mathcal M(t)\).

\end{itemize}

Together, these three structures provide a compact interpretation of how the stochastic driving, functional differentiation, and bath memory are organized in the analytical solution.

\subsection{Diagrammatic Interpretation}
Diagrammatic notation is widely used to organize terms in perturbative expansions for open quantum systems, including Keldysh-contour formulations and Dyson-series-based diagrammatic Monte Carlo methods \cite{doi:10.1137/22M1499297,werner_diagrammaticmontecarlo_2009}. In the same spirit, we use diagrams as a graphical notation for the propagation, insertion, and pairing identified above. The corresponding graphical elements are defined as follows.
\begin{itemize}
    \item \textbf{Propagation:}
    \adjustbox{valign=c}{\includegraphics[scale=0.6]{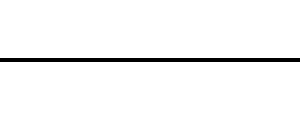}}

    The horizontal line represents stochastic propagation along the time axis.

    \item \textbf{Insertion:}
    \adjustbox{valign=c}{\includegraphics[scale=0.6]{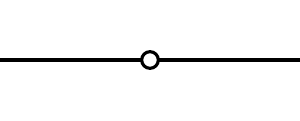}}

    An operator insertion at time \(s\) is represented by a circle $\circ$,
    corresponding to the operator \(L(s)\).

    \item \textbf{Pairing:}
    \adjustbox{valign=c}{\includegraphics[scale=0.6]{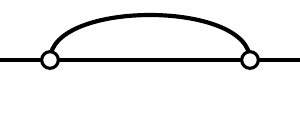}}

    A memory pairing is represented by an arch connecting the current point with a previous insertion circle.
It carries the bath weight \(\alpha(t,s)\) and the current operator
\(L^\dagger(t)\), while the paired \(L(s)\) comes from the previous insertion circle.
\end{itemize}

Using these graphical elements, the analytical solution~\eqref{eq:solution_compact} can be represented
as a sum over diagrams organized by the number of memory pairings \(m\):
\begin{equation}
\begin{aligned}
    (m=0)\quad & \psi_t(Z)
    \;\;=\;\;
    \adjustbox{valign=c}{\includegraphics[scale=0.6]{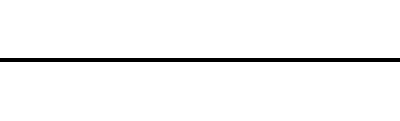}}
    \\
    (m=1)\quad & \;\;+\;\;
    \adjustbox{valign=c}{\includegraphics[scale=0.6]{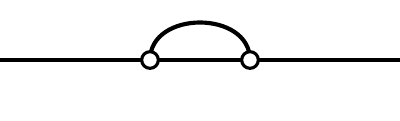}}
    \\
    (m=2)\quad & \;\;+\;\;
    \adjustbox{valign=c}{\includegraphics[scale=0.6]{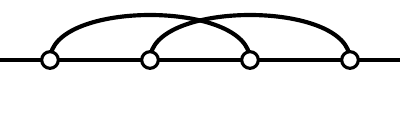}}
    \;\;+\;\;
    \adjustbox{valign=c}{\includegraphics[scale=0.6]{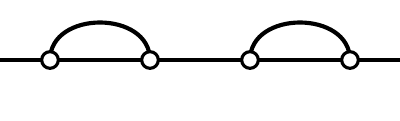}}
    \\
    & \;\;+\;\;
    \adjustbox{valign=c}{\includegraphics[scale=0.6]{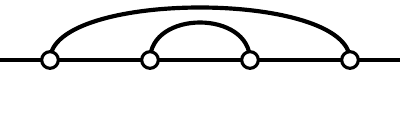}}
    \\
    (m\ge 3)\quad & \;\;+\;\;\cdots
\end{aligned}
\end{equation}

This diagrammatic expansion gives a compact visual organization of the
analytical solution. In particular, it makes explicit how stochastic
propagation, operator insertions, and memory pairings are combined
within the non-Markovian dynamics. 
Their correspondence with the NMQSD
equation, the analytical representation, and the diagrammatic notation is
summarized in Table~\ref{tab:structure_correspondence}.
These graphical elements will
serve as the basic building blocks for the hierarchical numerical schemes
developed in the next section.

\begin{table}[htbp]
\centering
\caption{
Correspondence among the NMQSD equation, the analytical representation,
and the diagrammatic notation.
}
\label{tab:structure_correspondence}

{
\setlength{\tabcolsep}{4pt}
\renewcommand{\arraystretch}{1.25}
\begin{tabular*}{\textwidth}{
@{\extracolsep{\fill}}
l l c c c
@{}
}
\toprule
\multicolumn{2}{r}{\textbf{Operation}}
&
\textbf{Propagation}
&
\textbf{Pairing}
&
\textbf{Insertion}
\\
\midrule

\multirow{2}{*}[-1ex]{\textbf{Equation}}
&
form
&
\multicolumn{3}{c}{
\(\displaystyle
\frac{\D}{\D t}\psi_t =z_t^*L(t)\psi_t-L^\dagger(t)\int_0^t\alpha(t,s)\frac{\delta\psi_t}{\delta z_s^*}
\,\D s\)
}\\
&term
&\(\displaystyle z_t^*L(t)\psi_t\)
&\(\displaystyle-L^\dagger(t)\int_0^t\alpha(t,s)
\frac{\delta\psi_t}{\delta z_s^*}\,\D s\)
&
\(\displaystyle\frac{\delta\psi_t}{\delta z_s^*}\)
\\ \midrule

\multirow{2}{*}[-1ex]{\textbf{Solution}}
&
form
&
\multicolumn{3}{c}{
\(\displaystyle
\psi_t
=
\mathcal T\left[
\exp\left(
\int_0^t z_s^*L(s)\,\D s
\right)
\mathcal M(t)
\right]\psi_0
\)
}
\\
&
term
&
\(\displaystyle
\mathcal T\exp\!\left(
\int_0^t z_s^*L(s)\,\D s
\right)
\)
&
\(\displaystyle
\mathcal M(t)
\)
&
\(\displaystyle
\mathcal T[L(s)\cdots]
\)
\\
\midrule

\textbf{Diagram}
&
&
\(\displaystyle
\adjustbox{valign=c}{
\includegraphics[scale=0.45]{figure/diagram/def/nmqsd_def-1.pdf}}
\)
&
\(\displaystyle
\adjustbox{valign=c}{
\includegraphics[scale=0.45]{figure/diagram/def/nmqsd_def-3.pdf}}
\)
&
\(\displaystyle
\adjustbox{valign=c}{
\includegraphics[scale=0.45]{figure/diagram/def/nmqsd_def-2.pdf}}
\)
\\

\bottomrule
\end{tabular*}
\vspace{0.5ex}
\begin{minipage}{0.96\textwidth}
\footnotesize
\textit{Remark.}
A single pair in \(\mathcal M(t)\) contributes
\(-\alpha(t,s)L^\dagger(t)L(s)\), where \(L^\dagger(t)\) acts at the current time \(t\), 
and the paired \(L(s)\) is generated by an insertion at the previous time \(s\).
\end{minipage}
}
\end{table}

\section{Numerical Methods}
\label{sec:methods}
We now turn the analytical representation developed above into numerical schemes. For this purpose, it is convenient to split the
right-hand side of~\eqref{eq:NMQSD_reduced} into the stochastic propagation
part and the memory part. Define
\begin{equation}
    A(s)=z_s^*L(s),
    \qquad
    B(s)=-L^\dagger(s)\int_0^s\alpha(s,u)
    \frac{\delta}{\delta z_u^*}\D u .
\end{equation}
Then equation~\eqref{eq:NMQSD_reduced} can be written as
\begin{equation}
    \frac{\D}{\D t}\psi_t=
    \bigl(A(t)+B(t)\bigr)\psi_t .
    \label{eq:equation_numerical}
\end{equation}
The following subsections introduce the auxiliary states and then construct the
first- and second-order schemes.

\subsection{Auxiliary States}

The memory operator \(B(t)\) involves functional derivatives of the wave
function with respect to the past stochastic variables. To obtain a closed
discrete representation of this term, we introduce auxiliary states
\begin{equation}
    \psi_t^\sigma:=\frac{\delta^k \psi_t}{\delta z_{s_1}^* \cdots \delta z_{s_k}^*},
    \qquad
    \sigma=(s_1,\ldots,s_k).
    \label{eq:auxiliary_state}
\end{equation}
Here \(\sigma\) is an ordered index satisfying
\(0\le s_1\le\cdots\le s_k\le t\).
The physical state corresponds to the empty index,
\(\psi_t^\emptyset=\psi_t\).

For an ordered index \(\sigma\), the corresponding auxiliary state $\psi_t^\sigma$ 
satisfies the same evolution equation and is initialized at \(t=s_k\):
\begin{equation}
\begin{aligned}
    \frac{\D}{\D t}\psi_t^\sigma
    &=\bigl(A(t)+B(t)\bigr)\psi_t^\sigma,
    && t>s_k, \\
    \psi_{s_k}^\sigma
    &=L(s_k)\,\psi_{s_k}^{\sigma\setminus(s_k)},
    && t=s_k .
\end{aligned}
\label{eq:aux_states}
\end{equation}
The initial condition in~\eqref{eq:aux_states} is precisely the
insertion of \(L(s_k)\) at time \(s_k\). Indeed, if
\(\sigma_-=(s_1,\ldots,s_{k-1})\), then adding the new time \(s_k\) gives
\begin{equation}
    \psi_{s_k}^{\sigma_-\cup(s_k)}
    =L(s_k)\psi_{s_k}^{\sigma_-},
    \qquad t=s_k .
    \label{eq:aux_insertion_initial}
\end{equation}
Here \(\sigma\setminus(s_k)\) denotes the ordered index obtained by removing
\(s_k\) from \(\sigma\), while \(\sigma_-\cup(s_k)\) denotes the ordered index obtained by adding \(s_k\) to \(\sigma_-\).

Thus, the auxiliary states are generated causally: each state is initialized at its last index time by inserting \(L(s_k)\), and is then evolved by the same equation. The numerical construction is determined by three choices: (i) the time discretization of this evolution equation, (ii) the quadrature rule for the memory operator, and (iii) the selection of the retained auxiliary states. The first- and second-order schemes are presented next under the following assumptions.

\begin{assumption}
\label{assumption}
For any fixed final time \(T>0\), the following conditions hold.

\begin{enumerate}[label=(\arabic*)]
    \item \textbf{Bounded coefficients.}
    All bounds involving the stochastic process are understood pathwise.
    There exist constants \(C_A,C_L>0\) such that
    \[
        \sup_{0\le t\le T}\|A(t)\|\le C_A,
        \qquad
        \sup_{0\le t\le T}\|L(t)\|\le C_L .
    \]
\item \textbf{Bounded memory kernel.}
There exists a constant \(C_\alpha>0\) such that
\[
    |\alpha(t,s)|\le C_\alpha,
    \qquad (t,s)\in[0,T]^2 .
\]
The memory kernel is then uniformly integrable on finite time
intervals:
\[
    \sup_{0\le t\le T}
    \int_0^t |\alpha(t,s)|\,\D s
    \le C_{\alpha,T}.
\]

\item \textbf{Level-wise boundedness of the auxiliary hierarchy.}
For the continuous auxiliary states defined in \eqref{eq:auxiliary_state},
there exist constants \(C_\psi,M>0\) such that
\[
    \sup_{0\le t\le T}
    \sup_{|\sigma|=k}
    \|\psi_t^\sigma\|
    \le C_\psi M^k,
    \qquad k\ge 0 .
\]
\end{enumerate}
\end{assumption}
The boundedness of \(L(t)\) and \(\alpha(t,s)\) is a standard finite-time
assumption. Since \(A(t)=z_t^*L(t)\) contains the stochastic process, its
boundedness is understood pathwise, namely for almost every realization.
The level-wise boundedness condition controls the growth of auxiliary states
with respect to the hierarchy level.

\subsection{First-Order Scheme}
We construct the first-order auxiliary-state hierarchy using forward Euler time
discretization and a left-endpoint quadrature for the memory integral. In this first-order construction, the auxiliary states are indexed by distinct time labels, while the repeated-label contributions are dropped.

\subsubsection{Construction of the First-Order Scheme}
We first construct a first-order discretization of~\eqref{eq:aux_states}. Let the time interval be discretized by a uniform grid \(t_n=n\tau\), where \(\tau\) is the time step. For notational simplicity, we write
\begin{equation}
    z_n^* = z_{t_n}^*, \quad
    L_n = L(t_n), \quad
    \alpha_{n,m} = \alpha(t_n,t_m), \quad
    A_n = A(t_n), \quad
    \psi_n^\sigma = \psi_{t_n}^\sigma .
\end{equation}
Let \(B(t_n)\) denote the continuous memory operator,
and \(B_n\) denote its discrete approximation.
Applying the forward Euler discretization to~\eqref{eq:aux_states} gives
\begin{equation}
    \psi_{n+1}^\sigma = \psi_n^\sigma +
    \tau\left(A_n\psi_n^\sigma+B(t_n)\psi_n^\sigma\right)
    +\mathcal O(\tau^2).
    \label{eq:first_order_euler_continuous_B}
\end{equation}
The memory term is approximated by the left-endpoint quadrature
\begin{equation}
\begin{aligned}
    B(t_n)\psi_n^\sigma
    &= -L_n^\dagger\int_0^{t_n}\alpha(t_n,s)\frac{\delta\psi_n^\sigma}{\delta z_s^*}\D s\\
    &=-\tau L_n^\dagger\sum_{j=0}^{n-1}\alpha_{n,j}\frac{\delta\psi_n^\sigma}{\delta z_{t_j}^*}
    +\mathcal O(\tau)\\
    &=-\tau L_n^\dagger\sum_{j=0}^{n-1}\alpha_{n,j}\psi_n^{\sigma\cup(t_j)}
    +\mathcal O(\tau)\\
\end{aligned}
\label{eq:first_order_memory_continuous}
\end{equation}
We then set
\begin{equation}
    B_n\psi_n^\sigma:=-\tau L_n^\dagger\sum_{j=0}^{n-1}\alpha_{n,j}\psi_n^{\sigma\cup(t_j)}.
    \label{eq:first_order_memory_discrete}
\end{equation}
Dropping the remainder gives the full first-order update
\begin{equation}
\begin{aligned}
    \psi_{n+1}^\sigma &=(I+\tau A_n)\psi_n^\sigma +\tau B_n\psi_n^\sigma  \\
    &=(I+\tau A_n)\psi_n^\sigma -
    \tau^2 L_n^\dagger\sum_{j=0}^{n-1}\alpha_{n,j}\psi_n^{\sigma\cup(t_j)}.
\end{aligned}
\label{eq:first_order_update}
\end{equation}
In the first-order update from \(t_n\) to \(t_{n+1}\), the memory quadrature
uses the past grid points
\[
    \mathcal I_n^{[1]}:=\{t_0,t_1,\ldots,t_{n-1}\}.
\]
The retained first-order configuration set is the distinct-index set
\begin{equation}
    \mathcal C_n^{[1]}:=
    \left\{
    \sigma=(s_1,\ldots,s_k):
    s_j\in\mathcal I_n^{[1]},
    \ s_i\ne s_j \ (i\ne j)
    \right\}.
    \label{eq:first_order_configuration_set}
\end{equation}
Since $\sigma$ is ordered, every $\sigma\in \mathcal C_n^{[1]}$ satisfies $s_1 < \cdots < s_k$.
At time \(t_n\), the implemented first-order hierarchy only retains
auxiliary states \(\psi_n^\sigma\) with
\(\sigma\in\mathcal C_n^{[1]}\).

In the distinct-index implementation, the memory operator is restricted
to those terms whose added label remains admissible.  We therefore define
the first-order retained memory operator by
\begin{equation}
    B_n^{[1]}\psi_n^\sigma:=
    -\tau L_n^\dagger
    \sum_{\substack{0\le j\le n-1\\ t_j\notin\sigma}}
    \alpha_{n,j}
    \psi_n^{\sigma\cup(t_j)},
    \qquad
    \sigma\in\mathcal C_n^{[1]} .
    \label{eq:first_order_retained_Bn}
\end{equation}
Then \(\sigma\cup(t_j)\) still belongs to \(\mathcal C_n^{[1]}\), and
the implemented first-order scheme is
\begin{equation}
\begin{aligned}
    \psi_{n+1}^\sigma
    &=(I+\tau A_n)\psi_n^\sigma+\tau B_n^{[1]}\psi_n^\sigma  \\
    &=(I+\tau A_n)\psi_n^\sigma
    -\tau^2 L_n^\dagger
    \sum_{\substack{0\le j\le n-1\\ t_j\notin\sigma}}
    \alpha_{n,j}
    \psi_n^{\sigma\cup(t_j)},
    \qquad
    \sigma\in\mathcal C_n^{[1]} .
\end{aligned}
\label{eq:first_order_distinct_update}
\end{equation}
The new label \(t_n\) is added to the configuration set by the discrete
initial condition:
\begin{equation}
    \psi_{n+1}^{\sigma\cup(t_n)}
    =
    L_n\psi_n^\sigma,
    \qquad
    \sigma\in\mathcal C_n^{[1]},
    \quad
    \sigma\cup(t_n)\in\mathcal C_{n+1}^{[1]} .
    \label{eq:first_order_initial_condition_discrete}
\end{equation}
Together, \eqref{eq:first_order_distinct_update} and
\eqref{eq:first_order_initial_condition_discrete} define the first-order numerical scheme for \eqref{eq:aux_states}.

For comparison, let \(\phi_n^\sigma\) denote the solution of the full first-order Euler
scheme with the same initialization rule.  It satisfies
\begin{equation}
    \phi_{n+1}^\sigma
    =
    (I + \tau A_n)\phi_n^\sigma
    +
    \tau B_n\phi_n^\sigma.
    \label{eq:first_order_full_hierarchy}
\end{equation}
The next theorem shows that, at the level of the physical state, the error introduced by restricting the memory operator to the distinct-index configurations is \(O(\tau)\).

\begin{theorem}[Consistency of the retained first-order memory operator]
\label{thm:error_first_order}
Under Assumption~\ref{assumption}, let \(\phi_n^\sigma\) be the solution of
the full first-order Euler hierarchy \eqref{eq:first_order_full_hierarchy},
where all configurations are retained, and let \(\psi_n^\sigma\) be the
solution of the implemented first-order scheme
\eqref{eq:first_order_distinct_update}.  Both schemes use the same
discrete initialization rule
\eqref{eq:first_order_initial_condition_discrete}.  Assume that
\[
    \psi_0^\emptyset=\phi_0^\emptyset = \psi_0.
\]
Then there exists a constant \(C_T>0\) such that
\begin{equation}
    \max_{0\le n\le N}
    \|\psi_n^\emptyset-\phi_n^\emptyset\|
    \le C_T\tau .
    \label{eq:physical_state_repeated_error}
\end{equation}
\end{theorem}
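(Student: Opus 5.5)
Set $e_n^\sigma:=\psi_n^\sigma-\phi_n^\sigma$ for $\sigma\in\mathcal C_n^{[1]}$. Subtracting \eqref{eq:first_order_full_hierarchy} from \eqref{eq:first_order_distinct_update} gives
\[
e_{n+1}^\sigma=(I+\tau A_n)e_n^\sigma+\tau B_n^{[1]}e_n^\sigma-\tau r_n^\sigma,
\qquad
r_n^\sigma:=-\tau L_n^\dagger\sum_{\substack{0\le j\le n-1\\ t_j\in\sigma}}\alpha_{n,j}\,\phi_n^{\sigma\cup(t_j)} .
\]
Here $r_n^\sigma$ is the repeated-label part of $B_n\phi_n^\sigma$ that the retained operator discards. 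Both schemes use the same insertion rule \eqref{eq:first_order_initial_condition_discrete}, so a newly created state satisfies $e_{n+1}^{\sigma\cup(t_n)}=L_ne_n^\sigma$ and no new error enters at insertion. The sum defining $r_n^\sigma$ has at most $|\sigma|=k$ terms, each of size $\tau C_LC_\alpha\|\phi_n^{\sigma\cup(t_j)}\|$. If we have a level-wise bound $\|\phi_n^\rho\|\le C_\psi' M'^{|\rho|}$, then
\[
\|r_n^\sigma\|\le \tau\,kC_LC_\alpha C_\psi' M'^{k+1}.
\]
This is an $O(\tau)$ local defect per step, weighted by the level $k$. I would obtain the discrete bound on $\phi$ from Assumption~\ref{assumption}(3) by a standard consistency argument for the full Euler hierarchy, or simply assume it as the discrete analogue.

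The main step is to propagate these defects down to the physical level. A naive level-wise Gronwall fails because $B_n^{[1]}$ couples level $k$ to level $k+1$ with up to $n$ terms, and $\tau\sum_j|\alpha_{n,j}|\le C_{\alpha,T}$. I would use a weighted norm
\[
E_n:=\sup_k\ \sup_{\sigma\in\mathcal C_n^{[1]},\,|\sigma|=k}\ \lambda^{-k}\|e_n^\sigma\|,
\]
with $\lambda>M'$ chosen so that the weighted defects $\lambda^{-k}kM'^{k+1}$ are summable and bounded uniformly in $k$. With this weighting, the memory coupling contributes a factor
\[
\tau\,C_L\,\lambda\,\tau\sum_j|\alpha_{n,j}|\le \tau\,C_LC_{\alpha,T}\,\lambda .
\]
The recursion then becomes
\[
E_{n+1}\le \bigl(1+\tau(C_A+C_LC_{\alpha,T}\lambda)\bigr)E_n+C\tau^2 .
\]
The insertion step keeps the inequality, since $\lambda^{-(k+1)}\|L_ne_n^\sigma\|\le (C_L/\lambda)\lambda^{-k}\|e_n^\sigma\|$, and $\lambda\ge C_L$ can be arranged. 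Initially $E_0=0$. The discrete Gronwall inequality then gives $E_n\le C\tau^2\cdot N e^{cT}=O(\tau)$, and $\|e_n^\emptyset\|\le E_n$ yields \eqref{eq:physical_state_repeated_error}.

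I expect the main obstacle to be justifying the uniform-in-level weighted bounds. The hierarchy is infinite, the defect grows like $k$, and the discrete $\phi$ must inherit the geometric bound of Assumption~\ref{assumption}(3). If a uniform weighting is too lossy, the fallback is a Duhamel expansion of $e_n^\emptyset$. In it, each path down the hierarchy accumulates $\tau^2$ factors per coupling, and the defect is counted through the same pairing structure as Theorem~\ref{thm:solution}. The resulting terms are then summed using the factorial decay coming from the simplex volumes.
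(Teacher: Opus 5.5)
Your proposal is correct and is essentially the paper's proof with $r=1/\lambda$. The paper uses the weighted error $\mathcal E_n=\max_k r^kE_n^{(k)}$ with $rM<1$ and $rC_L\le 1$, bounds the discarded repeated-label defect by $\tau^2C_LkC_\alpha C_\phi M^{k+1}$, shows the insertion step cannot increase the weighted error, and closes with $\mathcal E_{n+1}\le(1+\tau\Lambda_r)\mathcal E_n+\tau^2D_r$ and the discrete Gr\"onwall inequality, just as you do; your caveat about needing a discrete level-wise bound on $\phi$ is apt, since the paper likewise uses such a bound (constant $C_\phi$) without deriving it from Assumption~\ref{assumption}(3), and the Duhamel fallback is not needed.
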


The proof is provided in \ref{app:first_order_consistency} of the supplementary material. Consequently, if the full hierarchy \eqref{eq:first_order_full_hierarchy} is first-order accurate for the physical state, then the implemented scheme \eqref{eq:first_order_distinct_update} preserves the first-order accuracy.

\subsubsection{Diagrammatic Implementation}

We now describe the diagrammatic representation of the first-order scheme.
For an auxiliary state \(\psi_n^\sigma\) with
\[
    \sigma=(s_1,\ldots,s_k)\in\mathcal C_n^{[1]},
    \qquad
    0\le s_1<\cdots<s_k\le t_{n-1},
\]
the distinct time labels in \(\sigma\) are represented diagrammatically by
\begin{equation}
\psi_t^\sigma = 
\begin{tikzpicture}[baseline=(img.base)]
    \node[inner sep=0pt] (img) {\adjustbox{valign=c}{\includegraphics[scale=0.6]{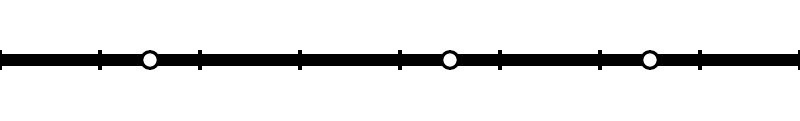}}};
    
    \node[below=0pt, font=\small] at (img.south west) {$0$};
    \node[below=0pt, font=\small] at (img.south east) {$t$};
    
    \path (img.south west) -- (img.south east) 
        coordinate[pos=0.1875] (s1)
        coordinate[pos=0.5625] (s2)
        coordinate[pos=0.7] (sc)
        coordinate[pos=0.8125] (s3);
        
    \node[below=2pt, font=\small, text=black] at (s1) {$s_1$};
    \node[below=2pt, font=\small, text=black] at (s2) {$s_2$};
    \node[below=2pt, font=\small, text=black] at (sc) {$\cdots$};
    \node[below=2pt, font=\small, text=black] at (s3) {$s_k$};
\end{tikzpicture}
\end{equation}
The bold line represents the current auxiliary state, and the circles
\(\circ\) record the unpaired time labels in
\(\sigma\). Their positions are schematic and
indicate only the ordering of the labels. In the first-order scheme, the
corresponding operations are evaluated at the left endpoint.
Since repeated time labels are not included, each auxiliary configuration can be encoded by a binary vector over
the active time grid.

The first-order scheme \eqref{eq:first_order_distinct_update} is written in a
pull-back form: for a fixed target state \(\psi_{n+1}^\sigma\), the memory term
collects contributions from higher-level states
\(\psi_n^{\sigma\cup(t_j)}\) with \(t_j\notin\sigma\).
For implementation, it is more convenient to use the equivalent push-forward
form. Namely, each active auxiliary state \(\psi_n^\sigma\), with
\(\sigma\in\mathcal C_n^{[1]}\), distributes its contributions to the
corresponding target states at time \(t_{n+1}\).

For a given auxiliary state \(\psi_n^\sigma\) at time \(t_n\), the first-order
scheme generates three elementary diagrammatic transitions: propagation,
insertion, and pairing. Here ``\(\leftarrow\)'' denotes an accumulated
contribution to an existing target state, rather than a complete state
assignment.  The diagrams below display these elementary contributions.

\adjustboxset{valign=m, height=2em, raise=-0.2ex}
\begin{itemize}    
    \item \textbf{Propagation:} Contribution to $\psi_{n+1}^\sigma$.
    \begin{flalign}&
\begin{alignedat}{2}
    &\text{Rule}\qquad
    & \psi^{\sigma}_{n+1}
    &\;\leftarrow\;
    \left( I + \tau A_n \right)\psi^{\sigma}_n
    \\
    &\text{Examples}&&
    \\[-1ex]
    &\quad\sigma=\emptyset:\qquad
    & \adjustbox{valign=c}{\includegraphics[scale=0.6]{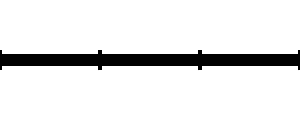}}
    &\;=\; \adjustbox{valign=c}{\includegraphics[scale=0.6]{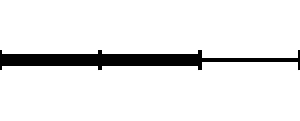}}
    + \cdots
    \\[-3ex]
    &\quad\sigma=(t_0):\qquad
    & \adjustbox{valign=c}{\includegraphics[scale=0.6]{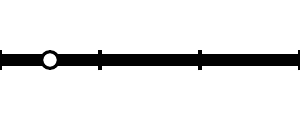}}
    &\;=\; \adjustbox{valign=c}{\includegraphics[scale=0.6]{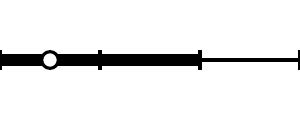}}
    + \cdots 
\end{alignedat}&&
\end{flalign}
This transition represents the time evolution generated by \(A_n=z_n^*L_n\).

    \item \textbf{Insertion:} Contribution to $\psi_{n+1}^{(\sigma,t_n)}$.
\begin{flalign}&
\begin{alignedat}{2}
    &\text{Rule}\qquad
    & \psi^{(\sigma,t_n)}_{n+1}
    &\;=\; L_n \psi^{\sigma}_n
    \\
    &\text{Examples}&&
    \\[-1ex]
    &\quad \sigma=\emptyset:\qquad
    & \adjustbox{valign=c}{\includegraphics[scale=0.6]{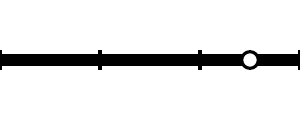}}
    &\;=\; \adjustbox{valign=c}{\includegraphics[scale=0.6]{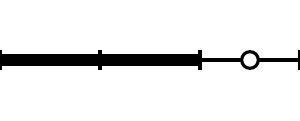}}
    \\[-3ex]
    &\quad \sigma=(t_0):\qquad
    & \adjustbox{valign=c}{\includegraphics[scale=0.6]{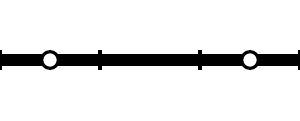}}
    &\;=\; \adjustbox{valign=c}{\includegraphics[scale=0.6]{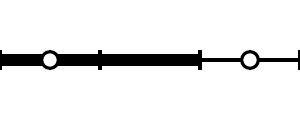}}
\end{alignedat}&&
\end{flalign}

It initializes the auxiliary state associated with the newly inserted
time label \(t_n\).
    \item \textbf{Pairing:} Contribution to $\psi_{n+1}^{\sigma \setminus (s)}$ for each $s\in\sigma$.
\begin{flalign}
&\begin{alignedat}{2}
    &\text{Rule}\qquad
    & \psi^{\sigma\setminus(s)}_{n+1}
    &\;\leftarrow\;
    - \tau^2 \,\alpha(t_n,s) L^\dagger_n \psi^{\sigma}_n
    \\
    &\text{Examples}
    &&
    \\[-1ex]
    &\quad \sigma=(t_0),\ s=t_0:\qquad
    & \adjustbox{valign=c}{\includegraphics[scale=0.6]{figure/diagram/first_order/nmqsd_first_order-1.pdf}}
    &\;=\; \adjustbox{valign=c}{\includegraphics[scale=0.6]{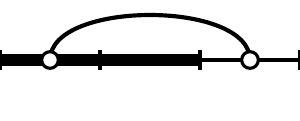}}
    + \cdots
    \\[-3ex]
    &\quad \sigma=(t_0,t_1),\ s=t_1:\qquad
    & \adjustbox{valign=c}{\includegraphics[scale=0.6]{figure/diagram/first_order/nmqsd_first_order-4.pdf}}
    &\;=\; \adjustbox{valign=c}{\includegraphics[scale=0.6]{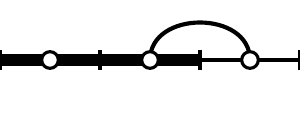}}
    + \cdots
\end{alignedat}
&&
\end{flalign}
This transition pairs the current point at \(t_n\) with the circle labeled
by \(s\).  After the pairing, this circle is removed, and the contribution
is accumulated into the lower-level state
\(\psi_{n+1}^{\sigma\setminus(s)}\).
\end{itemize}

These transitions completely specify the diagrammatic first-order update. The
corresponding algorithm is given in Algorithm~\ref{alg:first_order_nmqsd}, 
and a fully expanded one-step example is provided in
the supplementary material.
\begin{algorithm}[htbp]
\caption{First-Order Hierarchical Evolution Step}
\label{alg:first_order_nmqsd}
\begin{algorithmic}[1]
\Require Auxiliary states $\{\psi^{\sigma}_n\}_{\sigma \in \mathcal{C}_n^{[1]}}$ at time $t_n$, stochastic process $z_n$, time step $\tau$, bath correlation function $\alpha(t, s)$.
\Ensure Auxiliary states $\{\psi^{\sigma}_{n+1}\}_{\sigma \in \mathcal{C}_{n+1}^{[1]}}$ at time $t_{n+1}$.

\State Initialize target states: $\psi^{\sigma}_{n+1} = 0$ for all configurations $\sigma \in \mathcal{C}_{n+1}^{[1]}$.

\For{each active configuration $\sigma \in \mathcal{C}_n^{[1]}$}

    \State \textbf{Propagation: } $\psi^{\sigma}_{n+1} \mathrel{+\!=} \left( I + z^*_n L_n \tau \right) \psi^{\sigma}_n$. 
    
    \State \textbf{Insertion: } $\psi^{(\sigma,t_n)}_{n+1} \mathrel{+\!=} L_n \psi^{\sigma}_n$.
    \For{each $s \in \sigma$}
         \State \textbf{Pairing: }$\psi^{\sigma\setminus(s)}_{n+1} \mathrel{+\!=} - \tau^2 \, \alpha(t_n,s) L^\dagger_n \psi^{\sigma}_n$.
    \EndFor
\EndFor

\State \Return $\{\psi^{\sigma}_{n+1}\}_{\sigma \in \mathcal{C}_{n+1}^{[1]}}$
\end{algorithmic}
\end{algorithm}

At the physical level, the first-order update can be viewed as the combined
effect of stochastic propagation and memory pairing:
\begin{equation}
\begin{aligned}
     &\quad \text{Propagation} &&\quad\text{Pairing} \\[1.5ex]
    \psi_{n+1} &= (I+\tau A_n)\psi_n &&    -\tau^2 L_n^\dagger
    \sum_{\substack{0\le j\le n-1}}
    \alpha_{n,j}\psi_n^{(t_j)} \\
    \adjustbox{valign=c}{\includegraphics[scale=0.6]{figure/diagram/first_order/nmqsd_first_order-1.pdf}} 
    &= \adjustbox{valign=c}{\includegraphics[scale=0.6]{figure/diagram/first_order/nmqsd_first_order-9.pdf}} 
    &&+ \adjustbox{valign=c}{\includegraphics[scale=0.6]{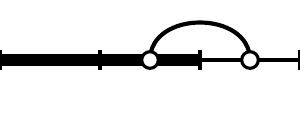}} 
    + \adjustbox{valign=c}{\includegraphics[scale=0.6]{figure/diagram/first_order/nmqsd_first_order-12.pdf}}    
\end{aligned}
\end{equation}

The insertion transition does not contribute directly to the physical state at
the same time step. Instead, it initializes higher-level
auxiliary states that contribute through later memory pairings. 
Thus, propagation, insertion, and pairing together form the complete
first-order scheme of the auxiliary states.

\subsection{Second-Order Scheme}
\subsubsection{Construction of the Second-Order Scheme}
Following the first-order construction, we obtain a second-order scheme by
combining a symmetric Strang splitting with a midpoint quadrature for the
memory integral. To close the resulting discrete hierarchy, we retain
distinct-label configurations together with configurations containing at
most one repeated time label.

Starting from the auxiliary-state equation \eqref{eq:aux_states}, we first apply
the symmetric Strang splitting in which the time-dependent operators are evaluated at \(t_{n+\frac12}\):
\begin{equation}
\psi_{n+1}^\sigma
=
\exp\left(
\frac{\tau}{2}A_{n+\frac12}
\right)
\exp\left(
\tau B(t_{n+\frac12})
\right)
\exp\left(
\frac{\tau}{2}A_{n+\frac12}
\right)
\psi_n^\sigma + \mathcal{O}(\tau^3).
\label{eq:second_order_strang_splitting}
\end{equation}
For later use, define the intermediate auxiliary state after the first
half-step stochastic propagation by
\begin{equation}
\hat{\psi}_{n+\frac12}^{\sigma}
:=
\exp\left(
\frac{\tau}{2}A_{n+\frac12}
\right)
\psi_n^\sigma .
\label{eq:second_order_midpoint_notation}
\end{equation}

The continuous memory operator acting on this intermediate state is
\begin{equation}
B(t_{n+\frac12}) \hat{\psi}_{n+\frac12}^{\sigma}
=
- L_{n+\frac12}^{\dagger}
\int_0^{t_{n+\frac12}}
\alpha(t_{n+\frac12},s)
\frac{\delta \hat{\psi}_{n+\frac12}^{\sigma}}{\delta z_s^*}
\,\D s .
\label{eq:second_order_continuous_midpoint_memory}
\end{equation}
We approximate the integral in \eqref{eq:second_order_continuous_midpoint_memory}
by the midpoint rule on \([0,t_n]\), together with a first-order approximation
on the half interval \([t_n,t_{n+\frac12}]\):
\begin{equation}
B_{n+\frac12}\hat{\psi}_{n+\frac12}^{\sigma}:=
- \tau L_{n+\frac12}^{\dagger}
\sum_{j=0}^{n-1}
\alpha_{n+\frac12,j+\frac12}
\hat{\psi}_{n+\frac12}^{\sigma\cup(t_{j+\frac12})}
-\frac{\tau}{2}
\alpha_{n+\frac12,n+\frac12}
L_{n+\frac12}^{\dagger}
\hat{\psi}_{n+\frac12}^{\sigma\cup(t_{n+\frac12})}.
\label{eq:second_order_discrete_memory}
\end{equation}
Since the quadrature error in the memory operator is \(O(\tau^2)\), we have
\begin{equation}
\tau B(t_{n+\frac12})\hat{\psi}_{n+\frac12}^{\sigma}
=\tau B_{n+\frac12}\hat{\psi}_{n+\frac12}^{\sigma}+\mathcal O(\tau^3).
\label{eq:second_order_memory_consistency}
\end{equation}

It remains to specify which auxiliary states appearing in
\(B_{n+\frac12}\hat{\psi}_{n+\frac12}^{\sigma}\) are retained.
At time \(t_n\), the available labels are chosen from
\[
    \mathcal I_n^{[2]}:=
    \{t_{\frac12},t_{\frac32},\ldots,t_{n-\frac12}\}.
\]
We first define the distinct configurations by
\[
    \mathcal C_n^{[2]}:=
    \left\{
    \sigma=(s_1,\ldots,s_k):
    s_j\in\mathcal I_n^{[2]},
    \ s_i\ne s_j \ (i\ne j)
    \right\}.
\]
Motivated by the second-order Dyson discretization for non-Markovian
density-matrix dynamics
\cite{cai_secondorderdiscretizationdyson_2025}, we further allow at most one
repeated time label in each configuration. The retained second-order configuration set is  then defined by
\begin{equation}
\begin{aligned}
    \mathcal D_n^{[2]}:=
    \mathcal C_n^{[2]}
    \cup
    \left\{
    \sigma\cup(s,s):
    \sigma\in\mathcal C_n^{[2]},
    \ s\in\mathcal I_n^{[2]},
    \ s\notin\sigma
    \right\}.
\end{aligned}
\label{eq:second_order_configuration_set}
\end{equation}
Here a repeated label \(s\) represents the repeated functional derivative \(\delta^2/\delta (z_s^*)^2\). Configurations with more than one repeated label, or with any label of multiplicity greater than two, are not retained.

During the step from \(t_n\) to \(t_{n+1}\), the insertion at the current
time \(t_{n+\frac12}\) enlarges the available label set from
\(\mathcal I_n^{[2]}\) to \(\mathcal I_{n+1}^{[2]}\). Therefore, after the
insertion step, the memory operator is restricted to configurations in
\(\mathcal D_{n+1}^{[2]}\). For \(\sigma\in\mathcal D_{n+1}^{[2]}\), define
\begin{equation}
\begin{aligned}
B_{n+\frac12}^{[2]}\hat{\psi}_{n+\frac12}^{\sigma}:=
&-\tau L_{n+\frac12}^{\dagger}
\sum_{\substack{0\le j\le n-1\\
\sigma\cup(t_{j+\frac12})\in\mathcal D_{n+1}^{[2]}}}
\alpha_{n+\frac12,j+\frac12}
\hat{\psi}_{n+\frac12}^{\sigma\cup(t_{j+\frac12})}
\\
&-\frac{\tau}{2}\alpha_{n+\frac12,n+\frac12}
L_{n+\frac12}^{\dagger}\hat{\psi}_{n+\frac12}^{\sigma\cup(t_{n+\frac12})},
\end{aligned}
\label{eq:second_order_retained_memory}
\end{equation}
where the last term is included only when
\(\sigma\cup(t_{n+\frac12})\in\mathcal D_{n+1}^{[2]}\).
This defines the retained memory operator. The auxiliary states involving
the current midpoint label \(t_{n+\frac12}\) are generated by the insertion
rule. For \(\sigma\in\mathcal D_n^{[2]}\),
\begin{equation}
    \hat{\psi}_{n+\frac12}^{\sigma\cup(t_{n+\frac12})}
    =L_{n+\frac12}\hat{\psi}_{n+\frac12}^{\sigma},
    \qquad
    \hat{\psi}_{n+\frac12}^{\sigma\cup(t_{n+\frac12},t_{n+\frac12})}
    =L_{n+\frac12}^{2}\hat{\psi}_{n+\frac12}^{\sigma}.
\label{eq:second_order_midpoint_insertions}
\end{equation}
These insertion rules are applied only to admissible configurations in
\(\mathcal D_{n+1}^{[2]}\).

We can now summarize the second-order scheme. Let
\(
    U_{n+\frac12}:=
    \exp\left(\frac{\tau}{2}A_{n+\frac12}\right).
\)
Starting from \(\{\psi_n^\sigma\}_{\sigma\in\mathcal D_n^{[2]}}\), the one-step update is
\begin{equation}
\begin{array}{lll}
\text{Right half-step:}
&
\hat{\psi}_{n+\frac12}^{\sigma}
=
U_{n+\frac12}\psi_n^\sigma,
&\sigma\in\mathcal D_n^{[2]},
\\
\text{Insertion:}
&
\text{apply the insertion rules in }
\eqref{eq:second_order_midpoint_insertions},
&\sigma\in\mathcal D_n^{[2]},
\\
\text{Memory pairing:}
&
\widetilde{\psi}_{n+\frac12}^{\sigma}
=
\left(I+\tau B_{n+\frac12}^{[2]}\right)
\hat{\psi}_{n+\frac12}^{\sigma},&
\sigma\in\mathcal D_{n+1}^{[2]},
\\
\text{Left half-step:}
&
\psi_{n+1}^{\sigma}
=
U_{n+\frac12}\widetilde{\psi}_{n+\frac12}^{\sigma},
&\sigma\in\mathcal D_{n+1}^{[2]} .
\end{array}
\label{eq:second_order_update_summary}
\end{equation}
Here \(B_{n+\frac12}^{[2]}\) is defined in
\eqref{eq:second_order_retained_memory}; its action on
\(\hat{\psi}_{n+\frac12}^{\sigma}\) involves the retained higher-level
auxiliary states \(\hat{\psi}_{n+\frac12}^{\sigma\cup(s)}\).

The Strang splitting and midpoint quadrature define a second-order
discretization at the level of the full discrete hierarchy. It remains to
show that replacing the full memory operator $B_{n+\frac12}$ by the retained operator
\(B_{n+\frac12}^{[2]}\) preserves this accuracy for the physical component. The following theorem provides this consistency result.

\begin{theorem}[Consistency of the retained second-order memory operator]
\label{thm:second_order_retained_memory_consistency}
Under Assumption~\ref{assumption}, let \(\phi_n^\sigma\) denote the solution of the full discrete second-order hierarchy
 obtained from \eqref{eq:second_order_update_summary} by replacing
\(B_{n+\frac12}^{[2]}\) with the full discrete memory operator
\(B_{n+\frac12}\).  Let \(\psi_n^\sigma\) denote the solution of the retained
second-order hierarchy defined by \eqref{eq:second_order_update_summary}.  The two hierarchies use the same propagation and insertion rules, and the
same initial physical state:
\[
    \psi_0^\emptyset=\phi_0^\emptyset=\psi_0 .
\]

Then there exists a constant \(C_T>0\) such that
\[
    \max_{0\le n\le N}
    \left\|
    \psi_n^\emptyset-\phi_n^\emptyset
    \right\|
    \le
    C_T\tau^2,
    \qquad N\tau\le T .
\]
\end{theorem}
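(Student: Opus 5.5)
The plan mirrors the argument for Theorem~\ref{thm:error_first_order}, with one more order of smallness extracted from the discarded configurations. Define the level-wise discrepancy $e_n^\sigma:=\phi_n^\sigma-\psi_n^\sigma$ for $\sigma\in\mathcal D_n^{[2]}$. Subtracting the two one-step maps in \eqref{eq:second_order_update_summary} gives
\[
e_{n+1}^\sigma=U_{n+\frac12}\Bigl[(I+\tau B^{[2]}_{n+\frac12})\hat e_{n+\frac12}^\sigma+\tau\bigl(B_{n+\frac12}-B^{[2]}_{n+\frac12}\bigr)\hat\phi_{n+\frac12}^\sigma\Bigr],\qquad \hat e_{n+\frac12}^\sigma=U_{n+\frac12}e_n^\sigma .
\]
The two hierarchies share the same propagation and insertion rules, so the insertion rules \eqref{eq:second_order_midpoint_insertions} transport $e$ linearly, with factors $L$ and $L^2$ bounded by $C_L$ and $C_L^2$. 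The only source term is the truncation defect
\[
r_n^\sigma:=\tau\bigl(B_{n+\frac12}-B^{[2]}_{n+\frac12}\bigr)\hat\phi^\sigma_{n+\frac12}.
\]
This defect consists exactly of the terms $-\tau^2\alpha\,L^\dagger\hat\phi^{\sigma\cup(s)}$ for which $\sigma\cup(s)\notin\mathcal D^{[2]}_{n+1}$. There are two cases: either $\sigma$ already carries a repeated label and $s$ creates a second repeat, or $s$ equals the repeated label and produces a triple.

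The key step is to count how small these terms are once propagated down to the physical level. I will unfold the recursion as a sum over discrete pairing histories (the diagrammatic push-forward form). The error in $\psi_n^\emptyset$ is a sum over paths that start at $\emptyset$, perform insertions and pairings, and contain at least one transition passing through a non-retained configuration. Each pairing carries a factor $\tau^2\alpha$, but each insertion at a grid point is free, and the sum over its position contributes a factor of order $N\sim\tau^{-1}$. So a pairing with a freely chosen insertion label effectively costs $O(\tau)$. This is the same continuum-limit bookkeeping as in the first-order proof.

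\begin{itemize}
\item A configuration with a repeated label $(s,s)$ requires two insertions at the same grid point. That removes one free summation, so it costs an extra $\tau$ relative to a distinct-label path. This is exactly why the full path sum over single-repeat configurations is $O(\tau)$, and why the first-order scheme could drop them altogether.
\item Reaching a configuration outside $\mathcal D^{[2]}$ requires either two coincidences (two repeated labels) or a triple coincidence. Either way, two free summations are lost, giving a factor $\tau^2$.
\end{itemize}

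Assumption~\ref{assumption}(3) bounds every path weight by $C_\psi M^k$. Combined with the bounds on $\alpha$ and $L$, and with $\|U_{n+\frac12}\|\le e^{C_A\tau/2}$, summing over paths gives $\|\psi_n^\emptyset-\phi_n^\emptyset\|\le C_T\tau^2$. The sum over the hierarchy level $k$ is controlled by the factorial from ordered simplices against the geometric growth $M^k$, yielding an exponential-type constant $e^{C T}$.

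Concretely, I would first prove a lemma giving weighted bounds of the form
\[
\|\phi_n^\sigma\|\le C\,\tau^{-(\text{number of repeats})}\cdot(\dots)
\]
for the discrete full hierarchy, and then run a discrete Gr\"onwall argument on a level-weighted norm
\[
E_n:=\sum_k \rho^{-k}\,\tau^{\,c(\sigma)}\max_{|\sigma|=k}\|e_n^\sigma\|,
\]
where $c(\sigma)$ counts repeats. This makes rigorous the loss of one summation per coincidence. The main obstacle is this step: the naive per-level bound $\|r_n^\sigma\|\le C\tau^2$ summed over $n$ and over past label positions only yields $O(\tau)$ or worse, so the extra $\tau$ must be recovered from the repeated-label coincidence structure. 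I would try the path-sum representation first, since it makes the counting transparent, and fall back to the weighted Gr\"onwall argument if the combinatorics of the pairing sums becomes unwieldy.
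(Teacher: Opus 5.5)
\textbf{The mechanism you identify is right, but the argument is not closed.} What you get right is exactly what the paper exploits. Both defect cases require $\sigma$ to already carry a repeat, so $\tau(B_{n+\frac12}-B^{[2]}_{n+\frac12})\hat\phi^\sigma_{n+\frac12}$ vanishes on distinct configurations. On single-repeat configurations it is $O(\tau^2)$ per step, so the error there is only $O(\tau)$. The extra power at the physical level comes from the fact that a distinct $\sigma$ reaches the single-repeat states $\sigma\cup(s)$ only through the $k$ labels $s\in\sigma$, each with weight $\tau^2$, rather than through a quadrature sum of $O(\tau^{-1})$ terms. That is your ``lost free summation.''

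The gap is that you stop at the step you yourself call the main obstacle, and the concrete norm you write down would not close.
(i) With $\sum_k$ instead of $\max_k$, insertion is not contractive. An error at level $k-1$ spawns a copy at level $k$, so a single step can multiply $E_n$ by $1+\rho^{-1}C_L$, and no recursion $E_{n+1}\le(1+C\tau)E_n+\cdots$ is available.
(ii) With the same level weight $\rho^{-k}$ on distinct and repeated configurations, consider the coupling $\tau^2L^\dagger\sum_{t_{j+\frac12}\in\sigma}\alpha_{n+\frac12,j+\frac12}\hat e^{\sigma\cup(t_{j+\frac12})}$. Once the $\tau$ weight on the repeated state is undone, it contributes $\tau k\rho\,C_LC_\alpha E_n$, and $k$ can be of order $T/\tau$.
(iii) The lemma $\|\phi_n^\sigma\|\le C\tau^{-c(\sigma)}$, as written, points the wrong way. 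The defect on a single-repeat $\sigma$ evaluates $\hat\phi$ on double-repeat and triple configurations, and these must be bounded independently of $\tau$ (the level-wise bound $C_\phi M^{k+1}$ from Assumption~\ref{assumption}(3)). Growth like $\tau^{-2}$ there would make the defect $O(1)$ per step.
(iv) The path-sum route is a heuristic as stated. Assumption~\ref{assumption}(3) bounds states, not individual path weights, so convergence would need an explicit count (pairings of $2m$ ordered events against $\binom{n}{2m}$ choices of times), which you assert but do not carry out.

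\textbf{How the paper closes it.} The paper handles (ii) by running two discrete Gr\"onwall arguments in sequence.

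First, it uses one weight $r_0^k$ (with $r_0M<1$, $r_0C_L\le1$) and $\max_k$ over all of $\mathcal D_n^{[2]}$. This gives $E_{n+1}\le(1+C_2\tau)E_n+C_3\tau^2$, hence $E_n\le C_E\tau$. This is your ``naive'' bound, and it is all that is needed on repeated configurations.

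Second, on distinct configurations only, it uses a strictly smaller weight $r_1<r_0$. There the defect is zero, and the $k$ couplings to repeated states are bounded by $C_LC_\alpha\tau^2k\,r_1^kr_0^{-(k+1)}\widehat E_{n+\frac12}\le C_4C_E\tau^3$. The factor $k$ is absorbed because $\sup_k k(r_1/r_0)^k<\infty$, and stage one supplies $\widehat E_{n+\frac12}=O(\tau)$. Hence $F_{n+1}\le(1+C_5\tau)F_n+C_6\tau^3$ and $F_n\le C_T\tau^2$.

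The sequencing matters. Merging the two stages into one norm with weights $r_1^k$ and $\tau r_0^k$ fails at the double insertion $\eta\mapsto\eta\cup(t_{n+\frac12},t_{n+\frac12})$, which amplifies the error by $\tau(r_0C_L)^2(r_0/r_1)^{k-2}$.

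If you want a one-pass version of your $\tau^{c(\sigma)}$ idea, the following should work. Take a $\max$ over configurations, with weight $r^k$ on distinct and $\tau(k+1)r^k$ on single-repeat configurations. Choose $rM<1$ and $rC_L$ small enough (depending on $T$) that $\tau(k+1)(rC_L)^2\le1$ on all admissible levels. Then each transition closes with a factor $1+O(\tau)$ and the only source is $O(\tau^3)$.
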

The proof is given in \ref{app:second_order_consistency} of the supplementary material. Theorem~\ref{thm:second_order_retained_memory_consistency} shows that replacing the full memory operator \(B_{n+\frac12}\) by its retained counterpart \(B_{n+\frac12}^{[2]}\) introduces only an
\(\mathcal O(\tau^2)\) error in the physical component. Hence the retained
hierarchy preserves the second-order accuracy of the full discrete scheme at
the physical level.

\subsubsection{Diagrammatic Implementation}
The second-order scheme uses the same elementary diagrammatic operations as
the first-order scheme, but extends the notation by representing a repeated
time label with a double circle at the corresponding time point. Accordingly,
the auxiliary state \(\psi_t^\sigma\) is represented by
\begin{equation}
\psi_t^{\sigma} = 
\begin{tikzpicture}[baseline=(img.base)]
\node[inner sep=0pt] (img) {\adjustbox{valign=c}{\includegraphics[scale=0.8]{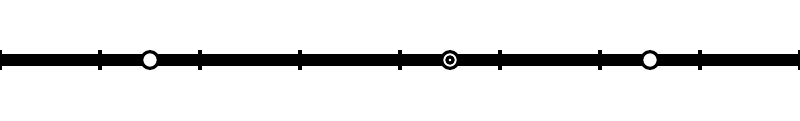}}};
    
\node[below=0pt, font=\small] at (img.south west) {$0$};
    \node[below=0pt, font=\small] at (img.south east) {$t$};
    
\path (img.south west) -- (img.south east) 
        coordinate[pos=0.1875] (s1)
        coordinate[pos=0.375] (sc1)
        coordinate[pos=0.5625] (s2)
        coordinate[pos=0.7] (sc)
        coordinate[pos=0.8125] (s3);
        
\node[below=2pt, font=\small, text=black] at (s1) {$s_1$};
    \node[below=2pt, font=\small, text=black] at (sc1) {$\cdots$};
    \node[below=2pt, font=\small, text=black] at (s2) {$s_k,s_{k+1}$};
    \node[below=2pt, font=\small, text=black] at (sc) {$\cdots$};
    \node[below=2pt, font=\small, text=black] at (s3) {$s_n$};
\end{tikzpicture}
\end{equation}
where the circles \(\circ\) denote the labels contained in \(\sigma\),
and the double circle \(\circledcirc\) marks the repeated label, if
present.

Following the Strang splitting in
\eqref{eq:second_order_strang_splitting}, each diagrammatic transition is
placed between two half-step stochastic propagators. We write
\begin{equation}
    U_R = U_L=
    \exp\left(
    \frac{\tau}{2}A_{n+\frac12}
    \right),
\end{equation}
so that every local operation takes the form
\[
    U_L\,(\text{local operation})\,U_R .
\]
In implementation, one may use any symmetric choice of the two half-step
propagators, which is sufficient to preserve second-order accuracy.

For a given auxiliary state \(\psi_n^\sigma\) at time \(t_n\), the
second-order diagrammatic update is built from the following elementary
transitions:

\begin{itemize}
    
    \item \textbf{Propagation:} Contribution to $\psi_{n+1}^{\sigma}$.
\begin{flalign}&
\begin{alignedat}{2}
    &\text{Rule}\qquad
    && \psi^{\sigma}_{n+1}
    \;\leftarrow\; U_L U_R \,\psi^{\sigma}_n
    \\
    &\text{Example}&&\\[-1ex]
    &\quad \sigma=\emptyset:\qquad
    && \adjustbox{valign=c}{\includegraphics[scale=0.6]{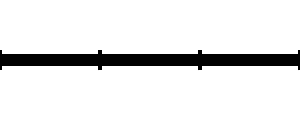}}
    \;=\; \adjustbox{valign=c}{\includegraphics[scale=0.6]{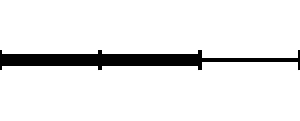}}
    + \cdots
\end{alignedat}&&
\end{flalign}
    The configuration \(\sigma\) is unchanged, while the auxiliary state is
    propagated through the two half-step propagators.

    \item \textbf{Insertion:} New circles generated at \(t_{n+\frac12}\).
    \begin{itemize}
        \item[(i)] \textbf{Single insertion:} Contribution to
        \(\psi_{n+1}^{\sigma\cup(t_{n+\frac12})}\).
    \begin{flalign}&
    \begin{alignedat}{2}
        &\text{Rule}\qquad
        && \psi_{n+1}^{\sigma\cup(t_{n+\frac12})}
        \;\leftarrow\;
        U_L \left[ L_{n+\frac12} \right] U_R \,\psi^{\sigma}_n
        \\
        &\text{Example}&&\\[-1ex]
        &\quad \sigma=\emptyset:\qquad
        && \adjustbox{valign=c}{\includegraphics[scale=0.6]{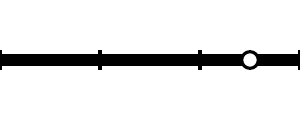}}
        \;=\; \adjustbox{valign=c}{\includegraphics[scale=0.6]{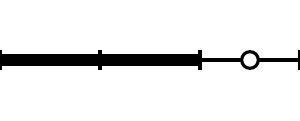}}
        + \cdots
    \end{alignedat}&&
    \end{flalign}
    This is the direct contribution generated by a single insertion of the
current label \(t_{n+\frac12}\). Additional contributions involving this
inserted label arise from memory pairings described below.

                \item[(ii)] \textbf{Double insertion:} Contribution to
        \(\psi_{n+1}^{\sigma\cup(t_{n+\frac12},t_{n+\frac12})}\).
        \begin{flalign}&
        \begin{alignedat}{2}
            &\text{Rule}\qquad
            && \psi_{n+1}^{\sigma\cup(t_{n+\frac{1}{2}}, t_{n+\frac{1}{2}})}
            \;=\; U_L \left[ L^2_{n+\frac{1}{2}} \right] U_R \,\psi^{\sigma}_n
            \\
            &\text{Example}&&\\[-1ex]
            &\quad \sigma=\emptyset:\qquad
            && \adjustbox{valign=c}{\includegraphics[scale=0.6]{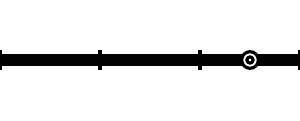}}
            \;=\; \adjustbox{valign=c}{\includegraphics[scale=0.6]{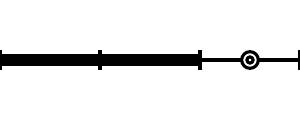}}
        \end{alignedat}&&
        \end{flalign}
                This contribution generates a double circle at the current time,
                when it is admissible in \(\mathcal D_{n+1}^{[2]}\).
    \end{itemize}
    \item \textbf{Pairing:} Memory contractions.
    \begin{itemize}
        \item[(i)] \textbf{Ordinary pairing:} Contribution to
        \(\psi_{n+1}^{\sigma\setminus(s)}\) for each distinct label \(s\in\sigma\).
        \begin{flalign}&
        \begin{alignedat}{2}
            &\text{Rule}\qquad
            && \psi^{\sigma\setminus(s)}_{n+1}
            \;\leftarrow\;
            U_L\left[-\tau^2\alpha_{n+\frac12,s}L^\dagger_{n+\frac12}\right]U_R\psi^\sigma_n
            \\
            &\text{Examples}&&
            \\[-1ex]
            &\sigma=(t_\frac32),\ s= t_\frac32:\quad
            && \adjustbox{valign=c}{\includegraphics[scale=0.6]{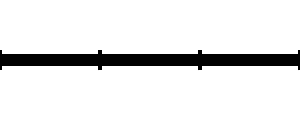}}
            \;=\;
            \adjustbox{valign=c}{\includegraphics[scale=0.6]{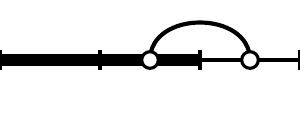}}
            +\cdots
            \\[-3ex]
            &\sigma=(t_\frac32,t_\frac32),\ s= t_\frac32:\quad
            && \adjustbox{valign=c}{\includegraphics[scale=0.6]{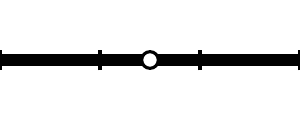}}
            \;=\;
            \adjustbox{valign=c}{\includegraphics[scale=0.6]{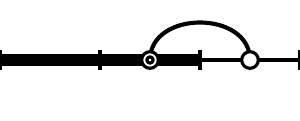}}
            +\cdots
        \end{alignedat}&&
        \end{flalign}
        The label \(s\) is contracted with the current point.
        The resulting contribution is accumulated into the configuration
        \(\sigma\setminus(s)\). If \(s\) appears twice in \(\sigma\), this contraction is counted only once and removes one copy.
        
        \item[(ii)] \textbf{Mixed insertion--pairing:} Contribution to
        \(\psi_{n+1}^{\sigma\setminus(s)\cup(t_{n+\frac12})}\) for each
        \(s\in\sigma\).
        \begin{flalign}&
        \begin{alignedat}{2}
            &\text{Rule}\qquad
            && \psi_{n+1}^{\sigma\setminus(s)\cup(t_{n+\frac12})}
            \;\leftarrow\;
            U_L\left[-\tau^2\alpha_{n+\frac12,s}
            L_{n+\frac12}L^\dagger_{n+\frac12}\right]U_R\psi^\sigma_n
            \\
            &\text{Example}&&
            \\[-1ex]
            &\sigma=(t_\frac32),\ s=t_\frac32:\quad
            && \adjustbox{valign=c}{\includegraphics[scale=0.6]{figure/diagram/add/nmqsd_second_order_def-4.pdf}}
            \;=\;
            \adjustbox{valign=c}{\includegraphics[scale=0.6]{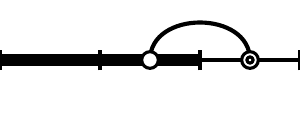}}
            +\cdots
        \end{alignedat}&&
        \end{flalign}
        Here a memory pairing first contracts a past label
        \(s\in\sigma\), and the current label \(t_{n+\frac12}\) is then
        inserted into the remaining configuration.

        \item[(iii)] \textbf{Self-pairing:} Contribution to
        \(\psi_{n+1}^{\sigma}\).
        \begin{equation}
        \begin{alignedat}{2}
            &\text{Rule}\qquad
            && \psi^{\sigma}_{n+1}
            \;\leftarrow\;
            U_L\left[-\frac{\tau^2}{2}\alpha_{n+\frac12,n+\frac12}
            L^\dagger_{n+\frac12}L_{n+\frac12}\right]U_R\psi^\sigma_n
            \\
            &\text{Example}&&
            \\[-1ex]
            &\quad\sigma=\emptyset:\quad
            && \adjustbox{valign=c}{\includegraphics[scale=0.6]{figure/diagram/add/nmqsd_second_order_def-2.pdf}}
            \;=\;
            \adjustbox{valign=c}{\includegraphics[scale=0.6]{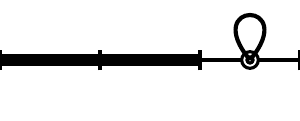}}
            +\cdots
        \end{alignedat}
        \end{equation}
        This term represents the local self-pairing of the current insertion.

    \end{itemize}
\end{itemize}

The algorithmic structure of the second-order diagrammatic scheme is
summarized in Algorithm~\ref{alg:second_order_nmqsd}, and a fully
expanded one-step example is provided in
the supplementary material. In practical implementations,
the distinct labels can still be stored by the same binary representation
as in the first-order scheme, while the possible repeated time label is recorded
separately by an additional marker. The empty marker corresponds to
configurations with distinct labels, and a nonempty marker records the
repeated label.
\begin{algorithm}[htbp]
\caption{Second-Order Hierarchical Evolution Step}
\label{alg:second_order_nmqsd}
\begin{algorithmic}[1]

\Require Auxiliary states
\(\{\psi^\sigma_n\}_{\sigma\in\mathcal D_n^{[2]}}\) at time \(t_n\),
stochastic noise \(z_{n+\frac12}\), time step \(\tau\), correlation
function \(\alpha(t,s)\).

\Ensure Auxiliary states
\(\{\psi^\sigma_{n+1}\}_{\sigma\in\mathcal D_{n+1}^{[2]}}\) at time
\(t_{n+1}\).

\State \textbf{Preprocessing:}
\(
    U_R = \exp\left(\frac{\tau}{2}A_{n+\frac12}\right),
    \qquad
    U_L = \exp\left(\frac{\tau}{2}A_{n+\frac12}\right).
\)

\State Initialize target states:
\(\psi^\sigma_{n+1}= 0\), for all \(\sigma\in\mathcal D_{n+1}^{[2]}\).

\For{each active configuration \(\sigma\in\mathcal D_n^{[2]}\)}

    \State \textbf{1. Propagation and self-pairing:}
    \State
    \[
    \psi^\sigma_{n+1} \mathrel{+\!=} U_L
    \left(
    I-\frac{\tau^2}{2}\alpha_{n+\frac12,n+\frac12}L_{n+\frac12}^\dagger L_{n+\frac12}
    \right)
    U_R\psi^\sigma_n .
    \]

    \State \textbf{2. Insertion:}

    \State \textbf{(i) Single insertion:}
    \(
    \psi^{\sigma\cup(t_{n+\frac12})}_{n+1}
    \mathrel{+\!=}
    U_L L_{n+\frac12} U_R\psi^\sigma_n .
    \)

    \If{\(\sigma\cup(t_{n+\frac12},t_{n+\frac12})
    \in\mathcal D_{n+1}^{[2]}\)}
        \State \textbf{(ii) Double insertion:}
        \(
        \psi^{\sigma\cup(t_{n+\frac12},t_{n+\frac12})}_{n+1}
        \mathrel{+\!=}
        U_L L_{n+\frac12}^2 U_R\psi^\sigma_n .
        \)
    \EndIf

    \State \textbf{3. Pairing:}

    \For{each distinct label \(s\) appearing in \(\sigma\)}

        \State \textbf{(i) Ordinary pairing:}
        \(
        \psi^{\sigma\setminus(s)}_{n+1}
        \mathrel{+\!=}
        U_L
        \left(
        -\tau^2
        \alpha_{n+\frac12,s}
        L_{n+\frac12}^\dagger
        \right)
        U_R\psi^\sigma_n .
        \)

        \State \textbf{(ii) Mixed insertion--pairing:}
        \State
        \[
        \psi^{\sigma\setminus(s)\cup(t_{n+\frac12})}_{n+1}
        \mathrel{+\!=}
        U_L
        \left(
        -\tau^2
        \alpha_{n+\frac12,s}
        L_{n+\frac12}L_{n+\frac12}^\dagger
        \right)
        U_R\psi^\sigma_n .
        \]

    \EndFor

\EndFor

\State \Return
\(\{\psi^\sigma_{n+1}\}_{\sigma\in\mathcal D_{n+1}^{[2]}}\).

\end{algorithmic}
\end{algorithm}

Compared with the first-order scheme, the second-order construction extends
the retained configuration set by allowing one repeated label and places all
local operations between the two symmetric half-step propagations. Together
with the insertion rules for new and repeated labels, these diagrammatic
rules give a closed update for all retained auxiliary states from \(t_n\) to
\(t_{n+1}\).

\section{Numerical Truncation and Complexity}
\label{sec:truncation_complexity}

The discrete auxiliary-state hierarchy constructed above grows rapidly with
the number of time steps. To obtain a tractable numerical method, we introduce
two truncation strategies: memory truncation and hierarchy-level truncation.

\begin{itemize}
    \item \textbf{Memory truncation.}
    Since the bath correlation function \(\alpha(t,s)\) decays as
    \(|t-s|\) increases, the memory contribution can be restricted to a
    finite sliding window. Let \(K_T=K\tau\) denote the memory length, with \(K\) the number of
time steps in this window. At time \(t_n\), only labels satisfying
    \[
        s\in [t_n-K_T,t_n]
    \]
    are retained in the configurations, while older labels are discarded.

    \item \textbf{Hierarchy-level truncation.}
    To further control the number of auxiliary states, we restrict the
    maximum hierarchy level by imposing
    \[
        |\sigma|\le M.
    \]
    Here \(|\sigma|\) denotes the number of unpaired time labels in the
auxiliary configuration. Each memory pairing carries a factor bounded by
\(\tau^2C_\alpha C_L^2\). Hence a configuration with large \(|\sigma|\)
can contribute to lower-level states only after multiple pairings, and is
therefore a higher-order contribution in the discrete memory expansion.
We retain only configurations with \(|\sigma|\le M\).
\end{itemize}

We now estimate the storage and computational cost of the truncated hierarchy.
Let \(\mathcal N_1\) and \(\mathcal N_2\) denote the number of retained
configurations per trajectory for the first- and second-order schemes,
respectively, and let \(N_{\rm traj}\) be the number of stochastic trajectories.
The computational cost per time step is summed over all trajectories.

The storage cost is determined by the number of admissible label
configurations. Suppose that \(q\) time labels are available. In the
first-order scheme, each label is either included or excluded, so the
untruncated configuration count is
\[
   \mathcal{N}_1(q) = \sum_{m=0}^{q}\binom{q}{m}=2^q.
\]
Memory truncation replaces \(q=n\) by the fixed memory depth \(q=K\).
Hierarchy-level truncation further replaces the full subset count by
\[
    \mathcal{N}_1(K,M) = \sum_{m=0}^{M}\binom{K}{m} = \mathcal{O}(K^M),
    \qquad M\ll K.
\]
For the second-order scheme, a distinct configuration with \(m\) labels gives
\(m+1\) retained configurations: the distinct configuration itself and the
\(m\) choices obtained by repeating one of its labels. Thus the same
replacement gives the corresponding second-order count
\[
    \mathcal{N}_2(K,M) = \sum_{m=0}^{M}(m+1)\binom{K}{m} = \mathcal{O}(MK^M),\qquad M\ll K.
\]
The computational cost is obtained by multiplying the number of retained
configurations by the number of transitions generated from each configuration
and by \(N_{\rm traj}\). The resulting storage and computational costs are
summarized in Table~\ref{tab:complexity}.

\begin{table}[htbp]
\centering
\caption{Storage and computational costs before and after truncation.
Here \(n\) is the total number of time steps, \(K\) is the retained memory
depth in time steps, \(M\) is the maximum hierarchy level, and
\(N_{\rm traj}\) is the number of stochastic trajectories. }
\label{tab:complexity}
\begin{tabular}{ccccc}
\toprule
 & Scheme & No trunc. & Trunc. \(K\) & Trunc. \(K,M\) \\
\midrule
\multirow{2}{*}{Storage}
& First-order
& \(\mathcal O(2^n)\)
& \(\mathcal O(2^K)\)
& \(\mathcal O(K^M)\)
\\
& Second-order
& \(\mathcal O(n2^n)\)
& \(\mathcal O(K2^K)\)
& \(\mathcal O(MK^M)\)
\\
\midrule
\multirow{2}{*}{Cost}
& First-order
& \(\mathcal O(N_{\rm traj}n2^n)\)
& \(\mathcal O(N_{\rm traj}K2^K)\)
& \(\mathcal O(N_{\rm traj}MK^M)\)
\\
& Second-order
& \(\mathcal O(N_{\rm traj}n^2 2^n)\)
& \(\mathcal O(N_{\rm traj}K^2 2^K)\)
& \(\mathcal O(N_{\rm traj}M^2K^M)\)
\\
\bottomrule
\end{tabular}
\end{table}
Memory truncation replaces the exponentially growing history of time labels by a fixed memory
depth \(K\). Hierarchy-level truncation further reduces the configuration count
to polynomial scaling in \(K\) for fixed \(M\). Although the second-order scheme
has a larger computational cost per step at the same \(K\) and \(M\), its higher
temporal accuracy can reduce the overall cost when a prescribed accuracy is
required.

The table also clarifies the difference in configuration counts between the
present stochastic hierarchy and density-matrix-based formulations. In a
density-matrix formulation, memory configurations are typically tracked on
both sides of the density matrix. Without truncation, this gives a
first-order configuration count of order
\( \mathcal O(2^{2n})\), and the corresponding update costs \(\mathcal O(n2^{2n})\)
per time step. For long-time evolution, where \(n\) becomes large, the stochastic formulation avoids the additional exponential factor associated with configurations on both sides of the density matrix, although the total cost also scales with the number of stochastic trajectories \(N_{\rm traj}\).

\section{Numerical Results}
\label{sec:results}
In this section, we present numerical experiments to evaluate the performance 
of the proposed hierarchical NMQSD schemes. 
These experiments examine the accuracy, convergence behavior, and
truncation effects of the proposed schemes for non-Markovian open quantum
dynamics.

To quantify the numerical accuracy, we measure the error of observables 
$\langle O \rangle_t$ using the $L^2$-norm
\begin{equation}
e =  
\left\| \langle O \rangle_t 
- \langle O \rangle_t^{\mathrm{ref}} \right\|_{L^2([0,T])},
\end{equation}
where 
\begin{equation}
\|f\|_{L^2([0,T])} = \sqrt{\int_0^T |f(t)|^2 \mathrm{d}t},\quad 
\langle O \rangle_t = \mathrm{Tr}[\rho(t) O] = \mathbb{E}[\langle \psi_t | O | \psi_t \rangle].
\end{equation}
Here $\langle O \rangle^{\mathrm{ref}}$ denotes either an analytical solution, 
when available, or a sufficiently resolved numerical reference solution.
Specific details are provided in section~\ref{sec:supp-experimental-settings}
of the supplementary material.

\subsection{Spin--boson Model}

The spin--boson model serves as a standard benchmark for open quantum systems, 
describing a two-level system coupled to a bosonic environment \cite{leggett_dynamicsdissipativetwostate_1987}.
The system Hamiltonian is 
\begin{equation}
H = \epsilon \sigma_z + \Delta \sigma_x,
\end{equation}
where $\epsilon$ is the energy bias and $\Delta$ is the tunneling amplitude. 
The system--bath coupling operator is $L = \lambda \sigma_z$. 
Here $\sigma_x$, $\sigma_y$, and $\sigma_z$ denote the Pauli matrices:\begin{equation}
\sigma_x = \begin{pmatrix}
    0 & 1\\
    1 & 0
\end{pmatrix},\quad
\sigma_y = \begin{pmatrix}
    0 & -i\\
    i & 0
\end{pmatrix},\quad
\sigma_z = \begin{pmatrix}
    1 & 0\\
    0 & -1
\end{pmatrix}.
\end{equation}

\subsubsection{Exponential Bath}
\label{sec:exp_bath}

We consider an environment with an exponential memory kernel
\(
\alpha(t,s) = \frac{\gamma}{2} e^{-\gamma |t-s|}.
\)
In the pure dephasing limit ($\Delta=0$), this model admits an analytical solution. The density matrix evolves as
\begin{equation}
    \rho(t) = \begin{pmatrix}
        \rho_{11}(0) & \rho_{12}(0)e^{-F(t)}\\
        \rho_{21}(0)e^{-F(t)^*} & \rho_{22}(0)
    \end{pmatrix},
\end{equation}
with
\(
F(t) = 2\mathrm{i}\epsilon t + \frac{2\lambda^2}{\gamma}(e^{-\gamma t} + \gamma t - 1).
\)

\begin{example}[Convergence test with exponential bath]
In this example, we investigate the numerical convergence 
of the proposed numerical schemes. 
The  model parameters are
\begin{equation}
    H = \frac{1}{2}\sigma_z, \quad L = \sqrt{2}\,\sigma_z, \quad \alpha(t,s) = \frac{1}{2} e^{-|t-s|}.
\end{equation}
The initial state is chosen as
\(
\ket{\psi_0} = \frac{1+2i}{\sqrt{7}}\ket{\uparrow} 
+ \frac{1+i}{\sqrt{7}} \ket{\downarrow}.
\)
The reference observables are given by
\begin{equation}
\langle \sigma_x \rangle_t = 2\mathrm{Re}(\rho_{12}(0)e^{-F(t)}),\
\langle \sigma_y \rangle_t = -2\mathrm{Im}(\rho_{12}(0)e^{-F(t)}),\
\langle \sigma_z \rangle_t = \rho_{11}(0) - \rho_{22}(0),
\end{equation}
where $F(t) = \mathrm{i}t + 4(e^{-t} + t - 1)$.
The truncation parameters are set to $K_T = 1$ and $M = 2$.

\label{eg:exp_bath}
\end{example}

For the convergence test, we fix the number of trajectories to 
$N_{\mathrm{traj}} = 10^6$ and vary the time step $\tau$ 
from $0.5$ to $0.1$. The numerical results for the exponential bath model are presented in 
Figs.~\ref{fig:exp_dynamics}--\ref{fig:exp_convergence}.

\begin{figure}[htbp]
    \centering
    \includegraphics[width=1.0\textwidth]{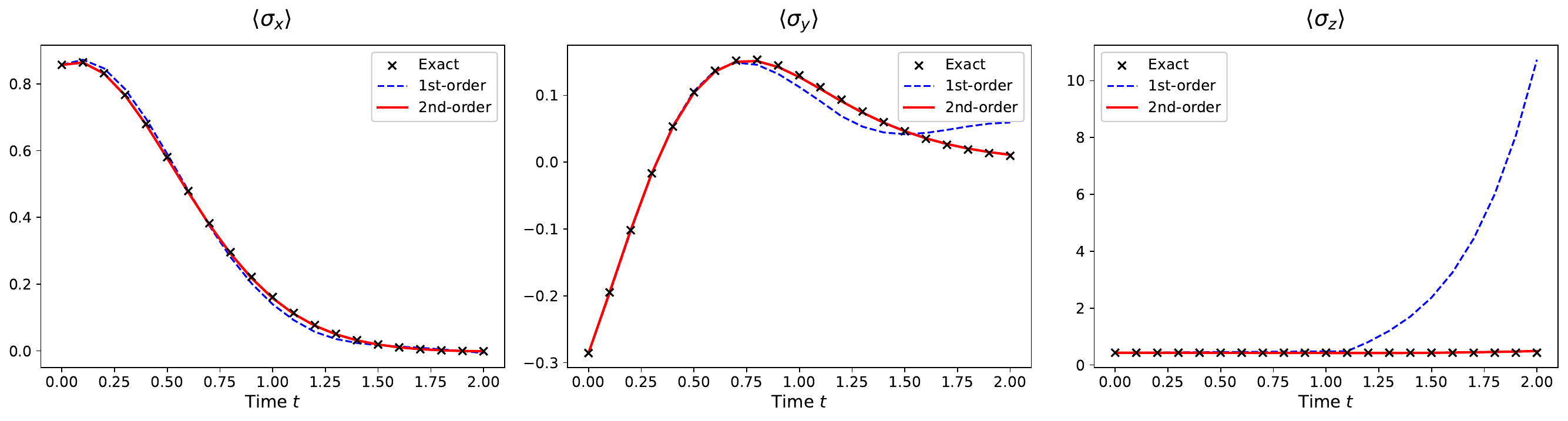}
    \caption{Spin dynamics for the exponential bath (Example \ref{eg:exp_bath}). Time evolution of $\langle \sigma_x \rangle$, $\langle \sigma_y \rangle$, and $\langle \sigma_z \rangle$ with $\tau = 0.1$. The first-order scheme (blue dashed) and the second-order scheme (red solid) are compared with the exact solution (cross markers).}
    \label{fig:exp_dynamics}
\end{figure}

Figure~\ref{fig:exp_dynamics} compares the time evolution of 
$\langle \sigma_x \rangle$, $\langle \sigma_y \rangle$, and $\langle \sigma_z \rangle$ 
for the first- and second-order schemes against the analytical solution. 
At this time step, the first-order scheme shows visible numerical
deviations from the analytical solution, especially in
$\langle \sigma_z \rangle$ and $\langle \sigma_y \rangle$.
In contrast, the second-order scheme agrees well with the reference
dynamics for all three observables, showing improved accuracy at the
same time step.

\begin{figure}[htbp]
    \centering
    \includegraphics[width=0.9\textwidth]{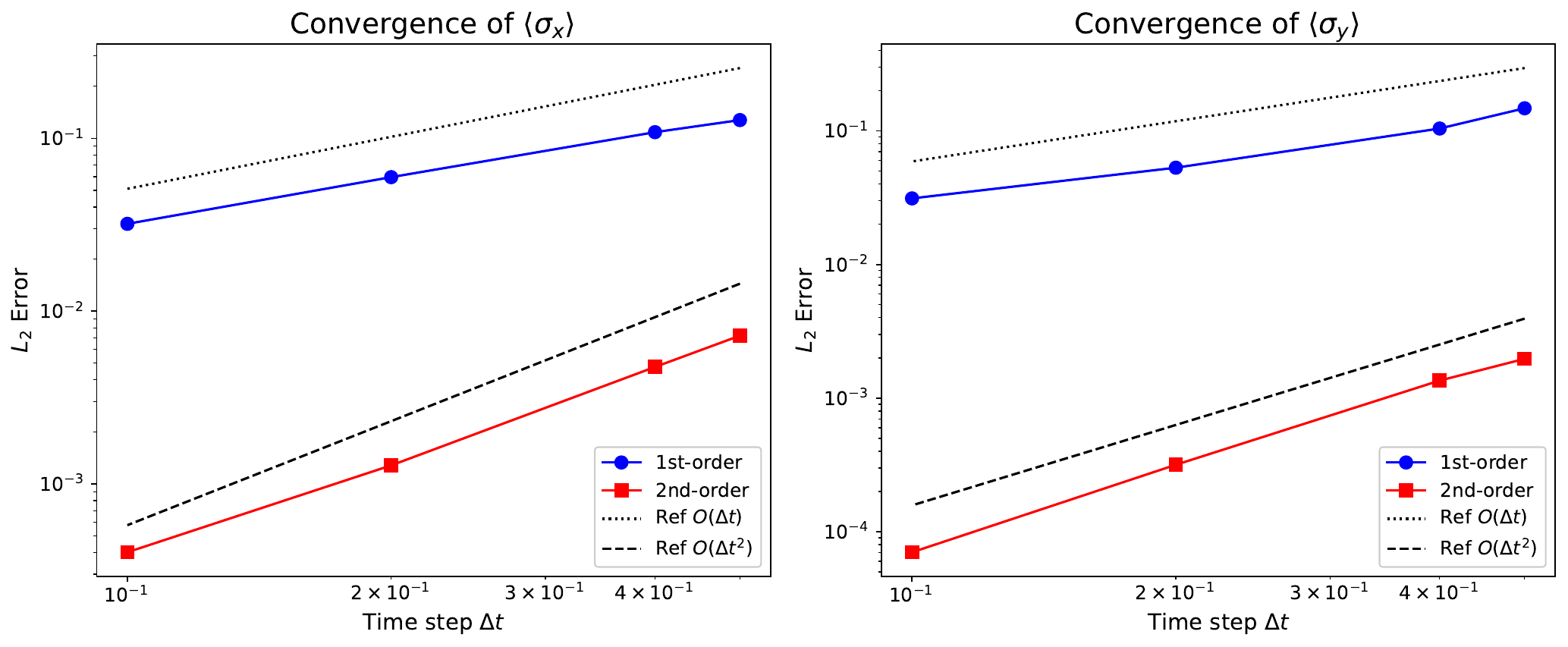}
    \caption{ Convergence test of $\langle \sigma_x \rangle$ and $\langle \sigma_y \rangle$ versus time step $\tau$ (Example \ref{eg:exp_bath}).}
    \label{fig:exp_convergence}
\end{figure}
The convergence behavior is shown in Fig.~\ref{fig:exp_convergence}. 
For $\langle \sigma_x \rangle$ and $\langle \sigma_y \rangle$, 
the first-order scheme achieves observed convergence rates 
of approximately $0.86$ and $0.95$, while the second-order scheme 
gives rates of $1.80$ and $2.09$, respectively. 
These observed rates are broadly consistent with the expected first- and
second-order accuracy.

Moreover, over the tested range of time steps, the second-order scheme
reduces the global error by about two orders of magnitude compared with
the first-order scheme.
Since the computational cost grows rapidly with the memory length $K = K_T/\tau$,
this reduction in time-discretization error is important in practice,
as it allows accurate results to be obtained with relatively coarse time
steps.

Figure~\ref{fig:exp_traj_sensitivity} illustrates the effect of the number of trajectories 
on the ensemble average. 
For the time-dependent observables $\langle \sigma_x \rangle$ and
$\langle \sigma_y \rangle$, $N_{\mathrm{traj}} = 10^3$ already gives
ensemble averages close to the analytical reference.
The constant observable $\langle \sigma_z \rangle$ shows slower convergence,
but the sampling error is substantially reduced when $N_{\mathrm{traj}}$
is increased to $10^5$.
\begin{figure}[htbp]
    \centering
    \includegraphics[width=1.0\textwidth]{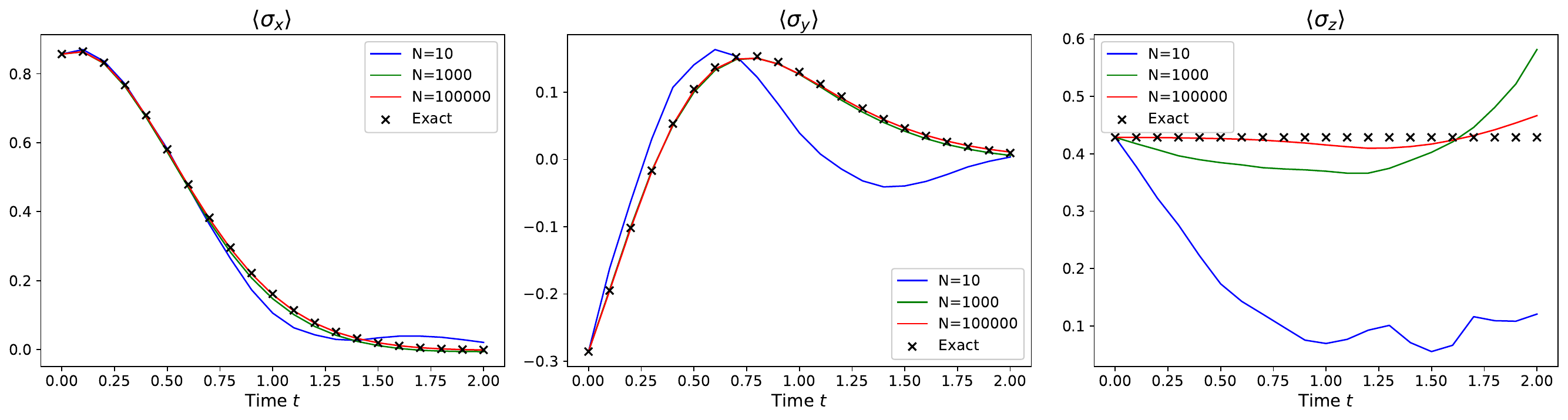}
    \caption{Ensemble averages of spin observables calculated with $N_{\mathrm{traj}} = 10, 10^3, 10^5$ and $\tau = 0.1$ (Example \ref{eg:exp_bath}). The lines show the stochastic averages, and the cross markers denote the analytical reference.}
    \label{fig:exp_traj_sensitivity}
\end{figure}

\subsubsection{Ohmic Bath}\label{sec:ohmic_bath}

Next, we consider a more challenging environment characterized by an Ohmic spectral density, 
whose bath correlation function is difficult to approximate
accurately by a short sum of exponentials.
The bath correlation function is given by
\begin{equation}\alpha(t,s) = \int_0^\infty \frac{J(\omega)}{\pi} \left[\coth\left(\frac{\beta\omega}{2}\right)\cos(\omega(t-s)) - \mathrm{i}\sin(\omega(t-s))\right] \mathrm{d} \omega,\end{equation}
where $J(\omega) = \frac{\pi}{2} \xi\omega e^{-\omega/\omega_c}$ denotes the Ohmic spectral density with cutoff frequency  $\omega_c$ \cite{caldeira_quantumtunnellingdissipative_1983}.  
For numerical simulation, the continuous spectrum is discretized using $L = 400$ modes, 
\begin{equation}
    J(\omega) \approx \frac{\pi}{2}\sum_{l=1}^L \frac{c_l^2}{\omega_l} \delta(\omega - \omega_l),
\end{equation}
where the coupling strengths $c_l$ and frequencies $\omega_l$ are given by
\begin{equation}c_l = \omega_l\sqrt{\frac{\xi \omega_c}{L}(1-e^{-\omega_{\max}/\omega_c})}, \quad \omega_l = -\omega_c \ln\left[1-\frac{l}{L}(1-e^{-\omega_{\max}/\omega_c})\right].
\end{equation}
Here $\xi$ is the dimensionless coupling strength, and  
$\omega_{\max} = 4\omega_c$.

\begin{example}[Dynamics and convergence test with Ohmic bath]
    In this example, we simulate the non-Markovian dynamics of the spin-boson model with an 
    Ohmic bath. 
    The system is defined by
    \begin{equation}
        \begin{aligned}
                    H &= \epsilon \sigma_z + \Delta \sigma_x, \qquad L = \sigma_z, 
        \\ \alpha(t,s) &= \sum_{j=1}^L \frac{c_j^2}{2\omega_j}\left(
\coth\left(\frac{\beta\omega_j}{2}\right)\cos(\omega_j(t-s)) - \mathrm{i}\sin(\omega_j(t-s))
        \right)
        \end{aligned}
    \end{equation}
    The initial state is chosen as $\ket{\psi_0} = (1, 0)^\top$.
Under the parameter settings $\omega_c = 2.5\Delta$, $\xi = 0.2$, and $\beta = 5\Delta^{-1}$, we simulate the population difference $\langle \sigma_z \rangle_t$ for energy biases $\epsilon \in \{0, \Delta, 2\Delta\}$.
The truncation parameters are set to $K_T = 1$ and $M = 2$.
    \label{ex:ohmic}
\end{example}

The convergence behavior is shown in Fig.~\ref{fig:ohmic_conv}, 
where the $L^2$ error of $\langle \sigma_z \rangle_t$ 
is plotted against the time step $\tau$. 
In all cases, the error curves follow the reference slope
$\mathcal{O}(\tau^2)$, indicating the expected second-order convergence
behavior for the ensemble-averaged observable.
This result shows that the proposed scheme is accurate for the tested
Ohmic bath.

\begin{figure}[htbp]
    \centering
    \includegraphics[scale=0.3]{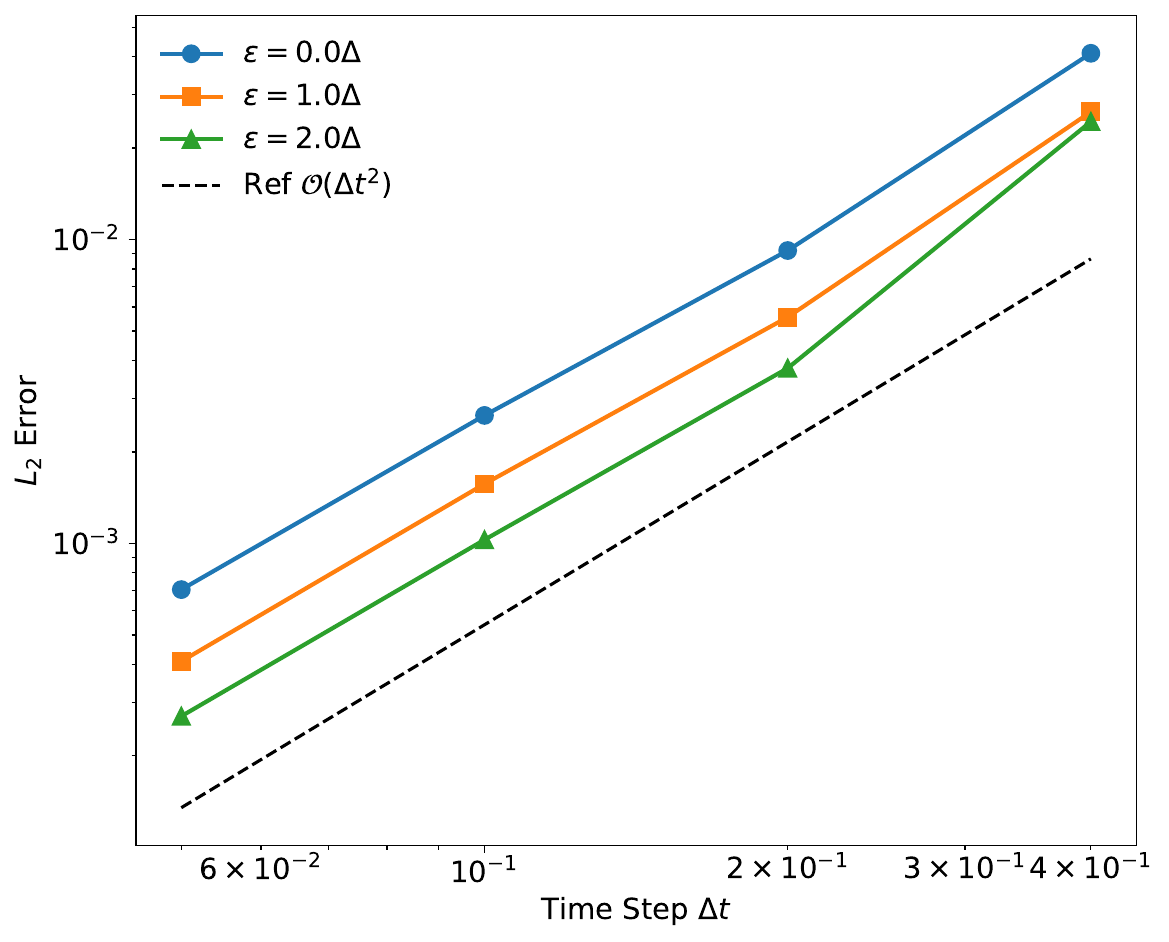}
    \caption{The $L_2$ error of $\langle \sigma_z \rangle_t$ versus the time step $\tau$ for Example \ref{ex:ohmic} with $\epsilon \in \{0, \Delta, 2\Delta\}$. The dashed line indicates the reference slope $\mathcal{O}(\tau^2)$.}
    \label{fig:ohmic_conv}
\end{figure}

\begin{figure}[htbp]
    \centering
    \includegraphics[width=1.0\textwidth]{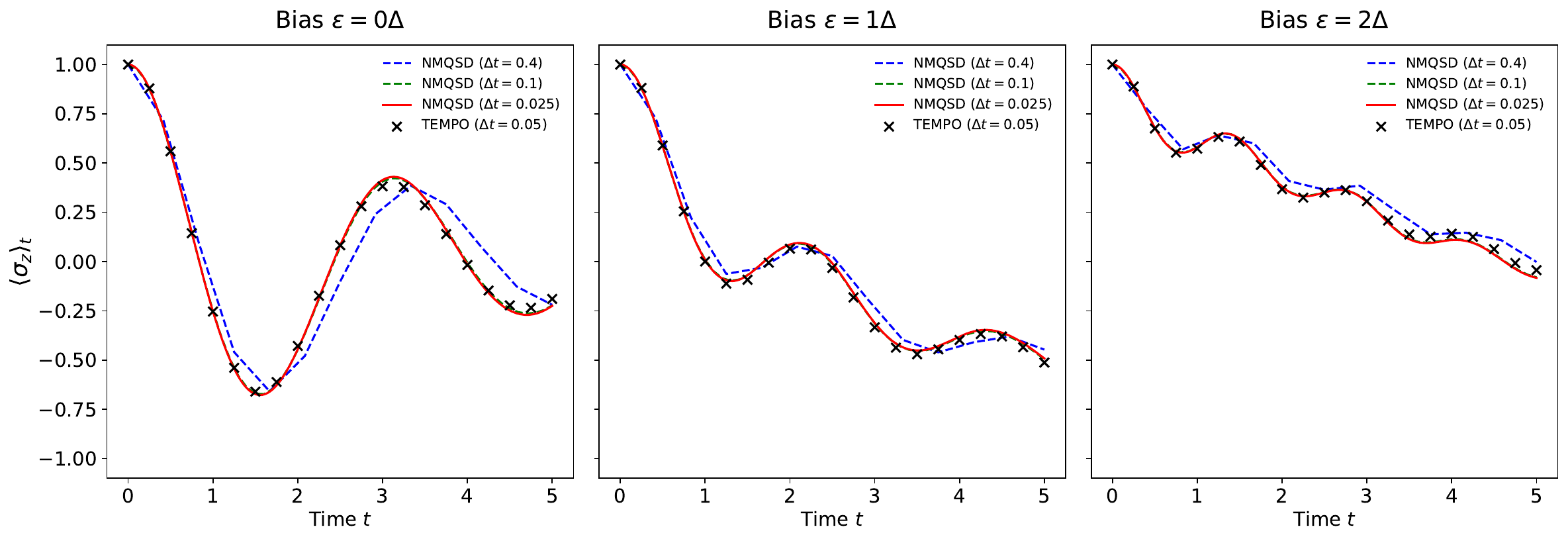}
    \caption{Time evolution of $\langle \sigma_z \rangle_t$ 
for different energy biases in Example \ref{ex:ohmic}. 
The NMQSD solutions with $\tau \in \{0.4, 0.1, 0.025\}$ 
are compared with the TEMPO reference solution.}
    \label{fig:ohmic_dyn}
\end{figure}

Figure~\ref{fig:ohmic_dyn} shows the time evolution of 
$\langle \sigma_z \rangle_t$ up to $T = 5.0$. 
The NMQSD results obtained with $\tau = 0.4$, $0.1$, and $0.025$ 
are compared with a TEMPO reference solution \cite{strathearn_efficientnonmarkovianquantum_2018,fux_oqupypythonpackage_2024}.
The TEMPO reference is generated using the \texttt{OQuPy} package with time step $\tau_{\mathrm{ref}} = 0.05$ and relative SVD truncation tolerance $10^{-7}$.

For the coarse time step $\tau = 0.4$, visible phase and amplitude errors are observed. 
In contrast, the results with $\tau = 0.1$ and $0.025$ agree well with the TEMPO reference, indicating that the second-order scheme captures the dynamics accurately at these time steps.

\begin{example}[Truncation sensitivity under different coupling strengths]
This example investigates the sensitivity of the numerical results to the truncation
parameters under different system--bath coupling strengths.  
We fix the energy bias $\epsilon = 0$ and the time step $\tau = 0.1$, use $N_{\mathrm{traj}}=10^4$ trajectories, and consider $\xi \in \{0.2,0.4,0.6,0.8\}$.

For each value of $\xi$, we vary the memory length $K$ and the maximum
hierarchy level $M$.
For the weaker couplings $\xi=0.2$ and $0.4$, we use
$K\in\{9,12,15,18\}$, while for the stronger couplings $\xi=0.6$ and $0.8$,
we use $K\in\{18,21,24,27\}$. The hierarchy level is varied
over $M\in\{1,2,3,4,5\}$.
All other parameters are the same as in Example~\ref{ex:ohmic}.

\label{ex:xi_sensitivity}
\end{example}

We first examine the dependence on the maximum hierarchy level $M$, as
shown in Fig.~\ref{fig:M_convergence}.  The memory length is fixed at
$K=18$ for $\xi=0.2,0.4$ and at $K=27$ for $\xi=0.6,0.8$.  For weak
coupling, the dynamics is already stable at $M=2$.  For stronger coupling,
lower hierarchy levels differ visibly, whereas the curves for the two
largest tested values of $M$ nearly overlap.  This indicates internal
convergence with respect to the hierarchy truncation, with higher hierarchy
levels required as the coupling becomes stronger.

\begin{figure}[htbp]
    \centering
    \includegraphics[width=0.8\textwidth]{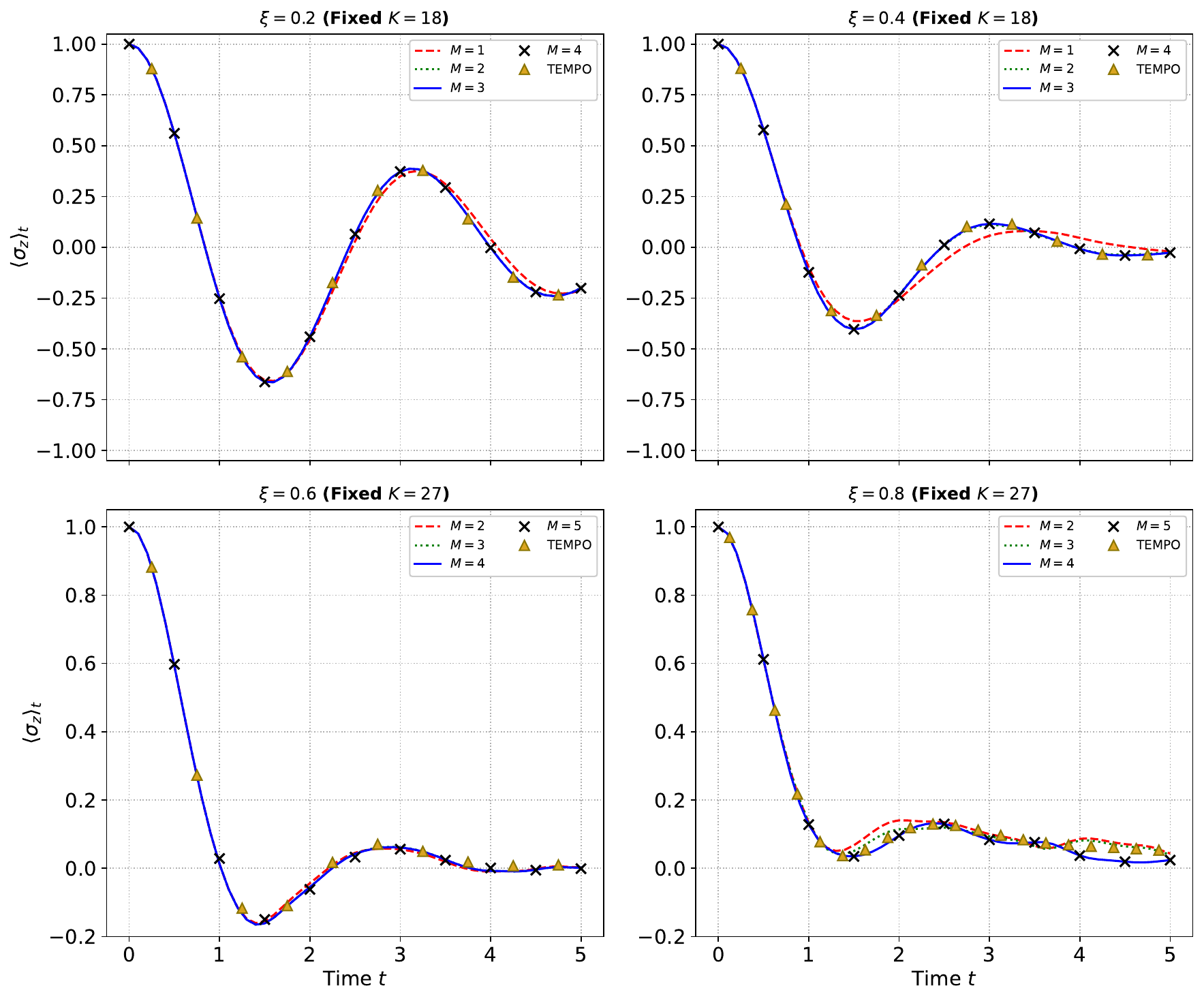}
    \caption{Sensitivity of the dynamics $\langle \sigma_z \rangle_t$ to the hierarchy level $M$ 
    for various coupling strengths $\xi$ in Example \ref{ex:xi_sensitivity}. The memory length is fixed at $K=18$ for $\xi=0.2,0.4$ and at $K=27$ for $\xi=0.6,0.8$, with $\tau=0.1$ and
$N_{\mathrm{traj}}=10^4$. }    \label{fig:M_convergence}
\end{figure}

Figure~\ref{fig:K_convergence} shows the sensitivity to the memory length
$K$ with fixed $M$.
For weak coupling, increasing $K$ beyond $12$ produces little change in the dynamics.
For stronger coupling, memory effects persist over
a longer time window: the curves still change noticeably up to $K=21$, but
the results for $K=24$ and $K=27$ are nearly indistinguishable.  This suggests that the memory-truncation effect is well controlled within the
tested range of $K$.

\begin{figure}[htbp]
    \centering
    \includegraphics[width=0.8\textwidth]{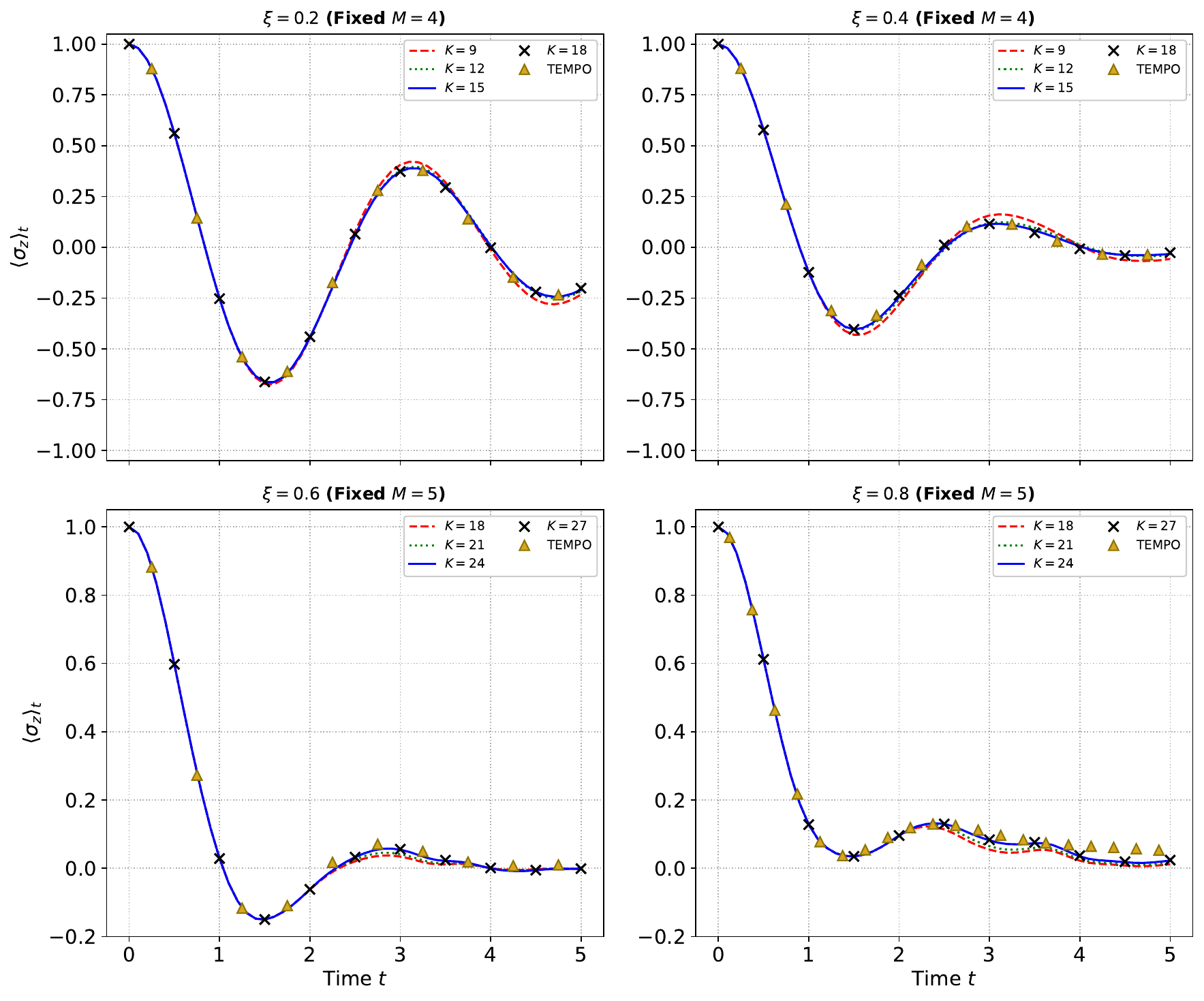}
    \caption{Sensitivity of the dynamics $\langle \sigma_z \rangle_t$ to the memory length $K$ for various coupling strengths $\xi$ in Example \ref{ex:xi_sensitivity}. We use $K=9,12,15,18$ and $M=4$ for $\xi=0.2,0.4$, and $K=18,21,24,27$ and $M=5$ for $\xi=0.6,0.8$, with
$\tau=0.1$ and $N_{\mathrm{traj}}=10^4$.}
    \label{fig:K_convergence}
\end{figure}

The strong-coupling cases also show that agreement with the TEMPO reference
is not strictly monotone in the truncation parameters.  A lower hierarchy
level can appear closer to TEMPO in some long-time regions, but the
convergence tests indicate that the NMQSD dynamics has already stabilized
with respect to both $M$ and $K$ at the largest tested values.  The remaining
long-time deviation is therefore better viewed as a residual discrepancy
between the two numerical approaches. In this sense, the present NMQSD scheme
provides an independent cross-check of strongly coupled non-Markovian
dynamics.

\subsection{Multi-level System}
In this example, we apply the proposed method to a higher-dimensional
system to illustrate its applicability beyond two-level models.
\begin{example}[Multi-level systems]
Consider a $d$-level system with the system Hamiltonian defined as
\begin{equation}
    H = \sum_{i=1}^{d-1} (|i\rangle\langle i+1| + |i+1\rangle\langle i|).
\end{equation}
The system--bath coupling operator is 
\begin{equation}
    L = \mathrm{diag}\left(-1, -1+\frac{2}{d-1}, \dots, 1\right).
\end{equation}
The environment is modeled by the Ohmic spectral density used in
Example~\ref{ex:ohmic}. We take \(d=11\), \(\omega_c=2.5\), $\xi=0.2$ and \(\beta=1.0\).
We compare the following two initial system states: 
\( (i)\ \rho_s(0)=|1\rangle\langle 1|,\ (ii)\ \rho_s(0) = \frac{1}{2}\left(|1\rangle\langle 1|+|d\rangle\langle d|\right).
\) 
The evolution is monitored through the level populations \[ P_i(t)=\langle i|\rho_s(t)|i\rangle . \] 
In all simulations, the time step is set to \(\tau=0.1\), the number of trajectories is \(N_{\mathrm{traj}}=10^4\), and the truncation parameters are \(K=18\) and \(M=3\).
\label{ex:multilevel}
\end{example}
Fig.~\ref{fig:multilevel_dynamics} shows the population dynamics in the
high-temperature regime, represented here by the inverse temperature
\(\beta=1\), for two different initial states.
For the localized initial state (i), the population
starts from the first level and is gradually redistributed across the level
chain. For the symmetric mixed initial state (ii),
the redistribution starts from both boundary levels and gives a nearly
symmetric population profile. These results illustrate that the proposed second-order NMQSD scheme can be naturally applied to higher-dimensional systems without changing the algorithmic structure. 

\begin{figure}[htbp]
    \centering
    \includegraphics[width=0.9\textwidth]{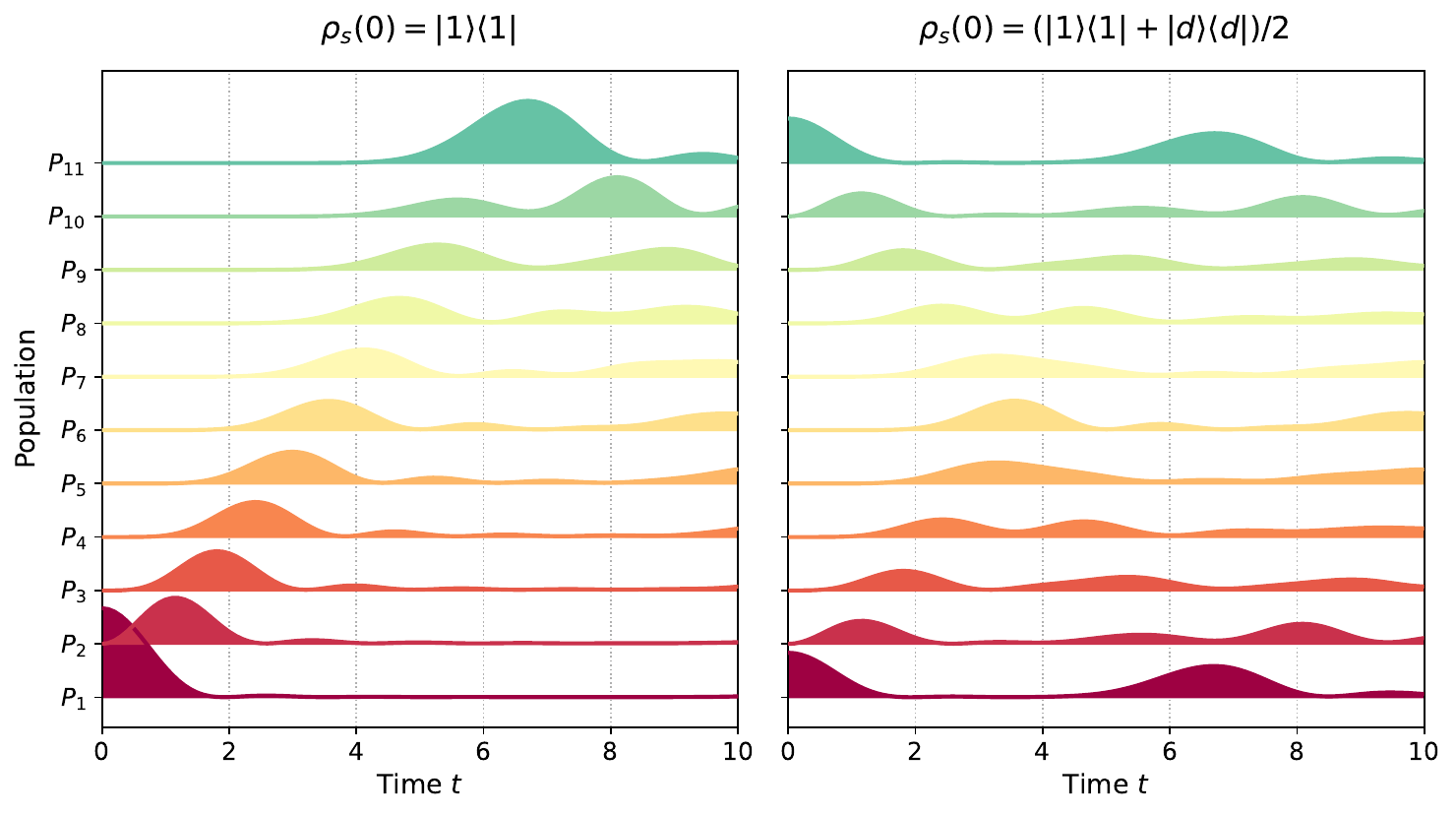}
\caption{Time evolution of the level populations \(P_i(t)\) for the \(d=11\) system in Example~\ref{ex:multilevel} in the high-temperature regime with \(\beta=1.0\). Left: localized initial state \(\rho_s(0)=|1\rangle\langle 1|\). Right: symmetric mixed initial state \(\rho_s(0)=\frac{1}{2}(|1\rangle\langle 1|+|d\rangle\langle d|)\). }
    \label{fig:multilevel_dynamics}
\end{figure}

\section{Conclusion}
\label{sec:conclusion}

This work develops a general numerical framework for the
linear NMQSD equation with arbitrary bath correlation functions. The
construction is based on an analytical representation that separates the
non-Markovian stochastic dynamics into propagation, insertion, and
memory pairing. This structure provides a direct way to represent
the functional derivatives through auxiliary states, without requiring a
prescribed decomposition of the bath correlation function.

The framework organizes the numerical construction into time discretization,
memory quadrature, and hierarchy truncation. Within this formulation, we
derive first- and second-order schemes and implement them through
diagrammatic update rules. The memory and hierarchy-level truncations make
the resulting auxiliary-state hierarchy finite and computationally
controllable.

Numerical experiments verify the expected temporal accuracy of the proposed
schemes and demonstrate their applicability to different bath correlation
functions and multi-level quantum systems. 
These results show that the proposed framework leads to general, accurate,
and computationally tractable numerical methods for non-Markovian stochastic
dynamics.

Future work includes extensions to many-body open quantum systems, more
general system--bath coupling structures, adaptive truncation strategies, and
higher-order auxiliary-state discretizations.

\clearpage

\renewcommand{\siamprelabel}{SM}
\setcounter{section}{0}
\setcounter{subsection}{0}
\setcounter{subsubsection}{0}
\setcounter{equation}{0}
\setcounter{theorem}{0}
\setcounter{figure}{0}
\setcounter{table}{0}
\setcounter{algorithm}{0}
\renewcommand{\theHsection}{supp.\arabic{section}}
\renewcommand{\theHsubsection}{\theHsection.\arabic{subsection}}
\renewcommand{\theHsubsubsection}{\theHsubsection.\arabic{subsubsection}}
\renewcommand{\theHequation}{supp.\arabic{section}.\arabic{equation}}
\renewcommand{\theHtheorem}{supp.\arabic{section}.\arabic{theorem}}
\renewcommand{\theHfigure}{supp.\arabic{figure}}
\renewcommand{\theHtable}{supp.\arabic{table}}
\renewcommand{\theHalgorithm}{supp.\arabic{algorithm}}

\newcounter{supplementstartpage}
\setcounter{supplementstartpage}{\value{page}}
\renewcommand{\thepage}{SM\the\numexpr\value{page}-\value{supplementstartpage}+1\relax}

\headers{Supplementary Materials: General Numerical Solver for NMQSD}{Z. Cai and Q. Zhu}
\begin{center}
{\bfseries
\MakeUppercase{Supplementary Materials: A General First- and Second-Order
Numerical Solver for Non-Markovian Quantum State Diffusion}\par}
\vskip .075in
{\footnotesize ZHENNING CAI \scriptsize AND \footnotesize QUANHUI ZHU\par}
\end{center}
\vskip .11in

\adjustboxset{valign=m, height=2em, raise=-0.2ex}
This supplementary material contains the proofs of
Theorems~2.1, 3.2, and 3.3, together with fully expanded examples of the
first- and second-order hierarchical updates.

\section{Proof of Theorem~\ref{thm:solution}}
\label{app:proof_solution}
To facilitate the differentiation with respect to time,
it is convenient to rewrite the integration over the ordered simplex
$\Delta_n(t)$ as an integral over the full cube $C_n = [0,t]^n$.
Using the symmetry of the integrand under permutations of
$\boldsymbol{\tau}$, we obtain
\begin{equation}
    \psi_t^{(n,m)}
    = (-1)^m
    \int_{C_n} \frac{1}{n!} \D^n \boldsymbol{\tau}
    \int_{\Delta_{2m}(t)} f(\boldsymbol{\tau}, \boldsymbol{s}) \D^{2m} \boldsymbol{s}.
    \label{eq:solution_cube}
\end{equation}
This representation is equivalent to the simplex formulation,
but allows for a more convenient treatment of the time derivative.
We now verify that the series represented by \eqref{eq:solution_cube}
satisfies the NMQSD equation.

First, we differentiate an individual term $\psi_t^{(n,m)}$ with respect to time.
Using the equivalent cube representation \eqref{eq:solution_cube} of analytical solution,
the time dependence enters only through the integration domains.
Applying the Leibniz rule yields
\begin{equation}
    \partial_t \psi_t^{(n,m)} = A_{n,m} + B_{n,m},
\end{equation}
where
\begin{itemize}
    \item $A_{n,m}$ is the contribution from the boundary of the cube
    $C_n = [0,t]^n$, corresponding to $\tau_k = t$ for some $k$.
    \item $B_{n,m}$ is the contribution from the boundary of the simplex
    $\Delta_{2m}(t)$, corresponding to $s_{2m} = t$.
\end{itemize}

\paragraph{1. Stochastic driving term $A_{n,m}$}

For $n\ge 1$, the symmetry of the integrand under permutations of
$\boldsymbol{\tau}$ over $C_n$ implies that all boundary contributions
with $\tau_k = t$ are identical.
Fixing $\tau_n = t$ and summing over the $n$ equivalent cases, we obtain
\begin{equation}
\begin{aligned}
    A_{n,m}
    &= n \cdot (-1)^m
    \int_{C_{n-1}} \frac{1}{n!}\D^{n-1}\boldsymbol{\tau}
    \int_{\Delta_{2m}(t)} f_A(\boldsymbol{\tau},\boldsymbol{s}) \D^{2m}\boldsymbol{s} \\
    &= (-1)^m
    \int_{C_{n-1}} \frac{1}{(n-1)!}\D^{n-1}\boldsymbol{\tau}
    \int_{\Delta_{2m}(t)} f_A(\boldsymbol{\tau},\boldsymbol{s}) \D^{2m}\boldsymbol{s},
\end{aligned}
\end{equation}
where
\begin{equation}
    f_A = z_t^* \left( \prod_{k=1}^{n-1} z_{\tau_k}^* \right)
    \mathcal{L}_b(\boldsymbol{s})
    \mathcal{T}\!\left(
    L(t)\prod_{k=1}^{n-1}L(\tau_k)
    \prod_{j=1}^m L^\dagger(s_{P(2j)})L(s_{P(2j-1)})
    \right)\psi_0.
\end{equation}
Since $t$ is the largest time, $L(t)$ is placed
leftmost by the time-ordering operator. Therefore,
\begin{equation}
    A_{n,m} = z_t^* L(t)\,\psi_t^{(n-1,m)}.
    \label{eq:Anm_result_fixed}
\end{equation}

\paragraph{2. Memory term $B_{n,m}$}
For $m\ge 1$, the term $B_{n,m}$ is the contribution from the boundary
of the ordered simplex with $s_{2m} = t$. Applying the Leibniz rule gives
\begin{equation}
\begin{aligned}
B_{n,m}
&= (-1)^m
\int_{C_n} \frac{1}{n!}\D^n \boldsymbol{\tau}
\int_{\Delta_{2m-1}(t)}
f_B(\boldsymbol{\tau},\boldsymbol{s})
\D^{2m-1}\boldsymbol{s},
\end{aligned}
\end{equation}
where
\begin{equation}
\begin{aligned}
f_B(\boldsymbol{\tau},\boldsymbol{s})
&=
\left( \prod_{k=1}^n z_{\tau_k}^* \right)
\mathcal{L}_b(s_1,\dots,s_{2m-1},t)
\\
&\quad\times
\mathcal{T}\!\left(
\prod_{k=1}^n L(\tau_k)
\prod_{j=1}^{m-1}
\big[L^\dagger(s_{P(2j)}) L(s_{P(2j-1)})  \big]
\, L^\dagger(t)\, L(s_{P(2m-1)})
\right)\psi_0.
\end{aligned}
\end{equation}
In the pairing structure $\mathcal{L}_b(s_1,\dots,s_{2m-1},t)$,
the time $t$ must be paired with one of the variables
$s_l$, $l=1,\dots,2m-1$.
Thus,
\begin{equation}
\mathcal{L}_b(s_1,\dots,s_{2m-1},t)
=
\sum_{l=1}^{2m-1}
\alpha(t,s_l)\,
\mathcal{L}_b(s_1,\dots,s_{l-1},s_{l+1},\dots,s_{2m-1}).
\end{equation}

Substituting this decomposition into $B_{n,m}$ yields
\begin{equation}
\begin{aligned}
B_{n,m}
&=
(-1)^m \sum_{l=1}^{2m-1}
\int_{C_n} \frac{1}{n!}\D^n \boldsymbol{\tau}
\int_{\Delta_{2m-1}(t)}
\alpha(t,s_l)\,
f_B(\boldsymbol{\tau},\boldsymbol{s},l)
\D^{2m-1}\boldsymbol{s},
\end{aligned}
\end{equation}
where
\begin{equation}
\begin{aligned}
f_B(\boldsymbol{\tau},\boldsymbol{s},l)
&=
\left( \prod_{k=1}^n z_{\tau_k}^* \right)
\mathcal{L}_b(s_1,\dots,s_{l-1},s_{l+1},\dots,s_{2m-1})
\\
&\quad\times
\mathcal{T}\!\left(
\prod_{k=1}^n L(\tau_k)
\prod_{j=1}^{m-1}
\big[ L^\dagger(s_{P(2j)}) L(s_{P(2j-1)})  \big]
 L^\dagger(t)  L(s_l)
\right)\psi_0.
\end{aligned}
\end{equation}

\paragraph{3. Functional derivative term}

We now evaluate the functional-derivative contribution on the right-hand side
of \eqref{eq:NMQSD_reduced}. For $m\ge 1$, in
$\psi_t^{(n+1,m-1)}$, the functional derivative acts on the product of
stochastic variables as
\begin{equation}
\frac{\delta}{\delta z_s^*}
\prod_{k=1}^{n+1} z_{\tau_k}^*
=
\sum_{l=1}^{n+1}
\delta(\tau_l - s)
\prod_{k\neq l} z_{\tau_k}^*.
\end{equation}
This gives
\begin{equation}
\begin{aligned}
\frac{\delta \psi_t^{(n+1,m-1)}}{\delta z_s^*}
&=
(-1)^{m-1}
\int_{C_{n+1}} \frac{1}{(n+1)!}\D^{n+1}\boldsymbol{\tau}\int_{\Delta_{2m-2}(t)} 
\sum_{l=1}^{n+1}
\delta(\tau_l - s) \left( \prod_{k\neq l} z_{\tau_k}^* \right)
\\
&\times
\mathcal{L}_b(\boldsymbol{s})
\mathcal{T}\!\left(
\prod_{k=1}^{n+1} L(\tau_k)
\prod_{j=1}^{m-1}
\big[ L^\dagger(s_{P(2j)}) L(s_{P(2j-1)})  \big]
\right)\psi_0
\D^{2m-2}\boldsymbol{s}.
\end{aligned}
\end{equation}
Using the identity 
\begin{equation}
\int_0^t \alpha(t,s)\delta(\tau_l - s)\D s
=
\alpha(t,\tau_l),
\end{equation}
and changing the order of integration, we evaluate the memory integral as

\begin{equation}
\begin{aligned}
I & = -L^\dagger(t)\int_0^t \alpha(t,s)
\frac{\delta \psi_t^{(n+1,m-1)}}{\delta z_s^*}\D s
\\
&=
(-1)^{m}
\sum_{l=1}^{n+1}
\int_{C_{n+1}} \frac{1}{(n+1)!}\alpha(t,\tau_l)\D^{n+1}\boldsymbol{\tau}
\int_{\Delta_{2m-2}(t)} f_M(\boldsymbol{\tau},\boldsymbol{s},l) \D^{2m-2}\boldsymbol{s},
\end{aligned}
\end{equation}
where 
\begin{equation}
    f_M(\boldsymbol{\tau},\boldsymbol{s},l) = 
\left( \prod_{k\neq l} z_{\tau_k}^* \right)
\mathcal{L}_b(\boldsymbol{s})
\mathcal{T}\!\left(L^\dagger(t)
\prod_{k=1}^{n+1} L(\tau_k)
\prod_{j=1}^{m-1}
\big[L^\dagger(s_{P(2j)}) L(s_{P(2j-1)})  \big]
\right)\psi_0.
\end{equation}

It remains to isolate the selected variable $\tau_l$.
Using the symmetry under permutations of
$\boldsymbol{\tau}$ over $C_{n+1}$, we fix this variable as
$\tau_{n+1}$ and sum over the $(n+1)$ equivalent choices, yielding
\begin{equation}
\begin{aligned}
    I
    &= (-1)^m
    \int_{C_n} \frac{1}{n!}\D^n\boldsymbol{\tau}
    \int_0^t \D \tau'
    \alpha(t,\tau')\int_{\Delta_{2m-2}(t)} \hat{f}_M(\boldsymbol{\tau},\boldsymbol{s},\tau') \D^{2m-2}\boldsymbol{s},
\end{aligned}
\end{equation}
where $\tau'=\tau_{n+1}$, and $\hat{f}_M$ denotes $f_M$ after this
relabeling of the selected variable.

The new variable $\tau'$ is then inserted into the ordered simplex
$\Delta_{2m-2}(t)$ of the pairing variables.
The domain $[0,t]\times \Delta_{2m-2}(t)$ can be decomposed into
$2m-1$ ordered regions according to the relative position of $\tau'$
among $(s_1,\dots,s_{2m-2})$:
\begin{equation}
    [0,t]\times \Delta_{2m-2}(t)
    = \bigsqcup_{l=1}^{2m-1} \Delta_{2m-1}^{(l)}(t),
\end{equation}
where $\Delta_{2m-1}^{(l)}(t)$ corresponds to inserting $\tau'$
between $s_{l-1}$ and $s_l$, with the conventions $s_0=0$ and
$s_{2m-1}=t$.

After relabeling $\tau'$ as $s_l$ in the $l$-th region, we obtain
\begin{equation}
    I
    = (-1)^m 
    \int_{C_n} \frac{1}{n!}\D^n\boldsymbol{\tau}
    \int_{\Delta_{2m-1}(t)}\sum_{l=1}^{2m-1}
    \alpha(t,s_l)\, f_B(\boldsymbol{\tau},\boldsymbol{s},l)
    \D^{2m-1}\boldsymbol{s}.
\end{equation}

Therefore,
\begin{equation}
    -L^\dagger(t)\int_0^t \alpha(t,s)
    \frac{\delta \psi_t^{(n+1,m-1)}}{\delta z_s^*}\D s
    = B_{n,m}.
\end{equation}

\paragraph{4. Conclusion.}
Using the conventions
\begin{equation}
    \psi_t^{(-1,m)}=\psi_t^{(n,-1)}=0,
    \qquad
    \frac{\delta \psi_t^{(0,m)}}{\delta z_s^*}=0,
\end{equation}
and summing over all $n,m\ge 0$, we obtain
\begin{equation}
\begin{aligned}
    \partial_t \psi_t
    &= \sum_{n,m\ge 0} (A_{n,m} + B_{n,m}) \\
    & = \sum_{n,m\ge 0} z_t^* L(t)\psi_t^{(n-1,m)} - \sum_{n,m\ge 0}L^\dagger(t)\int_0^t \alpha(t,s)
    \frac{\delta \psi_t^{(n+1,m-1)}}{\delta z_s^*}\D s \\
    &= z_t^* L(t)\psi_t
    - L^\dagger(t)\int_0^t \alpha(t,s)
    \frac{\delta \psi_t}{\delta z_s^*}\D s,
\end{aligned}
\end{equation}
which is precisely the NMQSD equation.
\qed

\section{Proof of Theorem~\ref{thm:error_first_order}}
\label{app:first_order_consistency}
The proof is carried out under Assumptions~\ref{assumption} stated in
Section~\ref{sec:methods} of the main manuscript.

We compare the first-order scheme
\begin{equation}
    \begin{aligned}
    \psi_{n+1}^\sigma & = \psi_n^\sigma + \tau A_n\psi_n^\sigma + \tau B_n^{[1]} \psi_n^\sigma, \\
    \psi_{n+1}^{\sigma\cup(t_n)} & =L_n\psi_{n}^\sigma, \\
    \end{aligned}
    \label{eq:first_order_distinct_update_app}
\end{equation}
with the full Euler hierarchy 
\begin{equation}
    \begin{aligned}
    \phi_{n+1}^\sigma & = \phi_n^\sigma + \tau A_n\phi_n^\sigma + \tau B_n \phi_n^\sigma, \\
    \phi_{n+1}^{\sigma\cup(t_n)} & =L_n\phi_{n}^\sigma, \\
    \end{aligned}
    \label{eq:first_order_full_hierarchy_app}
\end{equation}
and show that the retained scheme preserves the first-order accuracy of the
physical state.

\begin{proof}
We compare the two schemes on the common distinct-index configurations.  For \(0\le k\le n\), define the level-\(k\) subset of
\(\mathcal C_n^{[1]}\) by
\begin{equation}
    \Sigma_n^{(k)}
    :=
    \left\{
    \sigma\in\mathcal C_n^{[1]}:
    |\sigma|=k
    \right\},
    \qquad
    \Sigma_n^{(0)}:=\{\emptyset\}.
    \label{eq:admissible_set_level_k}
\end{equation}
Then \(\mathcal C_n^{[1]}=\bigcup_{k=0}^{n}\Sigma_n^{(k)}\).
Although the full solution \(\phi_n^\sigma\) also includes
repeated-label configurations, only its components on \(\Sigma_n^{(k)}\) are
compared with the implemented solution. For \(\sigma\in\Sigma_n^{(k)}\), define
\begin{equation}
    e_n^\sigma:=\psi_n^\sigma-\phi_n^\sigma .
\end{equation}
The level-wise error is
\begin{equation}
    E_n^{(k)}
    :=
    \max_{\sigma\in\Sigma_n^{(k)}}\|e_n^\sigma\|,
    \qquad 0\le k\le n .
    \label{eq:level_error_admissible}
\end{equation}
For notational convenience, we set
\(E_n^{(k)}=0\) for $k>n$. At \(n=0\), only the physical index is admissible, and
\[
    E_0^{(0)}
    =
    \|\psi_0^\emptyset-\phi_0^\emptyset\|
    =
    0 .
\]

We first estimate the error for configurations that are already
admissible at time \(t_n\).  Let \(\sigma\in\Sigma_n^{(k)}\).  Subtracting the full update
\eqref{eq:first_order_full_hierarchy_app} from the implemented update
\eqref{eq:first_order_distinct_update_app} gives
\begin{equation}
\begin{aligned}
    e_{n+1}^{\sigma}
    &=e_n^\sigma + \tau A_ne_n^\sigma
    -\tau^2L_n^\dagger\sum_{\substack{0\le j\le n-1\\ t_j\notin\sigma}}
    \alpha_{n,j}e_n^{\sigma\cup(t_j)}
    \\
    &+\tau^2L_n^\dagger
    \sum_{\substack{0\le j\le n-1\\ t_j\in\sigma}}
    \alpha_{n,j}\phi_n^{\sigma\cup(t_j)} .
\end{aligned}
\label{eq:error_recursion_repeated_indices_admissible}
\end{equation}
Whenever \(t_j\notin\sigma\), one has
\[
    \sigma\cup(t_j)\in\Sigma_n^{(k+1)}.
\]
Define the error on already admissible configurations by
\begin{equation}
    \widetilde E_{n+1}^{(k)}
    :=
    \max_{\sigma\in\Sigma_n^{(k)}}\|e_{n+1}^{\sigma}\|.
    \label{eq:old_history_error}
\end{equation}
For \(0\le k\le n\), taking norms in
\eqref{eq:error_recursion_repeated_indices_admissible} and using the
assumptions gives
\begin{equation}
    \widetilde E_{n+1}^{(k)}
    \le
    (1+\tau C_A)E_n^{(k)}
    +
    \tau C_LC_{\alpha,T}E_n^{(k+1)}
    +
    D_n^{(k)},
    \label{eq:old_history_level_recursion}
\end{equation}
where
\begin{equation}
    D_n^{(k)}
    :=
    \max_{\sigma\in\Sigma_n^{(k)}}
    \left\|
    \tau^2L_n^\dagger
    \sum_{\substack{0\le j\le n-1\\ t_j\in\sigma}}
    \alpha_{n,j}
    \phi_n^{\sigma\cup(t_j)}
    \right\|.
    \label{eq:defect_Dnk}
\end{equation}
Since a distinct configuration \(\sigma\in\Sigma_n^{(k)}\) contains exactly \(k\)
time points, the coefficient and level-wise bounds imply
\begin{equation}
\begin{aligned}
    D_n^{(k)}
    &\le
    \tau^2 C_L
    \sum_{\substack{0\le j\le n-1\\ t_j\in\sigma}}
    |\alpha_{n,j}|\,
    \|\phi_n^{\sigma\cup(t_j)}\|
    \\
    &\le
    \tau^2 C_L k C_\alpha C_\phi M^{k+1}.
\end{aligned}
\label{eq:levelwise_defect_bound_admissible}
\end{equation}
In particular, \(D_n^{(0)}=0\).

Next we account for the configurations that are newly admissible at time
\(t_{n+1}\).  For \(1\le k\le n+1\),
\begin{equation}
    \Sigma_{n+1}^{(k)}
    =
    \Sigma_n^{(k)}
    \cup
    \left\{
    \eta\cup(t_n):\eta\in\Sigma_n^{(k-1)}
    \right\},
    \label{eq:admissible_set_decomposition}
\end{equation}
with the convention that \(\Sigma_n^{(k)}=\emptyset\) when \(k>n\).
The configurations in the second set are generated by the insertion rule.  
Since both schemes use the same rule, for
\(\eta\in\Sigma_n^{(k-1)}\),
\begin{equation}
    \psi_{n+1}^{\eta\cup(t_n)}
    =
    L_n\psi_n^\eta,
    \qquad
    \phi_{n+1}^{\eta\cup(t_n)}
    =
    L_n\phi_n^\eta .
    \label{eq:discrete_initialization_rule_error}
\end{equation}
Hence
\begin{equation}
    e_{n+1}^{\eta\cup(t_n)}
    =
    L_n e_n^\eta .
    \label{eq:initialization_error_rule}
\end{equation}

Choose a fixed \(r>0\) such that
\begin{equation}
    rM<1,
    \qquad
    rC_L\le 1 .
    \label{eq:r_choice}
\end{equation}
Define the weighted errors by
\begin{equation}
    \mathcal E_n
    :=
    \max_{0\le k\le n} r^k E_n^{(k)},
    \qquad
    \widetilde{\mathcal E}_{n+1}
    :=
    \max_{0\le k\le n} r^k \widetilde E_{n+1}^{(k)} .
    \label{eq:weighted_admissible_error}
\end{equation}
Multiplying \eqref{eq:old_history_level_recursion} by \(r^k\) and taking
the maximum over \(0\le k\le n\), we obtain
\begin{equation}
\begin{aligned}
    \widetilde{\mathcal E}_{n+1}
    &\le
    (1+\tau C_A)\mathcal E_n
    +
    \tau C_LC_{\alpha,T}
    \max_{0\le k\le n} r^kE_n^{(k+1)}
    +
    \max_{0\le k\le n} r^kD_n^{(k)} .
\end{aligned}
\label{eq:weighted_old_history_prebound}
\end{equation}
By the convention \(E_n^{(k)}=0\) for \(k>n\), we have
\begin{equation}
    \max_{0\le k\le n}r^kE_n^{(k+1)}=
    \max_{0\le k\le n-1}r^k E_n^{(k+1)}
    =
    \frac1r
    \max_{1\le m\le n} r^mE_n^{(m)}
    \le
    \frac1r\mathcal E_n .
\end{equation}
Moreover, by \eqref{eq:levelwise_defect_bound_admissible},
\begin{equation}
\begin{aligned}
    \max_{0\le k\le n}r^kD_n^{(k)}
    &\le
    \tau^2 C_LC_\alpha C_\phi M
    \max_{1\le k\le n} k(rM)^k
    \\
    &\le
    \tau^2 C_LC_\alpha C_\phi M
    \sum_{k=1}^{\infty} k(rM)^k
    \\
    &=
    \tau^2 C_LC_\alpha C_\phi M
    \frac{rM}{(1-rM)^2}.
\end{aligned}
\end{equation}
Set
\begin{equation}
    D_r
    :=
    C_LC_\alpha C_\phi M
    \frac{rM}{(1-rM)^2},
    \qquad
    \Lambda_r
    :=
    C_A+\frac{C_LC_{\alpha,T}}{r}.
\end{equation}
Then
\begin{equation}
    \widetilde{\mathcal E}_{n+1}
    \le
    (1+\tau\Lambda_r)\mathcal E_n+\tau^2D_r .
    \label{eq:old_weighted_error_bound}
\end{equation}

We next pass from the configurations already present at time \(t_n\)
to the full admissible set at time \(t_{n+1}\). By the decomposition
\eqref{eq:admissible_set_decomposition}, each admissible configuration at
\(t_{n+1}\) is either an existing configuration in \(\Sigma_n^{(k)}\), or a newly
generated configuration of the form \(\eta\cup(t_n)\) with
\(\eta\in\Sigma_n^{(k-1)}\). Hence
\begin{equation}
\begin{aligned}
    \mathcal E_{n+1}
    &=
    \max_{0\le k\le n+1}
    r^k
    \max_{\sigma\in\Sigma_{n+1}^{(k)}}
    \|e_{n+1}^\sigma\|
    \\
    &\le
    \max\left\{
    \widetilde{\mathcal E}_{n+1},
    \max_{1\le k\le n+1}
    \max_{\eta\in\Sigma_n^{(k-1)}}
    r^k\|e_{n+1}^{\eta\cup(t_n)}\|
    \right\}.
\end{aligned}
\label{eq:full_vs_old_weighted_error}
\end{equation}
For a newly generated configuration, \eqref{eq:initialization_error_rule} gives
\begin{equation}
\begin{aligned}
    r^k\|e_{n+1}^{\eta\cup(t_n)}\|
    &\le
    r^k C_L\|e_n^\eta\|
    \\
    &=
    rC_L\, r^{k-1}\|e_n^\eta\|
    \\
    &\le
    \mathcal E_n ,
\end{aligned}
\label{eq:newborn_weighted_error}
\end{equation}
where we used \(rC_L\le1\). Therefore,
\begin{equation}
    \mathcal E_{n+1}
    \le
    \max\left\{
    \widetilde{\mathcal E}_{n+1},
    \mathcal E_n
    \right\}.
    \label{eq:weighted_error_full_old_relation}
\end{equation}
Combining this with \eqref{eq:old_weighted_error_bound}, and using
\[
    (1+\tau\Lambda_r)\mathcal E_n+\tau^2D_r\ge \mathcal E_n,
\]
we obtain 
\begin{equation}
    \mathcal E_{n+1}
    \le
    (1+\tau\Lambda_r)\mathcal E_n+\tau^2D_r .
    \label{eq:weighted_gronwall_repeated_admissible}
\end{equation}
Since the physical initial states agree, \(\mathcal E_0=0\).

Applying the discrete Gr\"onwall inequality to
\eqref{eq:weighted_gronwall_repeated_admissible} yields
\begin{equation}
\begin{aligned}
    \mathcal E_n
    &\le
    \tau^2D_r
    \sum_{\ell=0}^{n-1}(1+\tau\Lambda_r)^\ell
    \\
    &\le
    \tau D_r
    \frac{e^{\Lambda_r t_n}-1}{\Lambda_r}
    \le
    \tau D_r
    \frac{e^{\Lambda_r T}-1}{\Lambda_r}.
\end{aligned}
\end{equation}
Finally, since \(\emptyset\in\Sigma_n^{(0)}\), we have
\begin{equation}
    \|\psi_n^\emptyset-\phi_n^\emptyset\|
    =
    E_n^{(0)}
    \le
    \mathcal E_n .
\end{equation}
Therefore,
\begin{equation}
    \max_{0\le n\le N}
    \|\psi_n^\emptyset-\phi_n^\emptyset\|
    \le
    C_T\tau ,
\end{equation}
where
\begin{equation}
    C_T
    :=
    D_r
    \frac{e^{\Lambda_r T}-1}{\Lambda_r}.
\end{equation}
This proves the claim.
\end{proof}

\section{Example of the First-Order Scheme}
\label{sec:first_order_example}
To illustrate the structure of the hierarchical updates,
we present a fully expanded example of the first-order scheme at $t_2$.
At this stage, the existing configurations are subsets of $\{t_0,t_1\}$,
and the insertion rule generates new configurations containing $t_2$.

Each term in the expansion corresponds to one of the three fundamental operations:
propagation, insertion, and pairing.
The diagrammatic notation provides a visual representation of how these operations
generate and connect different configurations in the hierarchy.
\begin{align*}
    |\psi^{\emptyset}_2\rangle &= 
    \adjustbox{}{\includegraphics{figure/diagram/first_order/nmqsd_first_order-1.pdf}} \notag \\
    &= \adjustbox{}{\includegraphics{figure/diagram/first_order/nmqsd_first_order-9.pdf}} +
       \adjustbox{}{\includegraphics{figure/diagram/first_order/nmqsd_first_order-11.pdf}} +
       \adjustbox{}{\includegraphics{figure/diagram/first_order/nmqsd_first_order-12.pdf}} \\
    |\psi^{\{t_0\}}_2\rangle &= 
    \adjustbox{}{\includegraphics{figure/diagram/first_order/nmqsd_first_order-4.pdf}} \notag \\
    &= \adjustbox{}{\includegraphics{figure/diagram/first_order/nmqsd_first_order-15.pdf}} +
       \adjustbox{}{\includegraphics{figure/diagram/first_order/nmqsd_first_order-16.pdf}} \\
    |\psi^{\{t_1\}}_2\rangle &= 
    \adjustbox{}{\includegraphics{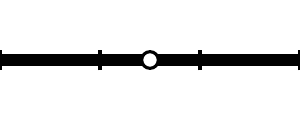}} \notag \\
    &= \adjustbox{}{\includegraphics{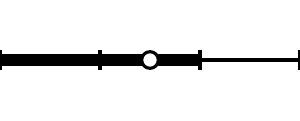}} +
       \adjustbox{}{\includegraphics{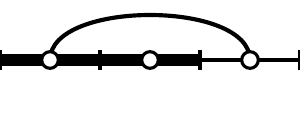}} \\
    |\psi^{\{t_2\}}_2\rangle &= 
    \adjustbox{}{\includegraphics{figure/diagram/first_order/nmqsd_first_order-2.pdf}} 
    = \adjustbox{}{\includegraphics{figure/diagram/first_order/nmqsd_first_order-10.pdf}} \\
    |\psi^{\{t_0, t_1\}}_2\rangle &= 
    \adjustbox{}{\includegraphics{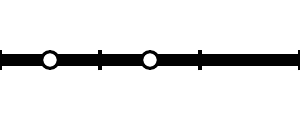}} 
    = \adjustbox{}{\includegraphics{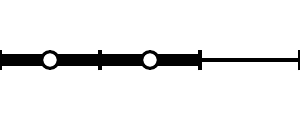}} \\
    |\psi^{\{t_0, t_2\}}_2\rangle &= 
    \adjustbox{}{\includegraphics{figure/diagram/first_order/nmqsd_first_order-6.pdf}} 
    = \adjustbox{}{\includegraphics{figure/diagram/first_order/nmqsd_first_order-18.pdf}} \\
    |\psi^{\{t_1, t_2\}}_2\rangle &= 
    \adjustbox{}{\includegraphics{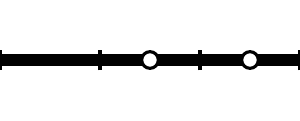}} 
    = \adjustbox{}{\includegraphics{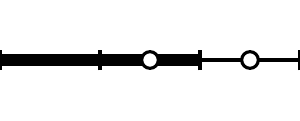}} \\
    |\psi^{\{t_0, t_1, t_2\}}_2\rangle &= 
    \adjustbox{}{\includegraphics{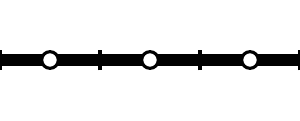}} 
    = \adjustbox{}{\includegraphics{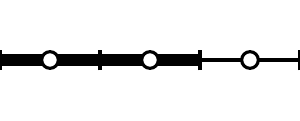}}
\end{align*}

\section{Proof of Theorem~\ref{thm:second_order_retained_memory_consistency}}
\label{app:second_order_consistency}
Recall that the second-order scheme is defined by
\begin{equation}
\begin{alignedat}{3}
    \text{\bfseries Right half-step:}\qquad
    &\hat{\psi}_{n+\frac12}^{\sigma}
    &&=
    \exp\left(\frac{\tau}{2}A_{n+\frac12}\right)
    \psi_n^\sigma,
    &&\qquad \sigma\in\mathcal D_n^{[2]},
    \\
    \text{\bfseries Insertion:}\qquad
    &\hat{\psi}_{n+\frac12}^{\sigma\cup(t_{n+\frac12})}
    &&=
    L_{n+\frac12}
    \hat{\psi}_{n+\frac12}^{\sigma},
    &&\qquad \sigma\in\mathcal D_n^{[2]},
    \\
    &
    \hat{\psi}_{n+\frac12}^{\sigma\cup(t_{n+\frac12},t_{n+\frac12})}
    &&=
    L_{n+\frac12}^{2}
    \hat{\psi}_{n+\frac12}^{\sigma},
    &&\qquad \sigma\in\mathcal D_n^{[2]},
    \\
    \text{\bfseries Memory pairing:}\qquad
    &\widetilde{\psi}_{n+\frac12}^\sigma
    &&=
    \left(
    I+\tau B_{n+\frac12}^{[2]}
    \right)
    \hat{\psi}_{n+\frac12}^\sigma,
    &&\qquad \sigma\in\mathcal D_{n+1}^{[2]},
    \\
    \text{\bfseries Left half-step:}\qquad
    &\psi_{n+1}^{\sigma}
    &&=
    \exp\left(\frac{\tau}{2}A_{n+\frac12}\right)
    \widetilde{\psi}_{n+\frac12}^{\sigma},
    &&\qquad \sigma\in\mathcal D_{n+1}^{[2]},
\end{alignedat}
\label{eq:second_order_appendix_update_summary}
\end{equation}
where \(\mathcal D_n^{[2]}\) is the set of retained configurations at time step
\(n\), and \(\mathcal C_n^{[2]}\) is the distinct-time configuration set. We denote by
\(\phi\) the corresponding full second-order hierarchy, obtained by replacing
the retained memory-pairing step with
\[    
    \widetilde{\phi}_{n+\frac12}^{\sigma} = (I+\tau B_{n+\frac12})\hat\phi_{n+\frac12}^{\sigma}.
\]

The proof is carried out under Assumptions~\ref{assumption} stated in
Section~\ref{sec:methods} of the main manuscript.

\begin{proof}
Let
\begin{equation}
    U_{n+\frac12}:=\exp\Bigl(\frac{\tau}{2}A_{n+\frac12}\Bigr),
\qquad
\Delta B_{n+\frac12}:=
B_{n+\frac12}-B^{[2]}_{n+\frac12}.
\end{equation}
Although \(\phi_n^\sigma\) is defined on the full discrete hierarchy, the error
will only be compared on the common retained configurations. Thus, for
\(\sigma\in\mathcal D_n^{[2]}\), define
\begin{equation}
e_n^\sigma:=\psi_n^\sigma-\phi_n^\sigma,\qquad \hat{e}_{n+\frac12}^\sigma:=\hat{\psi}_{n+\frac12}^\sigma-\hat{\phi}_{n+\frac12}^\sigma.
\end{equation}
For \(k\ge0\), set
\begin{equation}
\Sigma_n^{(k)}:=\{\sigma\in\mathcal D_n^{[2]}:\ |\sigma|=k\},
\qquad
\Gamma_n^{(k)}:=\{\sigma\in\mathcal C_n^{[2]}:\ |\sigma|=k\},
\end{equation}
where \(|\sigma|\) denotes the total number of labels counted with
multiplicity. Here \(\Gamma_n^{(k)}\) is the distinct-label configuration set of level \(k\), while
\(\Sigma_n^{(k)}\) contains both distinct configurations and configurations
with one repeated label. 
The empty configuration \(\sigma=\emptyset\) corresponds to the
physical state, and therefore
\[
    \Sigma_n^{(0)}=\Gamma_n^{(0)}=\{\emptyset\}.
\]
Since \(\mathcal C_n^{[2]}\) contains \(n\) labels, no distinct
configuration of level \(k>n\) is admissible. Similarly, since
\(\mathcal D_n^{[2]}\) allows at most one repeated label, no retained
configuration of level \(k>n+1\) is admissible. We use the convention that
the maximum over such an empty admissible set is zero.

Choose constants $0<r_1<r_0$ such that
\[
r_0M<1,\qquad r_0C_L\le1.
\]
Define
\begin{equation}
E_n:=
\max_{k\ge0} r_0^k
\max_{\sigma\in\Sigma_n^{(k)}}\|e_n^\sigma\|,
\qquad
F_n:=
\max_{k\ge0} r_1^k
\max_{\sigma\in\Gamma_n^{(k)}}\|e_n^\sigma\|.
\end{equation}
At $n=0$, only the empty configuration is admissible, and the two physical initial
states agree. Hence
\[
E_0=F_0=0.
\]

We first estimate \(E_n\). After the right half-step,
\begin{equation}
\widehat e_{n+\frac12}^\sigma
=
U_{n+\frac12}e_n^\sigma ,
\qquad \sigma\in\mathcal D_n^{[2]}.
\end{equation}
The boundedness of \(A(t)\) implies the half-step propagator estimate
\(
\left\|U_{n+\frac12}\right\|
\le 1+C_U\tau .
\) Hence
\begin{equation}
\|\widehat e_{n+\frac12}^\sigma\|
\le (1+C_U\tau)\|e_n^\sigma\|.
\end{equation}

We next include the configurations generated by the insertion.
Every configuration in \(\mathcal D_{n+1}^{[2]}\) after the insertion stage
is either an old retained configuration, a single insertion
\(\eta\cup(t_{n+\frac12})\), or a double insertion
\(\eta\cup(t_{n+\frac12},t_{n+\frac12})\), with the resulting configuration
still belonging to \(\mathcal D_{n+1}^{[2]}\).

For an old retained configuration, the preceding estimate gives
\begin{equation}
r_0^k
\|\widehat e_{n+\frac12}^\sigma\|
\le
(1+C_U\tau)E_n,
\qquad \sigma\in\Sigma_n^{(k)}.
\label{eq:second_order_old_retained_error}
\end{equation}
For a single insertion generated from \(\eta\in\mathcal D_n^{[2]}\), the two
schemes use the same insertion rule, and hence
\[
\widehat e_{n+\frac12}^{\eta\cup(t_{n+\frac12})}
=
L_{n+\frac12}\widehat e_{n+\frac12}^{\eta}.
\]
If \(|\eta|=k-1\), then
\begin{equation}
\begin{aligned}
r_0^k
\left\|
\widehat e_{n+\frac12}^{\eta\cup(t_{n+\frac12})}
\right\|
&\le
r_0^k C_L
\|\widehat e_{n+\frac12}^{\eta}\|
\\
&=
r_0C_L\,
r_0^{k-1}
\|\widehat e_{n+\frac12}^{\eta}\|
\\
&\le
(1+C_U\tau)E_n .
\end{aligned}
\label{eq:single_insertion_error}
\end{equation}
Similarly, for a double insertion,
\[
\widehat e_{n+\frac12}^{\eta\cup(t_{n+\frac12},t_{n+\frac12})}
=
L_{n+\frac12}^2\widehat e_{n+\frac12}^{\eta}.
\]
If \(|\eta|=k-2\), then
\begin{equation}
\begin{aligned}
r_0^k
\left\|
\widehat e_{n+\frac12}^{\eta\cup(t_{n+\frac12},t_{n+\frac12})}
\right\|
&\le
r_0^k C_L^2
\|\widehat e_{n+\frac12}^{\eta}\|
\\
&=
(r_0C_L)^2
r_0^{k-2}
\|\widehat e_{n+\frac12}^{\eta}\|
\\
&\le
(1+C_U\tau)E_n .
\end{aligned}
\label{eq:double_insertion_error}
\end{equation}
Combining \eqref{eq:second_order_old_retained_error}, \eqref{eq:single_insertion_error} and \eqref{eq:double_insertion_error}, we conclude that after the insertion step,
\[
\widehat E_{n+\frac12}
:=
\max_{k\ge0} r_0^k
\max_{\sigma\in\Sigma_{n+1}^{(k)}}
\|\widehat e_{n+\frac12}^\sigma\|
\le
(1+C_U\tau)E_n .
\]

During the pairing step, for $\sigma \in \mathcal D_{n+1}^{[2]}$, the retained hierarchy satisfies
\begin{equation}
    \widetilde{\psi}_{n+\frac12}^{\sigma}
    =
    \left(
    I+\tau B_{n+\frac12}^{[2]}
    \right)
    \widehat{\psi}_{n+\frac12}^{\sigma},
    \label{eq:proof_second_order_pairing}
\end{equation}
while the full hierarchy uses
\begin{equation}
    \widetilde{\phi}_{n+\frac12}^{\sigma}
    =
    \left(
    I+\tau B_{n+\frac12}
    \right)
    \widehat{\phi}_{n+\frac12}^{\sigma}.
    \label{eq:proof_second_order_pairing_full}
\end{equation}
Subtracting \eqref{eq:proof_second_order_pairing_full} from
\eqref{eq:proof_second_order_pairing} gives
\begin{equation}
\begin{aligned}
\widetilde e_{n+\frac12}^{\sigma}
&=
\widehat e_{n+\frac12}^{\sigma}
+
\tau B_{n+\frac12}^{[2]}
\widehat e_{n+\frac12}^{\sigma}
-
\tau \Delta B_{n+\frac12}
\widehat\phi_{n+\frac12}^{\sigma},
\qquad
\sigma\in\mathcal D_{n+1}^{[2]},
\end{aligned}
\label{eq:proof_second_order_pairing_error}
\end{equation}
where $\widetilde{e}_{n+\frac12}^{\sigma} = \widetilde\psi_{n+\frac12}^{\sigma} - \widetilde\phi_{n+\frac12}^{\sigma}$.

For $\sigma\in\Sigma_{n+1}^{(k)}$, the retained pairing term in \eqref{eq:proof_second_order_pairing_error} satisfies
\begin{equation}
\begin{aligned}
r_0^k
\|\tau B^{[2]}_{n+\frac12}\widehat e_{n+\frac12}^\sigma\|
&\le
C_L\tau^2 r_0^k
\sum_{j=0}^{n}
|\alpha_{n+\frac12,j+\frac12}|
\max_{\eta\in\Sigma_{n+1}^{(k+1)}}
\|\widehat e_{n+\frac12}^\eta\|  \\
&\le
\frac{C_LC_{\alpha,T}}{r_0}\tau\,
\widehat E_{n+\frac12}.
\end{aligned}
\label{eq:retained_pairing_error}
\end{equation}

Next, we estimate the defect term involving \(\Delta B_{n+\frac12}\) in
\eqref{eq:proof_second_order_pairing_error}. Let
\(\sigma\in\mathcal C_{n+1}^{[2]}\) be distinct. For every
\(0\le j\le n\), the configuration
\(\sigma\cup(t_{j+\frac12})\) is either still distinct, if
\(t_{j+\frac12}\notin\sigma\), or contains exactly one repeated label, if
\(t_{j+\frac12}\in\sigma\). In both cases,
\[
    \sigma\cup(t_{j+\frac12})\in\mathcal D_{n+1}^{[2]}.
\]
Therefore, the full and retained memory operators coincide on the distinct
configuration set:
\begin{equation}
\Delta B_{n+\frac12}\widehat\phi_{n+\frac12}^\sigma=0,
\qquad
\sigma\in\mathcal C_{n+1}^{[2]}.
\label{eq:proof_second_order_defect_vanishes_distinct}
\end{equation}

If \(\sigma\in\mathcal D_{n+1}^{[2]}\setminus\mathcal C_{n+1}^{[2]}\)
and \(|\sigma|=k\), then \(\sigma\) already contains one repeated label.
Therefore, an omitted contribution can occur only when the added label is
one of the time labels already appearing in \(\sigma\). There are at most
\(k\) such labels. Hence, by the uniform boundedness of \(L\) and \(\alpha\),
together with the level-wise bound,
\begin{equation}
\begin{aligned}
\left\|\tau\Delta B_{n+\frac12}\widehat\phi_{n+\frac12}^\sigma\right\|
&\le C_L\tau^2\sum_{\sigma\cup(t_{j+\frac12})\notin\mathcal D_{n+1}^{[2]}}|\alpha_{n+\frac12,j+\frac12}|\left\|\widehat\phi_{n+\frac12}^{\sigma\cup(t_{j+\frac12})}\right\|
\\
&\le C_L C_\alpha C_\phi \tau^2 k M^{k+1}.
\end{aligned}
\label{eq:proof_second_order_defect_intermediate_bound}
\end{equation}
Multiplying \eqref{eq:proof_second_order_defect_intermediate_bound} by \(r_0^k\), we obtain
\[
\begin{aligned}
r_0^k
\left\|
\tau\Delta B_{n+\frac12}
\widehat\phi_{n+\frac12}^\sigma
\right\|
&\le
C_L C_\alpha C_\phi \tau^2 k r_0^k M^{k+1}  \\
&=
C_L C_\alpha C_\phi M \tau^2
\, k(r_0M)^k .
\end{aligned}
\]
Since \(r_0M<1\),
\[
    k(r_0M)^k
    \le
    \sum_{\ell=1}^{\infty}
    \ell(r_0M)^\ell
    =
    \frac{r_0M}{(1-r_0M)^2}.
\]
Hence
\begin{equation}
r_0^k
\left\|
\tau\Delta B_{n+\frac12}
\widehat\phi_{n+\frac12}^\sigma
\right\|
\le
C_1\tau^2,\qquad \sigma\in\mathcal D_{n+1}^{[2]}\setminus\mathcal C_{n+1}^{[2]},
\label{eq:proof_second_order_defect_bound}
\end{equation}
where
\[
    C_1
    :=
    C_L C_\alpha C_\phi M
    \frac{r_0M}{(1-r_0M)^2}.
\]
Combining \eqref{eq:proof_second_order_defect_bound} with
\eqref{eq:proof_second_order_defect_vanishes_distinct}, we obtain
\begin{equation}
\max_{k\ge0}
r_0^k
\max_{\sigma\in\Sigma_{n+1}^{(k)}}
\left\|
\tau\Delta B_{n+\frac12}
\widehat\phi_{n+\frac12}^{\sigma}
\right\|
\le
C_1\tau^2 .
\label{eq:proof_second_order_defect_total_bound}
\end{equation}

After the left half-step,
\[
    e_{n+1}^\sigma
    =
    U_{n+\frac12}\widetilde e_{n+\frac12}^\sigma .
\]
Using the propagator bound, the estimate \eqref{eq:retained_pairing_error} for the retained memory term, and the defect bound \eqref{eq:proof_second_order_defect_total_bound},
we obtain
\[
\begin{aligned}
E_{n+1}
&\le
(1+C_U\tau)
\left[
\widehat E_{n+\frac12}
+
\frac{C_LC_{\alpha,T}}{r_0}\tau
\widehat E_{n+\frac12}
+
C_1\tau^2
\right] \\
&\le
(1+C_U\tau)
\left(1+\frac{C_LC_{\alpha,T}}{r_0}\tau\right)
\widehat E_{n+\frac12}
+
(1+C_U\tau)C_1\tau^2 \\
&\le
(1+C_U\tau)^2
\left(1+\frac{C_LC_{\alpha,T}}{r_0}\tau\right)
E_n
+
(1+C_U\tau)C_1\tau^2 .
\end{aligned}
\]
For \(0<\tau\le \tau_0\), 
\[
(1+C_U\tau)^2
\left(1+\frac{C_LC_{\alpha,T}}{r_0}\tau\right)
\le
1+C_2\tau,
\qquad
(1+C_U\tau)C_1\tau^2
\le
C_3\tau^2 .
\]
Therefore
\begin{equation}
    E_{n+1}\le (1+C_2\tau)E_n+C_3\tau^2 .
    \label{eq:second_order_E_gronwall}
\end{equation}
Since $E_0=0$, for $0\le n\le N$ with $N\tau\le T$,
\begin{equation}
E_n
\le
C_3\tau^2\sum_{\ell=0}^{n-1}(1+C_2\tau)^\ell
\le
C_E\tau,
\qquad C_E := C_3Te^{C_2T}.
\label{eq:second_order_E_bound}
\end{equation}

We next sharpen the estimate on the distinct configurations $\mathcal{C}_{n}^{[2]} = \cup_{k\ge0}\Gamma_n^{(k)}$. Define
\begin{equation}
\widehat F_{n+\frac12}
:=
\max_{k\ge0} r_1^k
\max_{\sigma\in\Gamma_{n+1}^{(k)}}
\|\widehat e_{n+\frac12}^\sigma\|.
\end{equation}
Since the two schemes use the same right half-step and insertion rules,
and since \(r_1C_L\le r_0C_L\le1\), the insertion stage does not enlarge the
weighted error. Hence
\begin{equation}
    \widehat F_{n+\frac12}
    \le
    (1+C_U\tau)F_n .
    \label{eq:proof_second_order_distinct_half_step_insertion_bound}
\end{equation}

As shown in \eqref{eq:proof_second_order_defect_vanishes_distinct}, the defect term
vanishes on distinct configurations. Consequently, for
\(\sigma\in\Gamma_{n+1}^{(k)}\), the pairing error satisfies
\begin{equation}
\widetilde e_{n+\frac12}^{\sigma}
=
\widehat e_{n+\frac12}^{\sigma}
+
\tau B_{n+\frac12}^{[2]}
\widehat e_{n+\frac12}^{\sigma}.
\end{equation}
Equivalently,
\begin{equation}
\begin{aligned}
\widetilde e_{n+\frac12}^{\sigma}
&=
\widehat e_{n+\frac12}^{\sigma}
-
\tau^2 L_{n+\frac12}^\dagger
\sum_{\substack{0\le j\le n}}
\omega_j
\alpha_{n+\frac12,j+\frac12}
\widehat e_{n+\frac12}^{\sigma\cup(t_{j+\frac12})},
\end{aligned}
\end{equation}
where \(\omega_j=1\) for \(0\le j\le n-1\) and \(\omega_n=\frac12\).

For \(\sigma\in\Gamma_{n+1}^{(k)}\), one has
\(\sigma\cup(t_{j+\frac12})\in \mathcal D_{n+1}^{[2]}\) for every
\(0\le j\le n\). Thus,
\begin{equation}
\begin{aligned}
r_1^k
\|\widetilde e_{n+\frac12}^{\sigma}\|
&\le
r_1^k
\|\widehat e_{n+\frac12}^{\sigma}\|
+
r_1^k C_L\tau^2
\sum_{j=0}^{n}
\omega_j
|\alpha_{n+\frac12,j+\frac12}|
\left\|
\widehat e_{n+\frac12}^{\sigma\cup(t_{j+\frac12})}
\right\|
\\
&\le
\widehat F_{n+\frac12}
+
r_1^k C_L\tau^2
\sum_{\substack{0\le j\le n\\ t_{j+\frac12}\notin\sigma}}
|\alpha_{n+\frac12,j+\frac12}|
\left\|
\widehat e_{n+\frac12}^{\sigma\cup(t_{j+\frac12})}
\right\|
\\
&\quad
+
r_1^k C_L\tau^2
\sum_{\substack{0\le j\le n\\ t_{j+\frac12}\in\sigma}}
|\alpha_{n+\frac12,j+\frac12}|
\left\|
\widehat e_{n+\frac12}^{\sigma\cup(t_{j+\frac12})}
\right\|
\\
&\le
\widehat F_{n+\frac12}
+
\frac{C_LC_{\alpha,T}}{r_1}\tau\widehat F_{n+\frac12}
+
C_LC_\alpha\tau^2 k r_1^k r_0^{-(k+1)}
\widehat E_{n+\frac12}.
\end{aligned}
\label{eq:proof_second_order_distinct_memory_error}
\end{equation}

Taking the maximum in \eqref{eq:proof_second_order_distinct_memory_error}
over \(\sigma\in\Gamma_{n+1}^{(k)}\) and then over \(k\ge0\), we obtain
\begin{equation}
\widetilde F_{n+\frac12}
\le
\left(
1+\frac{C_LC_{\alpha,T}}{r_1}\tau
\right)
\widehat F_{n+\frac12}
+
C_LC_\alpha\tau^2
\sup_{k\ge1}
k r_1^k r_0^{-(k+1)}
\widehat E_{n+\frac12},
\label{eq:second_order_distinct_memory_error_intermediate}
\end{equation}
where
\[
\widetilde F_{n+\frac12}
:=
\max_{k\ge0} r_1^k
\max_{\sigma\in\Gamma_{n+1}^{(k)}}
\|\widetilde e_{n+\frac12}^\sigma\|.
\]
Since \(r_1<r_0\),
\[
\sup_{k\ge1}
k r_1^k r_0^{-(k+1)}
=
\frac1{r_0}
\sup_{k\ge1}
k\left(\frac{r_1}{r_0}\right)^k
<\infty .
\]
Thus, \eqref{eq:second_order_distinct_memory_error_intermediate} can be
simplified as
\begin{equation}
\widetilde F_{n+\frac12}
\le
\left(
1+\frac{C_LC_{\alpha,T}}{r_1}\tau
\right)
\widehat F_{n+\frac12}
+
C_4\tau^2\widehat E_{n+\frac12}, \quad C_4:=C_LC_\alpha r_0^{-1}\sup_{k\ge1}
k(r_1/r_0)^k.
\label{eq:second_order_distinct_memory_error_bound}
\end{equation}
Using the previous bound \eqref{eq:proof_second_order_distinct_half_step_insertion_bound} and the estimate \eqref{eq:second_order_E_bound} of $E_n$, \eqref{eq:second_order_distinct_memory_error_bound} implies 
\[
\widetilde F_{n+\frac12}
\le
\left(
1+\frac{C_LC_{\alpha,T}}{r_1}\tau
\right)
(1+C_U\tau)F_n
+
C_4C_E\tau^3 .
\]
After applying the left half-step propagation,
\[
F_{n+1}
\le
(1+C_U\tau)\widetilde F_{n+\frac12}.
\]
Therefore, for \(0<\tau\le\tau_0\),
\begin{equation}
\begin{aligned}
F_{n+1}
&\le (1+C_U\tau)
\left(
1+\frac{C_LC_{\alpha,T}}{r_1}\tau
\right)
(1+C_U\tau)F_n
+(1+C_U\tau)C_4C_E\tau^3 \\
&\le(1+C_5\tau)F_n+C_6\tau^3 ,
\end{aligned}
\label{eq:second_order_F_gronwall}
\end{equation}
where \(C_5\) and \(C_6\), independent of \(\tau\), are chosen such that
\[
(1+C_U\tau)^2
\left(
1+\frac{C_LC_{\alpha,T}}{r_1}\tau
\right)
\le
1+C_5\tau,
\qquad
(1+C_U\tau)C_4C_E
\le C_6.
\]

Since \(F_0=0\), applying the discrete Gr\"onwall inequality to \eqref{eq:second_order_F_gronwall} gives
\begin{equation}
\begin{aligned}
F_n
&\le
C_6\tau^3\sum_{\ell=0}^{n-1}(1+C_5\tau)^\ell  \\
&\le
C_6\tau^3 n e^{C_5 n\tau}  \\
&\le
C_6Te^{C_5T}\tau^2,
\qquad 0\le n\le N,\quad N\tau\le T.
\end{aligned}
\end{equation}
Set \(C_T:=C_6Te^{C_5T}\). Since
\(\emptyset\in\Gamma_n^{(0)}\), we have
\[
\|\psi_n^\emptyset-\phi_n^\emptyset\|
\le F_n
\le C_T\tau^2 .
\]
Taking the maximum over \(0\le n\le N\) gives
\begin{equation}
\max_{0\le n\le N}
\|\psi_n^\emptyset-\phi_n^\emptyset\|
\le C_T\tau^2,
\qquad N\tau\le T.
\end{equation}
The constant \(C_T\) is independent of \(\tau\). This proves the claim.
\end{proof}

\section{Example of the Second-Order Scheme}
\label{sec:second_order_example}
This section provides an explicit illustration of one step of the second-order hierarchical update. The example demonstrates how the propagation, insertion, and pairing operations combine to generate all admissible auxiliary states at $t_1$, including both single- and double-insertion contributions. The construction follows the update rules in Section~\ref{sec:methods} and gives a concrete realization of the retained hierarchy.
\begin{align*}
    |\psi^{\emptyset, \emptyset}_1\rangle &= 
    \adjustbox{}{\includegraphics{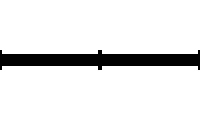}} \notag \\
    &= \adjustbox{}{\includegraphics{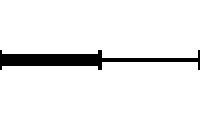}} +
       \adjustbox{}{\includegraphics{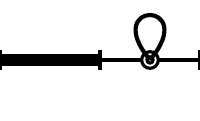}} + \adjustbox{}{\includegraphics{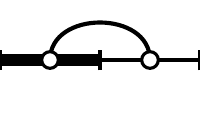}} \\
    |\psi^{\{t_\frac{1}{2}\}, \emptyset}_1\rangle &= 
    \adjustbox{}{\includegraphics{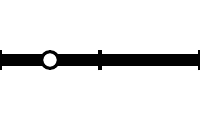}} = \adjustbox{}{\includegraphics{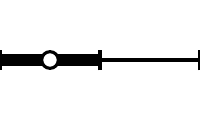}} +
       \adjustbox{}{\includegraphics{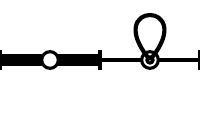}}  \\
    |\psi^{\{t_{\frac{3}{2}}\}, \emptyset}_1\rangle &= 
    \adjustbox{}{\includegraphics{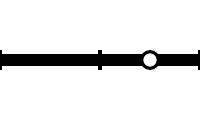}} = \adjustbox{}{\includegraphics{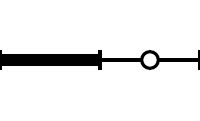}} +
       \adjustbox{}{\includegraphics{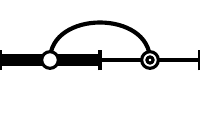}}  \\
    |\psi^{\{t_\frac{1}{2}, t_{\frac{3}{2}}\}, \emptyset}_1\rangle &= 
    \adjustbox{}{\includegraphics{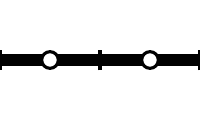}} = \adjustbox{}{\includegraphics{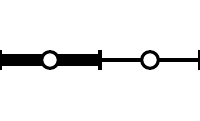}} \\
    |\psi^{\{t_\frac{1}{2}\}, t_\frac{1}{2}}_1\rangle &= 
    \adjustbox{}{\includegraphics{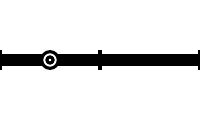}} = \adjustbox{}{\includegraphics{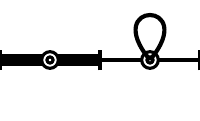}} +
       \adjustbox{}{\includegraphics{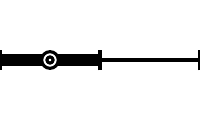}} \\
    |\psi^{\{t_{\frac{3}{2}}\}, t_{\frac{3}{2}}}_1\rangle &= 
    \adjustbox{}{\includegraphics{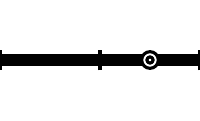}} 
    = \adjustbox{}{\includegraphics{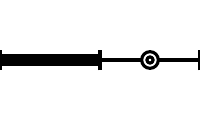}} \\
    |\psi^{\{t_\frac{1}{2},t_{\frac{3}{2}}\}, t_{\frac{3}{2}}}_1\rangle &= 
    \adjustbox{}{\includegraphics{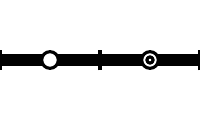}} 
    = \adjustbox{}{\includegraphics{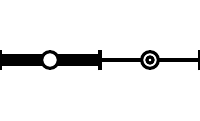}} \\
    |\psi^{\{t_\frac{1}{2},t_{\frac{3}{2}}\}, t_\frac{1}{2}}_1\rangle &= 
    \adjustbox{}{\includegraphics{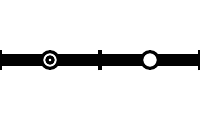}} 
    = \adjustbox{}{\includegraphics{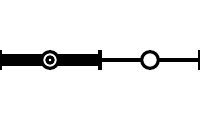}} 
\end{align*}

\section{Numerical Settings and Code Availability}
\label{sec:supp-experimental-settings}

A reference implementation of the first- and second-order NMQSD schemes is
publicly available at
\url{https://github.com/vickor-z/Numerical-Solver-for-NMQSD}.
The repository contains the core hierarchical solvers and examples.
This section records the numerical settings used for the experiments in the main manuscript.
\subsection{Common Implementation Details}

All numerical experiments were performed in double precision using only CPU
computation on a machine equipped with an AMD Ryzen AI 9 HX 365 processor and
\(32\,\mathrm{GB}\) of memory. No GPU acceleration was used. The implementation
is written in Python and uses NumPy and SciPy for numerical linear algebra,
while Numba is used to accelerate performance-critical parts of the trajectory
propagation. The stochastic trajectories were processed in batches to limit
memory consumption, and fixed random seeds were used to ensure reproducibility.

All stochastic simulations use independent realizations of the complex Gaussian
process \(z_t\). Since \(z_t\) is determined entirely by the bath correlation
function \(\alpha(t,s)\) and is independent of the system state, each complete
noise trajectory is generated in advance and then supplied to the time-stepping
scheme.

For a temporal grid \(\{t_i\}_{i=0}^{N}\), the bath correlation function is
discretized to form the covariance matrix
\[
    \Sigma_{ij}=\alpha(t_i,t_j).
\]
A Cholesky factorization
\(
    \Sigma=CC^\dagger \)
is then computed, and the correlated complex Gaussian vector is generated by
\[
    \boldsymbol{z}=C\boldsymbol{\xi},
\]
where the entries of \(\boldsymbol{\xi}\) are independent standard complex
Gaussian random variables satisfying
\[
    \mathbb{E}[\xi_i]=0,
    \qquad
    \mathbb{E}[\xi_i\xi_j]=0,
    \qquad
    \mathbb{E}[\xi_i\xi_j^*]=\delta_{ij}.
\]
The noise is sampled on a sufficiently fine temporal grid, typically with a
resolution of \(\tau/4\) or smaller. For comparisons involving different time
steps, the trajectories are generated on a common fine grid, and the samples
required by each coarser discretization are obtained by subsampling the same
realizations. This procedure preserves the prescribed discrete covariance and
reduces sampling variability in comparisons between different discretizations.

Each exponential stochastic propagator is evaluated using the second-order
Taylor approximation
\[
    \exp(hA)
    \approx
    I+hA+\frac{h^2}{2}A^2,
\]
where \(h\) denotes the corresponding propagation substep.

\subsection{Experiment-specific settings}

The numerical settings used in the main manuscript are summarized in
Tables~\ref{tab:supp-exponential-settings}--\ref{tab:supp-multilevel-settings}.
 In the Ohmic spin--boson examples, the energy scale is fixed by setting
\(\Delta=1\).

\begin{table}[htbp]
\centering
\small
\renewcommand{\arraystretch}{1.2}
\caption{Numerical settings for Example~5.1 with the exponential bath
correlation
\(
\alpha(t,s)=\frac12 e^{-|t-s|}
\).
}
\label{tab:supp-exponential-settings}
\begin{tabular}{
    p{0.3\textwidth}
    p{0.6\textwidth}}
\toprule
Setting & Value \\
\midrule

System &
\(
H=\frac12\sigma_z,
\qquad
L=\sqrt{2}\sigma_z
\)
\\

Initial state &
\(
|\psi_0\rangle
=
\frac{1+2i}{\sqrt{7}}|\uparrow\rangle
+
\frac{1+i}{\sqrt{7}}|\downarrow\rangle
\)
\\

Final time and time steps &
\(
T=2,
\qquad
\tau\in\{0.5,0.4,0.2,0.1\},
\qquad
\tau_{\rm base}=0.025
\)
\\

Convergence test &
\(
N_{\rm traj}=10^6
\)
\\

Trajectory-number test &
\(
\tau=0.1,
\qquad
N_{\rm traj}\in\{10,10^3,10^5\}
\)
\\

Truncation &
\(
K_T=1,
\qquad
M=2
\)
\\

Observables &
\(
\langle\sigma_x\rangle_t,
\qquad
\langle\sigma_y\rangle_t,
\qquad
\langle\sigma_z\rangle_t
\)
\\

Analytical reference &
\(
F(t)=it+4(e^{-t}+t-1)
\)
\\

\bottomrule
\end{tabular}
\end{table}

Examples~5.2--5.4 use an Ohmic bath with spectral density
\[
    J(\omega)
    =
    \frac{\pi}{2}\xi\omega e^{-\omega/\omega_c}.
\]
The continuous spectrum is represented by \(N_\omega=400\) modes with
\(\omega_{\max}=4\omega_c\). The remaining bath parameters are listed
separately for each experiment. Example~5.4 uses the same Ohmic-bath discretization but applies it to a
multi-level system with different physical and numerical parameters.

\begin{table}[htbp]
\centering
\small
\renewcommand{\arraystretch}{1.2}
\caption{Numerical settings for the Ohmic spin--boson experiments in
Examples~5.2 and~5.3. }
\label{tab:supp-ohmic-spin-boson-settings}
\begin{tabular}{
    p{0.25\textwidth}
    p{0.32\textwidth}
    p{0.32\textwidth}}
\toprule
Setting & Example~5.2 & Example~5.3 \\
\midrule

Hamiltonian &
\(
H=\epsilon\sigma_z+\Delta\sigma_x
\)
&
\(
H=\Delta\sigma_x
\)
\\

Coupling operator &
\(
L=\sigma_z
\)
&
\(
L=\sigma_z
\)
\\

Bias &
\(
\epsilon\in\{0,\Delta,2\Delta\}
\)
&
\(
\epsilon=0
\)
\\
Bath parameters & $\omega_c=2.5, \ \beta = 5\Delta^{-1}$ & $\omega_c=2.5, \ \beta = 5\Delta^{-1}$ \\

Coupling strengths &
\(
\xi=0.2
\)
&
\(
\xi\in\{0.2,0.4,0.6,0.8\}
\)
\\

Initial state &
\(
|\psi_0\rangle=(1,0)^\top
\)
&
\(
|\psi_0\rangle=(1,0)^\top
\)
\\

Observable &
\(
\langle\sigma_z\rangle_t
\)
&
\(
\langle\sigma_z\rangle_t
\)
\\

Final time &
\(
T=5
\)
&
\(
T=5
\)
\\

Time step &
\(
    \tau\in\{0.4,0.2,0.1,0.05,0.025\}
\)
&
\(
\tau=0.1
\)
\\

Number of trajectories &
\(
N_{\rm traj}=10^5
\)
&
\(
N_{\rm traj}=10^4
\)
\\

Memory truncation &
\(
K_T=1
\)
&
$K\in \{9,12,15,18,21,24,27\}$
\\

Hierarchy truncation &
\(
M=2
\)
&
\(
M\in\{1,2,3,4,5\}
\)
\\

\bottomrule
\end{tabular}
\end{table}

\begin{table}[htbp]
\centering
\small
\renewcommand{\arraystretch}{1.2}
\caption{Numerical settings for the \(d=11\) multi-level system in
Example~5.4 with an Ohmic bath.}
\label{tab:supp-multilevel-settings}
\begin{tabular}{
    p{0.3\textwidth}
    p{0.63\textwidth}}
\toprule
Setting & Value \\
\midrule

System dimension &
\(
d=11
\)
\\

Hamiltonian &
\(
H
=
\sum_{i=1}^{d-1}
\left(
|i\rangle\langle i+1|
+
|i+1\rangle\langle i|
\right)
\)
\\

Coupling operator &
\(
L
=
\operatorname{diag}
\left(
-1,-1+\frac{2}{d-1},\ldots,1
\right)
\)
\\

Bath parameters &
\(
\omega_c=2.5,
\qquad
\xi=0.2,
\qquad
\beta=\Delta^{-1}
\)
\\

Initial states &
\(
\rho_s(0)=|1\rangle\langle1|
\)\ 
and
\ \(
\rho_s(0)
=
\frac12
\left(
|1\rangle\langle1|
+
|d\rangle\langle d|
\right)
\)
\\

Final time and time step &
\(
T=10,
\qquad
\tau=0.1
\)
\\

Number of trajectories &
\(
N_{\rm traj}=10^4
\)
\\

Truncation &
\(
K=18,
\qquad
M=3
\)
\\

Observable &
\(
P_i(t)=\langle i|\rho_s(t)|i\rangle,
\quad i=1,\ldots,d
\)
\\

\bottomrule
\end{tabular}
\end{table}
 
\end{document}